\newcounter{mythm}[section]
\newtheorem{theo}[mythm]{Theorem}
\theoremstyle{definition}
\newtheorem{defn}[mythm]{Definition}
\theoremstyle{definition}
\newtheorem{rem}[mythm]{Remark}
\newtheorem{cor}[mythm]{Corollary}
\newtheorem{exa}[mythm]{Example}
\newtheorem{prop}[mythm]{Proposition}
\newtheorem{lem}[mythm]{Lemma}
\numberwithin{equation}{section}
\newcommand{\qq}{\begin{eqnarray}}
\newcommand{\qqq}{\end{eqnarray}}
\newcommand{\ee}{{\rm e}}
\newcommand{\CO}{{\mathcal O}}
\newcommand{\CT}{{\mathcal T}}
\def\nsection#1{\section{#1}\setcounter{equation}{0}}
\begin{document}
\title{\bf Ergodic properties of a model
for turbulent dispersion of inertial particles}

\author{Krzysztof Gaw\c{e}dzki}
\author{David P. Herzog}
\author{Jan Wehr}
\address{Krzysztof Gaw\c{e}dzki\\ Laboratoire de Physique\\ C.N.R.S., ENS-Lyon,
Universit\'e de Lyon\\ 46 All\'ee d'Italie\\ 69364 
Lyon, France}
\address{David P. Herzog\\ Department of Mathematics\\ The University 
of Arizona\\ 617 N. Santa Rita Ave.\\ P.O. Box 210089\\ Tucson, AZ 85721-0089, USA}
\address{Jan Wehr\\ Department of Mathematics\\ The University of Arizona\\
617 N. Santa Rita Ave.\\ P.O. Box 210089\\ Tucson, AZ 85721-0089, USA}

\begin{abstract}
\noindent We study a simple stochastic differential 
equation that models the  dispersion of close heavy particles 
moving in a turbulent flow. In one and two dimensions, the model 
is closely related to the one-dimensional stationary Schr\"odinger 
equation in a random $\,\delta$-correlated potential. \,The ergodic 
properties of the dispersion process are investigated by proving 
that its generator is hypoelliptic and using control theory.
\end{abstract}

\maketitle
 

\nsection{\bf INTRODUCTION}

\noindent Transport by turbulent flows belongs to phenomena
whose understanding is both important for practical applications 
and abounds in intellectual challenges. Unlike the reputedly 
difficult problem of turbulence {\it per se}, turbulent transport 
allows simple modeling that accounts, at least qualitatively, for 
many of its observable features. The simplest of such models study 
transport properties of synthetic random velocity fields with 
presupposed distributions that only vaguely render the 
statistics of realistic turbulent velocities. The advection by velocity 
fields of  quantities like temperature 
or tracer density may be derived from the dynamics of the Lagrangian 
trajectories of fluid elements. In synthetic velocity 
ensembles, such dynamics is described by a random dynamical system.
One of the best studied schemes of this type is the so called 
{\,$\bm{Kraichnan\ model}$\,} based on a Gaussian ensemble of velocities 
decorrelated in time but with long-range spatial correlations 
\cite{Kraich68,FGV}. In this case, the random dynamical system 
that describes the Lagrangian flow is given by stochastic 
differential equations (SDE's). It was successfully studied with the 
standard tools of the theory of random dynamical systems, but it also 
led to non-trivial extensions of that theory \cite{Warwick,LeJanR1,LeJanR2}.  
\vskip 0.1cm

The problem of turbulent transport of matter composed of small but
heavy particles (like water droplets in turbulent atmosphere) may be
also studied by modeling turbulent velocities by a random synthetic 
ensemble, but it requires a modification of the previous approach.
The reason is that heavy particles do not follow Lagrangian trajectories 
due to their inertia. On the other hand, the assumptions of the time 
decorrelation of random velocities may be more realistic for
inertial particles on scales where the typical relaxation time
of particle trajectories (called the \,$\bm{Stokes\ time}$) \,is much longer 
than the typical correlation time of fluid velocities. There have been 
a number of papers that pursued the study of dynamics of inertial 
particles with various simplifying assumptions, see e.g.
\cite{MR,EKR,Piterbarg,FFS,WM1,MW3,DMOW,Horvai,Bec,BCH1,BCH2,FouHorv1,BCHT}.
The primary focus of those studies, combining analytical and numerical 
approaches, was the phenomenon of intermittent clustering of inertial 
particles transported by turbulent flow. A good understanding of that
phenomenon is of crucial importance for practical applications.  
\vskip 0.1cm

The aim of the present article is to show that the simplest among the
models of inertial particles dynamics are amenable to rigorous mathematical 
analysis. More concretely, we study the SDE's that describe the 
pair dispersion of close inertial particles in shortly correlated 
moderately turbulent homogeneous and isotropic $\,d$-dimensional 
velocity fields (not necessarily compressible). Such models were discussed 
in some detail in \cite{Piterbarg,WM1,MW3,MWDWL,Horvai,BCH1}. In particular,
it was noted in \cite{WM1} that the $\,d=1\,$ version of the model
is closely related to the one-dimensional stationary Schr\"odinger equation
with $\,\delta$-correlated potential studied already in the sixties
of the last century \cite{Halperin} as a model for Anderson localization. 
As was stressed in \cite{Horvai}, \,the $\,d=2\,$ model for the inertial 
particle dispersion is also related to the one-dimensional stationary 
Schr\"odinger equation, but this time with $\,\delta$-correlated complex
potential. The models for dispersion were used to extract 
information about the (top) Lyapunov exponent for the inertial particles 
which is a rough measure of the tendency of particles to separate or to 
cluster \cite{WM1,BCH1}. The numerical calculations of the Lyapunov exponents
in two or more dimensional models of particle dispersion presumed certain 
ergodic properties that seemed consistent with results of the simulations. 
Here, we shall establish those properties rigorously by showing the 
hypoellipticity of the generator of the Markov process solving 
the corresponding SDE and by proving the irreducibility of the process
with the help of control theory. For a quick introduction to such, 
by now standard, methods, we refer the reader to \cite{R-B,Friz}. More 
information about the ergodic theory of Markov processes may be found 
in the treatise \cite{MT}.
\vskip 0.1cm

The paper is organized as follows. In Sec.\,\ref{sec:baseq}, we 
present the SDE modeling the inertial particle dispersion. 
In Sec.\,\ref{sec:local}, we recall its relation to models of
one-dimensional Anderson localization. Sec.\,\ref{sec:hypoell}
establishes the hypoelliptic properties of the generator of the dispersion 
process. In Sec.\,\ref{sec:proj}, we introduce the (real-)projective 
version of the dispersion process whose compact space of states may 
be identified with the $\,(2d-1)$-dimensional sphere $\,S^{2d-1}$. 
Sec.\,\ref{sec:ergod} is devoted to proving that the 
dispersion process is controllable. Together with the 
hypoelliptic properties of the generator, this implies that the projectivized version of the process has a
unique invariant probability measure with a smooth strictly
positive density. The analytic expression for such a measure 
may be written down explicitly in $\,d=1\,$ but not in higher
dimensions. The smoothness and strict positivity of its density 
provides, however, in conjunction with the isotropy assumption,
valuable information about the equal-time statistics of the projectivized
dispersion. The isotropy permits to project further the 
projectivized dispersion to the quotient space $\,S^{2d-1}/SO(d)$. 
\,For $\,d=2$, \,this space may be identified with
with the complex projective space $\,\mathbb{P}\mathbb{C}^1
=\mathbb{C}\cup\infty\,$ and the projected process with the 
complex-projectivized dispersion. In fact, the point at infinity may 
be dropped from $\,\mathbb{P}\mathbb{C}^1\,$ since the complex-projectivized 
dispersion process stays in $\,\mathbb{C}\,$ with probability one.
\,For $\,d\geq 3$, \,the quotient space $\,S^{2d-1}/SO(d)\,$ is not smooth
but has an open dense subset that may be identified with the complex
upper-half-plane that the projected process never leaves. \,These
non-explosive behaviors are established in Sec.\,\ref{sec:d>=2}
by constructing a Lyapunov 
function with appropriate properties. Finally, 
in Sec.\,\ref{sec:lyap} and Appendix \ref{app:B}, we demonstrate 
how the established ergodic properties of the projectivized 
dispersion process lead to the formulae for the top Lyapunov exponent 
for inertial particles that were used in the physical literature.
Appendix \ref{app:A} derives a formula, used in the main text, 
that expresses the $\,SO(2d)$-invariant measure on $\,S^{2d-1}\,$ 
in terms of $\,SO(d)\,$ invariants.
\vskip 0.3cm

\noindent{\bf Acknowledgements}. \ J.W. acknowledges the support 
from C.N.R.S. for an extended visit at ENS-Lyon during which this work 
was started.  D.P.H. and J.W. acknowledge partial support under NSF Grant DMS 0623941.

\nsection{\bf BASIC EQUATIONS}
\label{sec:baseq}

\noindent The motion in a turbulent flow of a small body of large density, 
called below an {\,\bf inertial particle}, \,is well described by the equation 
\cite{MR,WM1,MW3,Bec,BCH1} 
\qq
\ddot{\bm r}\ =\ -\frac{_1}{^{\tau}}\,\big(\dot{\bm r}
-{\bm u}(t,{\bm r})\big)\,,
\label{Newteq}
\qqq
where ${\bm r}(t)$ is the position of the particle at time $\,t\,$
and $\,{\bm u}(t,{\bm r})\,$ is the fluid velocity field. \,Relation
(\ref{Newteq}) is the Newton equation with the particle acceleration 
determined by a viscous friction force proportional to the relative 
velocity of the particle with respect to the fluid. 
Constant $\,\tau\,$ is the Stokes time. 
Much of the characteristic features of the distribution of non-interacting 
inertial particles moving in the flow according to Eqs.\,(\ref{Newteq}) 
is determined by the dynamics of the  
separation $\,\delta{\bm r}(t) \equiv{\bm\rho}(t)$, \,called 
\,$\bm{particle\ dispersion}$, \,of very close 
trajectories. \,In a moderately turbulent flow, the particle dispersion 
evolves according to the linearized equation:
\qq
\ddot{\bm\rho}\ =\ -\frac{_1}{^{\tau}}\,\big(\dot{\bm\rho}-({\bm\rho}
\cdot{\bm\nabla}){\bm u}(t,{\bm r}(t))\big)
\label{inteq0}
\qqq
or, \,in the first-order form:
\qq
\dot{\bm\rho}\ =\ \frac{_1}{^\tau}{\bm\chi}\,,\qquad\dot{\bm\chi}\ 
=\ -\frac{_1}{^\tau}
{\bm\chi}\,+\,({\bm\rho}\cdot{\bm\nabla}){\bm u}(t,{\bm r}(t))\,.
\label{inteq1}
\qqq
For sufficiently heavy particles, the correlation time of 
$\,({\bm\nabla}{\bm u})(t,{\bm r}(t))\,$ is short with respect to the 
Stokes time $\,\tau\,$ and one may set in good approximation \cite{BCH1}
\qq
\nabla_ju^i(t,{\bm r}(t))\,dt\ =\ dS^i_j(t)
\label{mnoise}
\qqq
where $\,dS(t)\,$ is a matrix-valued white noise with the isotropic covariance
\qq
\big\langle dS^i_j(t)\,dS^k_l(t')\big\rangle=D^{ik}_{jl}\,\delta(t-t')\,dt
\,dt'\,,
\quad\ D^{ik}_{jl}=A\,\delta^{ik}\delta_{jl}+
B(\delta^i_j\delta^k_l+\delta^i_l\delta^k_j)\,.
\label{icov}
\qqq
Positivity of the covariance requires that
\qq
A\geq|B|\,,\qquad A+(d+1)B\geq0\,.
\qqq
Incompressibility implies that $\,A+(d+1)B=0$, \,but we shall not 
impose it, in general. \,We shall only assume that $\,A+2B>0\,$ for $\,d=1\,$ 
and that $\,A>0\,$ for $\,d\geq2$. \,After the substitution of 
(\ref{mnoise}),  \,Eq.\,(\ref{inteq0}) becomes the linear SDE
\qq
\ddot{\bm\rho}\ =\ -\frac{_1}{^{\tau}}\,\dot{\bm\rho}\,+\,\frac{_1}{^\tau}\,
\frac{_{dS(t)}}{^{dt}}{\bm\rho}
\label{inteq0S}
\qqq
that may be written in the first order form in a more standard notation
employing differentials as
\qq
d\left(\begin{matrix}{\bm\rho}\cr{\bm\chi}\end{matrix}\right)\ 
=\ \left(\begin{matrix}0&\hspace{-0.2cm}\frac{1}{\tau}dt\cr dS(t)&
\hspace{-0.2cm}-\frac{1}{\tau}dt\end{matrix}\right)
\left(\begin{matrix}{\bm\rho}\cr{\bm\chi}\end{matrix}\right).
\label{inteq2}
\qqq
We shall interpret the latter SDE using the It\^{o} convention, but
the Stratonovich convention would lead to the same process.
The solution of Eq.\,(\ref{inteq2}) exists with probability 1 for all 
times and has the form
\qq
\left(\begin{matrix}{\bm\rho}(t)\cr{\bm\chi}(t)\end{matrix}\right)\ =\ 
{\CT}\,\exp\Big[\int\limits_0^t\left(\begin{matrix}0&\hspace{-0.2cm}
\frac{1}{\tau}ds\cr
dS(s)&\hspace{-0.2cm}-\frac{1}
{\tau}ds\end{matrix}\right)\Big]\,
\left(\begin{matrix}{\bm\rho}(0)\cr{\bm\chi}(0)\end{matrix}\right),
\label{solut}
\qqq  
where the time ordered exponential may be defined as the sum of its
Wiener chaos decomposition
\qq
&&{\CT}\,\exp\Big[\int\limits_0^t\left(\begin{matrix}0&\hspace{-0.2cm}
\frac{1}{\tau}ds\cr
dS(s)&\hspace{-0.2cm}-\frac{1}{\tau}ds\end{matrix}\right)\Big]\ \ =\ \ 
\sum\limits_{n=0}^\infty\hspace{0.1cm}
\int\limits_{0<s_1<\cdots<s_n<t}\hspace{-0.2cm}
\ee^{(t-s_n)\Big(\begin{matrix}{}_0&\hspace{-0.1cm}_{\frac{1}{\tau}}\cr{}^0&
\hspace{-0.1cm}^{-{\frac{1}{\tau}}}\end{matrix}\Big)}\cr
&&\cdot\,\left(\begin{matrix}0&\hspace{-0.2cm}0\cr dS(s_n)&\hspace{-0.2cm}0
\end{matrix}\right)
\,\ee^{(s_n-s_{n-1})\Big(\begin{matrix}{}_0&\hspace{-0.15cm}_{\frac{1}{\tau}}\cr^0&
\hspace{-0.15cm}^{-\frac{1}{\tau}}\end{matrix}\Big)}\cdots\,
\left(\begin{matrix}0&\hspace{-0.2cm}0\cr dS(s_1)&\hspace{-0.2cm}0
\end{matrix}\right)
\,\ee^{s_1\Big(\begin{matrix}{}_0&\hspace{-0.1cm}_{\frac{1}{\tau}}\cr^0&\hspace{-0.1cm}
^{-\frac{1}{\tau}}\end{matrix}\Big)}ds_1\hspace{-0.07cm}\dots ds_n\qquad
\qqq
that converges in the $\,L^2$-norm for functionals of the white noise 
$\,dS(t)$. \,The resulting stochastic process $\,({\bm\rho}(t),{\bm\chi}(t))
\equiv{\bm p}(t)\,$ is Markov and has generator 
\qq
L\ =\ {\frac{_1}{^\tau}}\big({\bm\chi}\cdot{\bm\nabla}_{\bm\rho}\,-\,
{\bm\chi}\cdot{\bm\nabla}_{\bm\chi}\big)\,+\,\frac{_1}{^{2}}\sum\limits_{i,j,k,l}
\rho^j\rho^l\,D^{ik}_{jl}\,
\nabla_{\chi^i}\nabla_{\chi^k}\,.
\label{genL}
\qqq
In other words, for smooth functions $\,f$,
\qq
\frac{d}{dt}\Big\langle f({\bm p}(t))\Big\rangle\,=\,
\Big\langle (Lf)({\bm p}(t))\Big\rangle
\qqq
for $\,\big\langle\,\cdots\,\big\rangle\,$ denoting the expectation.
\vskip 0.1cm

For the process $\,{\bm p}(t)\,$ given by 
Eq.\,(\ref{solut}), $\,{\bm p}(t)=0\,$ for all $\,t\geq0\,$  if 
$\,{\bm p}(0)=0$.  \,On the other hand, if $\,{\bm p}(0)
\not=0\,$ then $\,{\bm p}(t)\not=0\,$ with probability 
1 for all $\,t\geq0\,$ so that we may restrict the space of states
of the Markov process $\,{\bm p}(t)\,$ to $\,\mathbb{R}^{2d}\setminus\{0\}
\equiv\mathbb{R}^{2d}_{\not=0}$.

\nsection{\bf RELATION TO ONE-DIMENSIONAL LOCALIZATION}
\label{sec:local}

\noindent In the lowest dimensions, there is a simple relation between 
the stochastic process $\,{\bm p}(t)=({\bm\rho}(t),{\bm\chi}(t))\,$ 
constructed above and simple models of Anderson localization in one space 
dimension. Let us set 
\qq
{\bm\psi}(t)\ =\ \ee^{\frac{t}{2\tau}}{\bm\rho}(t)
\label{psirho}
\qqq
exponentially blowing up the long-time values of $\,{\bm\rho}(t)$.
\,Eq.\,(\ref{inteq0S}) implies that
\qq
-\ddot{\bm\psi}\,+\,\frac{_1}{^\tau}\,\frac{_{dS(t)}}{^{dt}}\,{\bm\psi}\ =\ 
-\frac{_1}{^{4\tau^2}}\,{\bm\psi}\,,
\label{loceq}
\qqq
or, \,in the first order form,
\qq
d\left(\begin{matrix}{\bm\psi}\cr{\bm\xi}\end{matrix}\right)\ 
=\ \left(\begin{matrix}0&{\frac{1}{\tau}}dt\cr dS(t)
+\frac{1}{4\tau}dt&0\end{matrix}\right)
\left(\begin{matrix}{\bm\psi}\cr{\bm\xi}\end{matrix}\right)\,.
\label{inteq3}
\qqq
Similarly as before, the above SDE defines a Markov process. 
Clearly,
\qq
({\bm\psi}(t),{\bm\xi}(t))\,=\,\ee^{\frac{t}{2\tau}}\big({\bm\rho}(t),
{\bm\chi}(t)+\frac{_1}{^{2}}{\bm\rho}(t)\big)\,.
\qqq 
\vskip 0.2cm

Viewing $\,t\,$ as the one-dimensional spatial coordinate, 
\,Eq.\,(\ref{loceq}) takes the form of the vector-like stationary 
Schr\"{o}dinger equation
\qq
-\frac{_{d^2}}{^{dt^2}}{\bm\psi}\,+\,V(t)\,{\bm\psi}\ =\ 
E\,{\bm\psi}\,,
\label{local1d}
\qqq
where $\,V(t)=\frac{1}{\tau}\,\frac{dS(t)}{dt}\,$ plays the role of
the random matrix-valued white-noise potential and $\,E=-\frac{1}{4\tau^2}\,$
of the (negative) energy. \,In particular, in $\,d=1$, \,$\,{\bm\psi}(t)\,$
is a real scalar function and so is the $\,\delta$-correlated potential 
\qq
V(t)\,=\,\frac{1}{\tau}\sqrt{A+2B}\,\frac{d\beta(t)}{dt}\,,
\qqq
where $\,\beta(t)\,$ is the Brownian motion. The scalar version 
of Eq.\,(\ref{local1d}) was studied in \cite{Halperin} as a model of 
one-dimensional Anderson localization, see also \cite{LifGredPas}. 
\,In $\,d=2$, \,interpreting $\,{\bm\psi}\,$ as a complex
number $\,\psi^1+i\psi^2$, \,one may replace the matrix valued
$\,\delta$-correlated potential $\,V(t)\,$ in the SDE 
(\ref{local1d}) with the complex valued one 
\qq
V(t)\,=\,\frac{_1}{^\tau}
\Big(\sqrt{A+2B}\,\frac{{d\beta^1(t)}}{{dt}}
+i\sqrt{A}\,\frac{{d\beta^2(t)}}{{dt}}\Big)
\qqq
where $\,\beta^1(t),\ \beta^2(t)$ are two independent Brownian motions
(the two realizations of $\,V(t)\,$ lead to the Markov processes with the same
generator and, consequently, with the same law). \,Consequently, as stressed
in \cite{Horvai}, \,Eq.\,(\ref{local1d}) in $d=2$ may be viewed as a model 
of localization for a one-dimensional non-hermitian random Sch\"odinger 
operator of the type not studied before.

\nsection{\bf HYPOELLIPTIC PROPERTIES OF THE GENERATOR}
\label{sec:hypoell}

\noindent The generator (\ref{genL}) of the particle dispersion process 
$\,{\bm p}(t)=({\bm\rho}(t),{\bm\chi}(t))\,$ has certain non-degeneracy 
properties
which imply smoothness of the transition probabilities. 
\,Let us start with the following fact about the covariance (\ref{icov})
of the matrix-valued white noise $\,dS(t)\,$: 
\vskip 0.3cm

\begin{lem} \ \ We have
\qq
D^{ik}_{jl}\ =\ \sum\limits_{m,n=1}^d\big(E\delta^i_j\delta^m_n
+F\delta^{im}\delta_{jn}+G\delta^i_n\delta_j^m\big)
\big(E\delta^k_l\delta^m_n+F\delta^{km}\delta_{ln}+G\delta^k_n\delta_l^m\big)
\label{forD}
\qqq
for 
\vskip -0.8cm
\qq
&&E\,=\,d^{-1}\big(-\sqrt{A+B}\,+\,\sqrt{A+(d+1)B}\,\big)\,,\cr
&&F\,=\,\frac{_1}{^2}(\sqrt{A+B}+\sqrt{A-B})\,,\cr
&&G\,=\,\frac{_1}{^2}(\sqrt{A+B}-\sqrt{A-B})\,.
\qqq
\end{lem}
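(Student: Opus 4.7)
The plan is to verify the identity directly by expanding the right-hand side, collecting the resulting tensors into the three independent $SO(d)$-invariant rank-four structures $\delta^{ik}\delta_{jl}$, $\delta^i_j\delta^k_l$, $\delta^i_l\delta^k_j$, and comparing coefficients with the covariance $D^{ik}_{jl}$ given in (\ref{icov}). This amounts to a factorization of the covariance as a sum of squares, which is the natural way to exhibit $dS(t)$ as a linear combination of independent scalar Brownian increments.

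First, I would expand the product
\[
\sum_{m,n}\big(E\delta^i_j\delta^m_n+F\delta^{im}\delta_{jn}+G\delta^i_n\delta^m_j\big)\big(E\delta^k_l\delta^m_n+F\delta^{km}\delta_{ln}+G\delta^k_n\delta^m_l\big)
\]
into nine terms and perform the summations over $m$ and $n$ one at a time using the elementary contractions $\sum_n\delta^m_n\delta^p_n=\delta^{mp}$, $\sum_m\delta^{im}\delta^{km}=\delta^{ik}$, and $\sum_n\delta^n_n=d$. \,A short bookkeeping shows that the nine terms regroup as follows: the $EE$ term contributes $dE^2\,\delta^i_j\delta^k_l$; the four ``mixed $E$'' terms ($EF$, $FE$, $EG$, $GE$) each contribute a multiple of $\delta^i_j\delta^k_l$ and together give $2E(F+G)\,\delta^i_j\delta^k_l$; the $FF$ term gives $F^2\,\delta^{ik}\delta_{jl}$ and the $GG$ term gives $G^2\,\delta^{ik}\delta_{jl}$; finally the $FG$ and $GF$ terms combine into $2FG\,\delta^i_l\delta^k_j$.

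Equating the result with $A\delta^{ik}\delta_{jl}+B\delta^i_j\delta^k_l+B\delta^i_l\delta^k_j$ produces the three scalar equations
\[
F^2+G^2=A,\qquad 2FG=B,\qquad dE^2+2E(F+G)=B.
\]
The first two yield $(F+G)^2=A+B$ and $(F-G)^2=A-B$, determining $F$ and $G$ (up to sign conventions) as the stated half-sums. Substituting $F+G=\sqrt{A+B}$ into the last relation gives a quadratic for $E$ whose appropriate root is $E=d^{-1}\big(-\sqrt{A+B}+\sqrt{A+(d+1)B}\big)$, matching the statement.

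The main obstacle is purely combinatorial care in the middle step: tracking which contractions yield $\delta^i_j\delta^k_l$ versus $\delta^{ik}\delta_{jl}$ versus $\delta^i_l\delta^k_j$. No deeper difficulty arises, and the positivity conditions $A\ge|B|$ and $A+(d+1)B\ge 0$ assumed earlier are exactly what guarantees the three square roots are real, so that the decomposition genuinely represents $D$ as a nonnegative tensor.
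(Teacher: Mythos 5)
Your proposal is correct and follows essentially the same route as the paper: expand the nine terms on the right-hand side, collect coefficients of the three invariant tensors to get $(dE^2+2E(F+G))\delta^i_j\delta^k_l+(F^2+G^2)\delta^{ik}\delta_{jl}+2FG\delta^i_l\delta^k_j$, and match with $D^{ik}_{jl}$ to obtain the system $F^2+G^2=A$, $2FG=B$, $dE^2+2E(F+G)=B$, which the stated $E,F,G$ solve. The only difference is that the paper simply asserts that the stated values solve these equations, whereas you additionally show how to derive them via $(F\pm G)^2=A\pm B$ and the quadratic in $E$ — a minor but welcome elaboration.
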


\begin{proof}\ \ The right hand side of Eq.\,(\ref{forD}) is:
\qq
(E^2d+2EF+2EG)\delta^i_j\delta^k_l\,+\,(F^2+G^2)\delta^{ik}\delta_{jl}
\,+\,2FG\delta^i_l\delta^k_j\,.
\qqq
Hence, in order to satisfy Eq.\,(\ref{forD}), we must have
\qq
A=F^2+G^2\,,\qquad B=E^2d+2E(F+G)=2FG\,.
\qqq
The assumed values of $E,\,F,\,G\,$ solve these equations.
\end{proof}

\noindent Define the vector fields
\qq
X_0\,\,&=&{\frac{_1}{^\tau}}\big({\bm\chi}\cdot{\bm\nabla}_{\bm\rho}\,
-\,{\bm\chi}\cdot{\bm\nabla}_{\bm\chi}\big)\,,\cr\cr
X^m_n&=&\sum\limits_{i,j}\rho^j\big(E\delta^i_j\delta^m_n
+F\delta^{im}\delta_{jn}+G\delta^i_n\delta_j^m\big)\,\nabla_{\chi^i}\,,\cr\cr
Y^m_n&=&-\sum\limits_{i,j}\rho^j\big(E\delta^i_j\delta^m_n
+F\delta^{im}\delta_{jn}+G\delta^i_n\delta_j^m\big)\,\nabla_{\rho^i}\cr
&&+\,\sum\limits_{i,j}(\chi^j+\rho^j)
\big(E\delta^i_j\delta^m_n
+F\delta^{im}\delta_{jn}+G\delta^i_n\delta_j^m\big)\,\nabla_{\chi^i}\,
=\,\tau[X_0,X^m_n]\,,\cr\cr
Z^m_n&=&-\sum\limits_{i,j}(2\chi^j+\rho^j)
\big(E\delta^i_j\delta^m_n+F\delta^{im}\delta_{jn}+G\delta^i_n
\delta_j^m\big)\,\nabla_{\rho^i}\cr
&&+\,\sum\limits_{i,j}(\chi^j+\rho^j)
\big(E\delta^i_j\delta^m_n+F\delta^{im}\delta_{jn}
+G\delta^i_n\delta_j^m\big)\,\nabla_{\chi^i}\,=\,\tau[X_0,Y^m_n]\,.
\qqq
Using Eq.\,(\ref{forD}), \,one infers that Eq.\,(\ref{genL}) 
giving the generator $\,L\,$ 
may be rewritten in the form:
\qq
L\ =\ X_0\,+\,\frac{_1}{^2}\sum\limits_{m,n}(X^m_n)^2\,.
\label{LX}
\qqq

\begin{rem}
\label{rem:eqSDE}
The process $\,{\bm p}(t)\,$ may be equivalently obtained from
the SDE
\qq
d{\bm p}\ =\ {\bm X}_0({\bm p})\,dt\,
+\sum\limits_{m,n=1}^d{\bm X}^m_n({\bm p})\,d\beta^n_m(t)\,,
\qqq
where $\,\beta^n_m(t)\,$ are independent Brownian motions. Here, we adopt the Stratonovich convention and hence the generator corresponding to that equation has the form (\ref{LX})
so that the latter SDE leads to a process with the same law as
$\,{\bm p}(t)$.  The convention of the stochastic integral we choose, however, is insignificant as the process $\bm{\rho}(t)$ is of bounded variation.    
\end{rem}

In order to establish hypoelliptic properties of $\,L$, \,we shall use
the following non-degeneracy relation satisfied by the vector fields 
$\,X^m_n,\ Y_n^m\,$ and $\,Z_n^m\,$:

\begin{prop}
\label{prop:1}\ \ Suppose that 
$\,{\bm p}=({\bm\rho},{\bm\chi})
\not=0$. \,Then the vectors
\qq
X^m_n({\bm p})\,,\quad Y^m_n({\bm p})\,,
\quad Z^m_n({\bm p})\qquad{\rm with}\quad
{m,n=1,\dots,d}
\label{span}
\qqq
span the $2d$-dimensional space.
\end{prop}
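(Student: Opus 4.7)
The plan is to reduce the spanning statement to a single linear-algebraic fact about the vector template
\[
W^{mn}_i(\bm a)\,:=\,E\,\delta^{mn}a^i + F\,\delta^{im}a^n + G\,\delta^{in}a^m,
\]
which is precisely the coefficient pattern arising (with $\bm a=\bm\rho$ or $\bm\chi$) in the definitions of $X^m_n$, $Y^m_n$, $Z^m_n$ in Section~\ref{sec:hypoell}. Unpacking those definitions at $\bm p=(\bm\rho,\bm\chi)$ gives
\[
X^m_n(\bm p)=\bigl(0,\,W^{mn}(\bm\rho)\bigr),\qquad Y^m_n(\bm p)=\bigl(-W^{mn}(\bm\rho),\,W^{mn}(\bm\rho)+W^{mn}(\bm\chi)\bigr),
\]
\[
Z^m_n(\bm p)=\bigl(-W^{mn}(\bm\rho)-2W^{mn}(\bm\chi),\,W^{mn}(\bm\rho)+W^{mn}(\bm\chi)\bigr),
\]
so that $X^m_n$ is purely of ``$\chi$-type'' and the combination $Z^m_n - Y^m_n=(-2W^{mn}(\bm\chi),\,0)$ is purely of ``$\rho$-type''.

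Granting the auxiliary claim $(\star)$ that \emph{for every nonzero $\bm a\in\mathbb R^d$ the $d^2$ vectors $\{W^{mn}(\bm a)\}_{m,n}$ span all of $\mathbb R^d$}, the proposition will follow from a two-case split on whether $\bm\rho$ vanishes. If $\bm\rho\ne 0$, then $(\star)$ applied to $\bm a=\bm\rho$ makes $\{X^m_n\}$ span $\{0\}\times\mathbb R^d$, and the $\rho$-components $-W^{mn}(\bm\rho)$ of the $Y^m_n$ span $\mathbb R^d$; so $\{X^m_n,Y^m_n\}$ already spans $\mathbb R^{2d}$ and $Z^m_n$ is not needed. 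If $\bm\rho=0$, which forces $\bm\chi\ne 0$, then $X^m_n\equiv 0$, while $Y^m_n=(0,\,W^{mn}(\bm\chi))$ spans $\{0\}\times\mathbb R^d$ by $(\star)$ with $\bm a=\bm\chi$, and $Z^m_n-Y^m_n=(-2W^{mn}(\bm\chi),\,0)$ spans $\mathbb R^d\times\{0\}$; together these again exhaust $\mathbb R^{2d}$.

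It therefore remains to verify $(\star)$, which is the only place where the specific arithmetic of $E,F,G$ really enters. The natural approach is to dualise: assume $\gamma\in\mathbb R^d$ annihilates every $W^{mn}(\bm a)$, so that for all $m,n$,
\[
E\,\delta^{mn}(\gamma\cdot\bm a) + F\,\gamma^m a^n + G\,\gamma^n a^m = 0.
\]
Reading this at $m=n$ shows $\gamma^m a^m$ is independent of $m$, and summing over $m$ yields $(F+G+dE)(\gamma\cdot\bm a)=0$. Combining the off-diagonal equation with its $(m,n)\leftrightarrow(n,m)$ transpose gives $(F^2-G^2)\,\gamma^m a^n=0$ for $m\ne n$. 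From the preceding lemma one computes
\[
F+G+dE=\sqrt{A+(d+1)B},\qquad F^2-G^2=\sqrt{A^2-B^2},\qquad F+G=\sqrt{A+B},
\]
all non-negative under the standing positivity hypotheses. In the generic range $A>|B|$ and $A+(d+1)B>0$ both discriminants are strictly positive, which forces every entry of the rank-one matrix $(\gamma^m a^n)$ to vanish, hence $\gamma=0$ whenever $\bm a\ne 0$.

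The main obstacle, and the only delicate point, is the treatment of the borderline parameter values at which one of these discriminants collapses: the incompressible regime $A+(d+1)B=0$ and the saturated regime $A=B$ (giving $F=G$). In each such sub-case the dualised equations still impose a very rigid structure on the matrix $(\gamma^m a^n)$ --- either diagonal with trace controlled by $\gamma\cdot\bm a$, or with purely antisymmetric off-diagonal part --- and a brief inspection based on which components of $\bm a$ are nonzero again yields $\gamma=0$. Once $(\star)$ is established uniformly in this way, the proposition follows mechanically from the case analysis above.
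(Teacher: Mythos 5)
Your reduction of the proposition to the single claim $(\star)$ — that $\{W^{mn}(\bm a)\}_{m,n}$ spans $\mathbb R^d$ whenever $\bm a\neq 0$ — and the subsequent two-case split on $\bm\rho$ vs.\ $\bm\rho=0$ using $X^m_n$ alone, $Y^m_n$, and $Z^m_n-Y^m_n$, is correct and matches the paper's case structure exactly. Where you part ways with the paper is in the proof of $(\star)$: you dualise (show no nonzero $\gamma$ annihilates all $W^{mn}(\bm a)$), whereas the paper directly exhibits the linear combination
\begin{equation*}
\sum_{m,n}\big(\alpha\,a^na^m+\beta\,a^n\phi^m\big)\,W^{mn}(\bm a)\,=\,\bm\phi\,,\qquad
\alpha=-\tfrac{(E+G)(\bm a\cdot\bm\phi)}{(E+F+G)F(\bm a^2)^2}\,,\quad\beta=\tfrac{1}{F\,\bm a^2}\,,
\end{equation*}
which produces an arbitrary $\bm\phi$ and needs only $F>0$ and $E+F+G>0$. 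Both of these are \emph{unconditionally} strictly positive under the paper's standing hypotheses (one checks $A+B>0$ in all allowed parameter ranges, whence $F\geq\tfrac12\sqrt{A+B}>0$, and $E+F+G=(1-1/d)\sqrt{A+B}+d^{-1}\sqrt{A+(d+1)B}>0$), so the constructive route is uniform and requires no case split on the parameters.

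Your dualisation, by contrast, hinges on the quantities $dE+F+G=\sqrt{A+(d+1)B}$ and $F^2-G^2=\sqrt{A^2-B^2}$, and you must treat the boundary regimes $A=B$ and $A+(d+1)B=0$ separately. You correctly identify the rigid structure of $(\gamma^ma^n)$ in each sub-case but leave the verification as ``a brief inspection''; filling it in takes a little care. In the case $A=B$ one gets $\gamma\cdot\bm a=0$ and then that $\gamma\otimes\bm a$ is antisymmetric, hence zero since a rank-$\leq 1$ antisymmetric matrix vanishes. In the case $A+(d+1)B=0$ one has $A>|B|$ (so off-diagonal entries vanish), but the diagonal equation no longer forces $\gamma\cdot\bm a=0$ by summation; instead one uses that $(\gamma^ma^n)$ is simultaneously diagonal and rank $\leq 1$, hence has at most one nonzero entry $\gamma^ka^k$, and the diagonal equation at $m=k$ reads $(E+F+G)\gamma^ka^k=0$ with $E+F+G>0$. (Note that this sub-case cannot occur for $d=1$, where $A+2B>0$ is assumed.) These completions are straightforward but not ``brief'' in the way the text suggests, and the paper's explicit construction sidesteps them entirely — with the bonus that the formula for $(\alpha,\beta)$ is reused verbatim in the controllability argument of Proposition~\ref{prop:2}.
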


\begin{proof}\ \ First suppose that $\,{\bm\rho}={0}\,$
so that $\,{\bm\chi}\not={0}$. \,We have
\qq
Y^m_n({0},{\bm\chi})\ =\ \sum\limits_{i,j}
\chi^j\big(E\delta^i_j\delta^m_n
+F\delta^{im}\delta_{jn}+G\delta^i_n\delta_j^m\big)\,\nabla_{\chi^i}\,.
\qqq
Let $\,{\bm\phi}\in\mathbb{R}^d$. \,Then
\qq
\sum\limits_{m,n}(\alpha\,\chi^n\chi^m+\beta\,\chi^n\phi^m)\,
\nonumber Y^m_n({0},{\bm\chi})&=&\ [E(\alpha{\bm\chi}^2+\beta{\bm\chi}\cdot{\bm\phi})
+F\alpha\,{\bm\chi}^2+G(\alpha\,{\bm\chi}^2+\beta\,{\bm\chi}\cdot{\bm\phi})]\,
{\bm\chi}\cdot{\bm\nabla}_{\bm\chi}\,\\
&\,& +\,F\beta\,{\bm\chi}^2\,
{\bm\phi}\cdot{\bm\nabla}_{\bm\chi}\,.\quad
\qqq
Setting
\qq
\alpha=-\,\frac{E+G}{(E+F+G)F}\,\frac{{\bm\chi}\cdot{\bm\phi}}
{({\bm\chi}^2)^2}\,,\qquad \beta=\frac{1}{F\,{\bm \chi}^2}
\qqq
(note that $\,F>0\,$ and $\,E+F+G>0$), \,we obtain
\qq
\sum\limits_{m,n}(\alpha\,\chi^n\chi^m+\beta\,\chi^n\phi^m)
\,Y^m_n({0},{\bm\chi})\ =\ {\bm\phi}\cdot{\bm\nabla}_{\bm\chi}\,.
\qqq
Hence the vector $\,{\bm\phi}\cdot{\bm\nabla}_{\bm\chi}\,$ is in the span
of (\ref{span}) for arbitrary $\,{\bm\phi}$. \,We have still to show that 
an arbitrary vector $\,{\bm\sigma}\cdot{\bm\nabla}_{\bm\rho}\,$ is in
that span. \,To this aim note that
\qq
Z^m_n({0},{\bm\chi})\ =\ -2\sum\limits_{i,j}
\chi^j\big(E\delta^i_j
\delta^m_n+F\delta^{im}\delta_{jn}+G\delta^i_n\delta_j^m\big)\,
\nabla_{\rho^i}\,+\,\sum\limits_{i}(...)^i\nabla_{\chi^i}\,.\ 
\qqq
Proceeding as before, we show that an appropriate combination
of $\,Z^m_n({0},{\bm\chi})\,$ gives the vector
\qq
{\bm\sigma}\cdot{\bm\nabla}_{\bm\rho}\,+\,\sum\limits_{i}
(...)^i\nabla_{\chi^i}
\qqq
from which the term $\,\sum(...)^i\nabla_{\chi^i}\,$ may be removed by 
subtracting
an appropriate combination of $\,Y^m_n({0},{\bm\chi})$.
That ends the proof of the claim of \,Proposition \ref{prop:1} 
for $\,{\bm\rho}={0}$.
\vskip 0.1cm

Suppose now that $\,{\bm\rho}\not={0}$. \,Proceeding as before, we see
that arbitrary vector $\,{\bm\phi}\cdot{\bm\nabla}_{\bm\chi}\,$ may be
obtained by taking an appropriate combination of the vectors
$\,X^m_n({\bm\rho},{\bm\chi})$. \,Similarly, arbitrary vector 
$\,{\bm\sigma}\cdot{\bm\nabla}_{\bm\rho}\,$ may be obtained as an appropriate
combination of the vectors $\,Y^m_n({\bm\rho},{\bm\chi})\,$ and
$\,X^m_n({\bm\rho},{\bm\chi})$. \,This completes the proof of \,Proposition 
\ref{prop:1}.
\end{proof}

\noindent The representation (\ref{LX}) and Proposition \ref{prop:1} imply,
in virtue of H\"{o}rmander's theory \cite{Hoerm,Nualart,Norr,Friz}, 
the following result:
\vskip 0.3cm

\begin{cor} \ \ The operators $\,L,\ L^\dagger,\ \partial_t-L,\ 
\partial_t-L^\dagger\,$ and $\,2\partial_t-L\otimes 1-1\otimes L^\dagger\,$
are hypoelliptic\footnote{A differential
operator  $\,D\,$ on a domain $\,\Omega\,$ is hypoelliptic
if for all distributions $\,f,g\,$ such 
that\\\hspace*{0.65cm}$Df=g$, \,smoothness of $\,g\,$ on an open subset 
$\,U\subset\Omega\,$  implies smoothness of $\,f\,$ on $\,U$.} on 
$\,\mathbb{R}^{2d}_{\not=0}$, $\,\mathbb{R}_+\times\mathbb{R}^{2d}_{\not=0}\,$
and $\,\mathbb{R}_+\times\mathbb{R}^{2d}_{\not=0}\times
\mathbb{R}^{2d}_{\not=0}$,
\,respectively.
\end{cor}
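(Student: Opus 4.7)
The plan is to deduce all five hypoellipticity claims from H\"{o}rmander's theorem by using the sum-of-squares representation $L=X_0+\frac{1}{2}\sum_{m,n}(X^m_n)^2$ of (\ref{LX}) together with the spanning property established in Proposition \ref{prop:1}.

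For $L$ on $\mathbb{R}^{2d}_{\not=0}$ the formula (\ref{LX}) already puts $L$ in H\"{o}rmander form with drift $X_0$ and diffusion fields $X^m_n$. The iterated commutators $[X_0,X^m_n]=Y^m_n/\tau$ and $[X_0,Y^m_n]=Z^m_n/\tau$, together with the $X^m_n$ themselves, yield at every point of $\mathbb{R}^{2d}_{\not=0}$ precisely the family of Proposition \ref{prop:1}, which spans $\mathbb{R}^{2d}$; H\"{o}rmander's theorem then furnishes hypoellipticity of $L$.

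For $\partial_t-L$ on $\mathbb{R}_+\times\mathbb{R}^{2d}_{\not=0}$ I would pass to the equivalent operator $(X_0-\partial_t)+\frac{1}{2}\sum_{m,n}(X^m_n)^2$ (multiplication by $-1$ preserves hypoellipticity). Since $\partial_t$ commutes with the purely spatial fields $X^m_n$, the commutators $[X_0-\partial_t,X^m_n]$ and $[X_0-\partial_t,Y^m_n]$ reduce to $Y^m_n/\tau$ and $Z^m_n/\tau$, so Proposition \ref{prop:1} again spans the $2d$-dimensional spatial tangent space; because $X_0(\bm{p})$ lies in that spatial span, the direction $\partial_t=X_0-(X_0-\partial_t)$ also lies in the Lie algebra, and the full $(2d+1)$-dimensional tangent space is generated. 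The formal $L^2$-adjoint has the form $L^\dagger=-X_0+\frac{1}{2}\sum_{m,n}(X^m_n)^2+c(\bm{\rho},\bm{\chi})$, with a smooth zero-order term $c$ coming from the divergences of $X_0$ and $X^m_n$; zero-order terms are immaterial for H\"{o}rmander and sign changes do not affect spanning, so $L^\dagger$ and $\partial_t-L^\dagger$ are treated exactly as $L$ and $\partial_t-L$.

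For the last operator $2\partial_t-L\otimes 1-1\otimes L^\dagger$ on $\mathbb{R}_+\times\mathbb{R}^{2d}_{\not=0}\times\mathbb{R}^{2d}_{\not=0}$, writing the superscript $(i)$ for the lift of a vector field to the $i$-th factor, I would put it (up to overall sign and a smooth zero-order term) in H\"{o}rmander form as $(X_0^{(1)}-X_0^{(2)}-2\partial_t)+\frac{1}{2}\sum_{m,n}\big(((X^m_n)^{(1)})^2+((X^m_n)^{(2)})^2\big)$. Because vector fields on distinct factors commute, bracketing the drift with $(X^m_n)^{(i)}$ and iterating produces $(Y^m_n)^{(i)}$ and $(Z^m_n)^{(i)}$ separately for $i=1,2$, so Proposition \ref{prop:1} applied on each factor spans the combined $4d$-dimensional spatial tangent space; subtracting $X_0^{(1)}-X_0^{(2)}$ from the drift then recovers $\partial_t$, so the full $(1+4d)$-dimensional tangent space is generated and H\"{o}rmander's theorem concludes. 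The only real subtlety in the whole argument is ensuring that the $\partial_t$-direction is captured by the Lie algebra in the augmented spaces; this is forced in every case by the fact that the deterministic drift $X_0$ itself already lies in the spatial span provided by Proposition \ref{prop:1}.
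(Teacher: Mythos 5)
Your proposal is correct and follows precisely the route the paper has in mind: the authors simply state that the sum-of-squares form (\ref{LX}) together with the spanning property of Proposition \ref{prop:1} give hypoellipticity ``in virtue of H\"ormander's theory'' and leave it at that, whereas you fill in the standard bookkeeping (brackets with $X_0$ reproduce $Y^m_n,Z^m_n$; the $\partial_t$ direction is recovered because $X_0$ itself lies in the spatial span; products of independent factors commute). Your reduction of $L^\dagger$ is slightly imprecise in attribution: the zero-order term comes only from $\operatorname{div}X_0=-d/\tau$, since the fields $X^m_n$ involve only $\nabla_{\bm\chi}$ with $\bm\rho$-dependent coefficients and are therefore divergence-free, so $(\tfrac12(X^m_n)^2)^\dagger=\tfrac12(X^m_n)^2$ exactly and no extra first-order term appears; this works in your favor and makes the claim $L^\dagger=-X_0+\tfrac12\sum(X^m_n)^2+c$ literally true rather than merely up to multiples of the $X^m_n$.
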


\noindent In particular, the hypoellipticity 
of $\,2\partial_t-L\otimes 1-1\otimes L^\dagger\,$
implies that the transition probabilities of the dispersion process 
$\,{\bm p}(t)$,
\qq
P_t({\bm p}_0,d{\bm p})\ =\ P_t({\bm p}_0,
{\bm p})\,d{\bm p}\,, 
\qqq
have densities (annihilated by 
$\,2\partial_t-L\otimes 1-1\otimes L^\dagger$) 
\,that are smooth functions of 
$\,(t,{\bm p}_0,{\bm p})\,$
for $\,t>0\,$ and away from the origin in $\,\mathbb{R}^{2d}$.

\nsection{{\bf CONTROL THEORY AND IRREDUCIBILITY}} 
\label{sec:ergod}

\noindent The additional important property of the process $\,{\bm p}(t)\,$
restricted to $\,\mathbb{R}^{2d}_{\not=0}\,$
is its $\,\bm{irreducibility}\,$ \,assured by the strict positivity
the smooth transition probability densities 
$\,P_t({\bm p}_0,{\bm p})\,$ for all $\,t>0\,$ and 
$\,{\bm p}_0\not=0\not={\bm p}$. \,The latter property 
results, \,according to Stroock-Varadhan's 
Support Theorem \cite{SV}, see also \cite{R-B}, \,from the 
$\,\bm{controllability}\,$ of the process 
$\,{\bm p}(t)\,$ on $\,\mathbb{R}^{2d}_{\not=0}\,$ that is established 
in the following 
\vskip 0.3cm

\begin{prop}\label{prop:2}\ \ For every $\,T>0\,$ and
$\,{\bm p}_0\not=0\not={\bm p}_1\,$ there 
exists a piecewise smooth curve $\,[0,T]\ni t\mapsto(u^n_m(t))
\in\mathbb{R}^{d^2}\,$ such that the solution of the ODE
\qq
\dot{\bm p}\,=\,X_0({\bm p})\,+\,
\sum\limits_{m,n}u^n_m(t)\,X^m_n({\bm p})
\label{contr}
\qqq
with the initial condition $\,{\bm p}(0)={\bm p}_0\,$
satisfies $\,{\bm p}(T)={\bm p}_1$. 
\end{prop}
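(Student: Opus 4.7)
The key observation is that the controlled ODE \eqref{contr} has a triangular structure. Each control vector field $X^m_n$ has only $\nabla_{\bm\chi}$ components, and the drift $X_0$ forces $\dot{\bm\rho}=\bm\chi/\tau$ regardless of the controls. Therefore, every admissible trajectory satisfies $\bm\chi(t)=\tau\dot{\bm\rho}(t)$, and writing $X^m_n(\bm p)=M^m_n(\bm\rho)\cdot\nabla_{\bm\chi}$ for the corresponding vectors $M^m_n(\bm\rho)\in\mathbb{R}^d$, the control equation reduces to
\[
\tau\ddot{\bm\rho}(t)+\dot{\bm\rho}(t)\ =\ \sum_{m,n}u^n_m(t)\,M^m_n\bigl(\bm\rho(t)\bigr).
\]
My plan is therefore to first construct a $C^2$ curve $t\mapsto\bm\rho(t)$ on $[0,T]$ meeting the four Hermite conditions $\bm\rho(0)=\bm\rho_0$, $\bm\rho(T)=\bm\rho_1$, $\dot{\bm\rho}(0)=\bm\chi_0/\tau$, $\dot{\bm\rho}(T)=\bm\chi_1/\tau$, set $\bm\chi(t)=\tau\dot{\bm\rho}(t)$, and then solve the displayed algebraic system for the controls $u^n_m(t)$ pointwise in $t$.

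The inversion step works wherever $\bm\rho(t)\neq 0$: the computation used in the $\bm\rho\neq 0$ case of the proof of Proposition \ref{prop:1} already exhibits an explicit linear combination of the $X^m_n$'s that realizes any prescribed vector $\bm\phi\cdot\nabla_{\bm\chi}$, with coefficients rational in $\bm\rho$. Applied to $\bm\phi=\tau\ddot{\bm\rho}+\dot{\bm\rho}$, this produces controls $u^n_m(t)$ that are smooth on any subinterval where $\bm\rho$ does not vanish. When $d\geq 2$, the path-connectedness of $\mathbb{R}^d\setminus\{0\}$ lets me choose $\bm\rho(t)$ so that it avoids $0$ on all of $[0,T]$ except possibly at an endpoint: this is compatible with the data because the assumption $\bm p_{0,1}\neq 0$ forces $\bm\chi_{0,1}\neq 0$ (hence $\dot{\bm\rho}\neq 0$) whenever the corresponding $\bm\rho_{0,1}$ vanishes, so the curve enters or exits $0$ transversally.

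The main obstacle is the case $d=1$ with $\rho_0$ and $\rho_1$ of opposite sign, when any continuous path $\rho$ must vanish at some interior time $t^\ast$, and there $M^m_n(0)=0$, threatening unbounded controls. I would handle this by choosing $\rho$ with a simple transversal zero at $t^\ast$, i.e.\ $\dot\rho(t^\ast)\neq 0$, and additionally imposing the single scalar condition $\tau\ddot\rho(t^\ast)+\dot\rho(t^\ast)=0$. A Taylor expansion near $t^\ast$ then shows that $\tau\ddot\rho(t)+\dot\rho(t)=O(t-t^\ast)$, matching the order of vanishing of $\rho(t)$, so the rational right inverse from the preceding paragraph yields bounded and smooth controls on either side of $t^\ast$. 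Since only finitely many algebraic constraints are imposed on $\rho$, a piecewise polynomial interpolant satisfies all of them simultaneously and produces the piecewise smooth control curve required by Proposition \ref{prop:2}.
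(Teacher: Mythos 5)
Your overall strategy matches the paper's: construct a curve $t\mapsto\bm\rho(t)$ satisfying the four Hermite boundary conditions, set $\bm\chi=\tau\dot{\bm\rho}$, and solve pointwise for the $u^n_m(t)$ using the rational inversion of the $X^m_n$'s already exhibited in the proof of Proposition~\ref{prop:1}. You have also caught a real oversight in the paper's argument: in its first case the paper asks for a curve with $\bm\rho(t)\neq 0$ for all $t\in[0,T]$, which is impossible in $d=1$ when $\rho_0$ and $\rho_1$ have opposite signs. Your fix — choosing a simple transversal zero $t^\ast$ and imposing $\tau\ddot\rho(t^\ast)+\dot\rho(t^\ast)=0$ so that the numerator $\tau\ddot\rho+\dot\rho$ vanishes to the same first order as the denominator $\rho$ — is correct and yields a bounded, in fact smooth, control through $t^\ast$.

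Your own argument, however, has a gap at the endpoint-zero case for $d\geq 2$. You allow $\bm\rho$ to vanish at $t=0$ (or $t=T$), observe that the curve ``enters or exits $0$ transversally'', and leave it at that. But transversality only guarantees $\bm\rho(t)\neq 0$ for small $t>0$; it does not keep the controls bounded. The inversion coefficients $\alpha$ and $\beta$ divide by $\bm\rho^2\sim t^2$, while $\bm\phi=\tau^2\ddot{\bm\rho}+\tau\dot{\bm\rho}$ tends to $\tau^2\ddot{\bm\rho}(0)+\bm\chi_0$, which is generically nonzero; the resulting $u^n_m(t)$ then blow up like $1/t$ and the control curve fails to be piecewise smooth on $[0,T]$. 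The repair is exactly the device you already use in $d=1$: impose the additional Hermite condition $\ddot{\bm\rho}(0)=-\bm\chi_0/\tau^2$ (equivalently $\bm\phi(0)=0$, and analogously at $t=T$) so the numerator matches the order of vanishing of the denominator at the endpoint as well. The paper sidesteps the same issue differently, by running the drift $X_0$ alone (i.e.\ $u\equiv 0$) on a short interval near such an endpoint, which moves $\bm\rho$ away from the origin explicitly; since $\bm\phi\equiv 0$ along drift-only trajectories, the two devices are really the same idea. Once this small patch is added, your proof is complete and, in the $d=1$ opposite-sign case, more careful than the paper's.
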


\begin{proof}\ \ First suppose that $\,{\bm\rho}_0\not={0}\not=
{\bm\rho}_1$. \,Let $\,[0,T]\ni t\mapsto{\bm\rho}(t)\,$ be any curve such that
\qq
&{\bm\rho}(0)={\bm\rho}_0\,,\qquad &\tau\dot{\bm\rho}(0)={\bm\chi}_0\,,\cr
&{\bm\rho}(T)={\bm\rho}_1\,,\qquad &\tau\dot{\bm\rho}(T)={\bm\chi}_1
\qqq
and such that $\,{\bm\rho}(t)\not={0}\,$ for all $\,t\in[0,T]$.
\ Set $\,{\bm\chi}(t)=\tau\dot{\bm\rho}(t)$. \,Let 
\qq
{\bm\phi}(t)\ =\ \tau\dot{\bm\chi}(t)+{\bm\chi}(t)\,.
\qqq
Then the formula
\qq
u^n_m\ =\ \frac{_1}{^\tau}(\alpha\,\rho^n\rho^m\,+\,\beta\,\rho^n\phi^m\big)\,,
\qqq
where now
\qq
\alpha=-\,\frac{E+G}{(E+F+G)F}\,\frac{{\bm\rho}\cdot{\bm\phi}}{({\bm\rho}^2)^2}\,,
\qquad \beta=\frac{1}{F\,{\bm\rho}^2}\,,
\qqq
defines smooth control functions $\,[0,T]\ni t\mapsto(u^n_m(t))\,$ such that
Eq.\,(\ref{contr}) holds.
\vskip 0.1cm

Now suppose that $\,{\bm\rho}_0={0}\not={\bm\rho}_1$. \,Choose $\,0<\epsilon<
\frac{_1}{^2}T\,$
and for $\,0\leq t\leq\epsilon$, \,set
\qq
{\bm\rho}(t)\ =\ \big(1-\ee^{-\frac{t}{\tau}}\big){\bm\chi}_0
\qqq
and $\,{\bm\chi}(t)=\tau\dot{\bm\rho}(t)$. \,Then $\,{\bm p}(t)=({\bm\rho}(t),
{\bm\chi}(t))\,$
satisfies Eq.\,(\ref{contr}) with $\,u^n_m(t)\equiv0\,$ 
for $\,0\leq t\leq\epsilon$, \,with the correct initial condition
at $\,t=0$. Note that 
\qq
({\bm\rho}(\epsilon),{\bm\chi}(\epsilon))\ =\ \big((1-
\ee^{-\frac{\epsilon}{\tau}}){\bm\chi}_0,\,\ee^{-\frac{\epsilon}{\tau}}
{\bm\chi}_0\big)\,.
\label{initcond1}
\qqq
Since, by the assumptions, $\,{\bm\chi}_0\not=0$, \,we infer that
$\,{\bm\rho}(\epsilon)\not={0}\,$ and the solution of Eq.\,(\ref{contr})
for $\,\epsilon\leq t\leq T\,$ may be constructed as in the previous
point but taking (\ref{initcond1}) as the initial conditions at
$\,t=\epsilon$.  
\vskip 0.1cm

Similarly, if $\,{\bm\rho}_0\not={0}={\bm\rho}_1\,$ then 
set for $T-\epsilon\leq t\leq T$
\qq
{\bm\rho}(t)\ =\ \big(1-\ee^{\frac{T-t}{\tau}}\big){\bm\chi}_1\,,
\qqq
and $\,{\bm\chi}(t)=\tau\dot{\bm\rho}(t)$.
\,Then $\,{\bm p}(t)=({\bm\rho}(t),{\bm\chi}(t))\,$
satisfies Eq.\,(\ref{contr}) with $\,u^n_m(t)\equiv0\,$ 
for $\,T-\epsilon\leq t\leq T$, \,with the correct final condition
at $\,t=T$. \,One has 
\qq
({\bm\rho}(T-\epsilon),{\bm\chi}(T-\epsilon))\ =\ \big((1-
\ee^{\frac{\epsilon}{\tau}}){\bm\chi}_1,\,\ee^{\frac{\epsilon}{\tau}}
{\bm\chi}_1\big)\,.
\label{initcond2}
\qqq
Since, by the assumptions, $\,{\bm\chi}_1\not=0\,$ now, \,we infer that
$\,{\bm\rho}(T-\epsilon)\not={0}\,$ and the solution of Eq.\,(\ref{contr})
for $\,0\leq t\leq T-\epsilon\,$ with $\,{\bm\rho}(t)\not=0\,$ may be 
constructed as in the first point but taking (\ref{initcond2}) as the final 
condition at $\,t=T-\epsilon$.  
\vskip 0.1cm

Finally, if $\,{\bm\rho}_0={0}={\bm\rho}_1$, \,we combine the above
solutions for $\,0\leq t\leq\epsilon\,$ and $\,T-\epsilon\leq t\leq T\,$
with vanishing $\,u^n_m\,$ with the a solution with $\,{\bm\rho}(t)
\not={0}\,$ and appropriate $\,u^n_m(t)\,$ for 
$\,\epsilon\leq t\leq T-t$.
\end{proof}

\begin{rem}
\label{rem:nonzero}
Note that the solution $\,{\bm p}(t)\,$ of the ODE (\ref{contr})
satisfying $\bm{p}(0)=\bm{p}_{0}\neq 0$ and $\bm{p}(T)=\bm{p}_{1}\neq 0$ is everywhere nonzero.
\end{rem}

\nsection{{\bf PROJECTION OF THE DISPERSION TO} $\ S^{2d-1}$}
\label{sec:proj}

\noindent The generator $\,L\,$ of the process commutes with the 
multiplicative action of $\,\mathbb{R}_+\,$ on $\,\mathbb{R}^{2d}\,$ given 
by
\qq
{\bm p}\ \mathop{\longmapsto}\limits^{\Theta_\sigma}\ \sigma
\hspace{0.01cm}{\bm p}
\qqq
for $\,\sigma>0$. \,It follows that if $\,{\bm p}(0)\not=0\,$ 
then the projection 
\qq
[{\bm p}(t)]\,\equiv\,{\bm\pi}(t)
\qqq
of the process $\,{\bm p}(t)\,$ on the quotient space 
$\,\mathbb{R}^{2d}_{\not=0}/\mathbb{R}_+\,$ is also 
a Markov
process whose generator may be identified with $\,L\,$ acting on functions
on $\,\mathbb{R}^{2d}_{\not=0}\,$ that are homogeneous of degree zero.
The quotient space $\,\mathbb{R}^{2d}_{\not=0}/\mathbb{R}_+\,$
may be naturally identified with the sphere
\qq
S^{2d-1}\ =\ \{\,({\bm\rho},{\bm\chi})\ |\ {\bm\rho}^2+{\bm\chi}^2=R^2\,\}
\label{sph}
\qqq
for a fixed $\,R\,$ and we shall often use this identification below.
\,The transition probabilities $\,P_t({\bm\pi}_0;d{\bm\pi})\,$ of the 
process $\,{\bm\pi}(t)\,$ are obtained by projecting the original transition 
probabilities from $\,\mathbb{R}^{2d}_{\not=0}\,$ to 
the quotient space.
\,Note that the vector fields $\,X_0,\ X^m_n,\ Y^m_n,\ Z^m_n\,$ also commute 
with the action $\,\mathbb{R}_+\,$ so may be identified with vector fields 
on $\,\mathbb{R}^{2d}_{\not=0}/\mathbb{R}_+\,$ 
and Eq.\,(\ref{LX}) still holds. Viewed as vector fields
on $\,S^{2d-1}$, $\,X^m_n,\ Y^m_n\,$ and $\,Z^m_n\,$ still span at each point
the tangent space to $\,S^{2d-1}$. \,It follows that the operators 
$\,L,\ L^\dagger,\ \partial_t-L,\ \partial_t-L^\dagger,\ 2\partial_t-L\otimes 1
-1\otimes L^\dagger\,$ (with the adjoints defined now with respect to 
an arbitrary measure with smooth positive density on 
$\,S^{2d-1}$, \,e.g. the normalized standard $\,SO(2d)$-invariant one
$\,\mu_0(d{\bm\pi})$) are still hypoelliptic and the transition probabilities 
of the projected process have smooth densities 
$\,P_t({\bm\pi}_0;{\bm\pi})\,$ with respect to $\,\mu_0(d{\bm\pi})\,$
for $\,t>0$. \,Consequently, the process $\,{\bm\pi}(t)\,$ 
is {\,\bf strongly Feller}: \,for bounded measurable functions $\,f\,$
on $\,S^{2d-1}$, \,the functions
\qq (T_tf)({\bm\pi}_0)\ =\ \int\limits_{S^{2d-1}}
P_t({\bm\pi}_0;d{\bm\pi})\,f({\bm\pi})\ =\  
\int\limits_{S^{2d-1}}
P_t({\bm\pi}_0;{\bm\pi})\,f({\bm\pi})\,\mu_0(d{\bm\pi})
\label{Tt}
\qqq
are continuous (and even smooth) for $\,t>0$.
\,Besides, the projected process is still irreducible since
$\,P_t({\bm\pi}_0;{\bm\pi})>0\,$ for all $\,t>0\,$ and $\,{\bm\pi}_0,{\bm\pi}
\in S^{2d-1}$. \,The latter property follows from the relation between 
$\,P_t({\bm\pi}_0;{\bm\pi})\,$ and $\,P_t({\bm p}_0;{\bm p})\,$ and 
from the strict positivity of the latter away from the origin of 
$\,\mathbb{R}^{2d}$.
\vskip 0.1cm

The gain from projecting the process $\,{\bm p}(t)\,$
to the compact space $\,S^{2d-1}\,$ is that the projected process 
$\,{\bm\pi}(t)\,$ 
has necessarily invariant probability measures $\,\mu(d{\bm\pi})$. 
\,In particular, each weak-topology accumulation point 
for $\,T\to\infty\,$ of the Cesaro means
\qq
T^{-1}\int\limits_0^T P_t({\bm\pi}_0;d{\bm\pi})\,dt
\qqq
provides such a measure\footnote{Probability measures on a compact space
form a compact set in weak topology.}. \,Since the 
({\it a priori} distributional) density
$\,n({\bm\pi})\,$ of an invariant measure is annihilated by 
$\,L^\dagger$, \,the hypoellipticity of the latter operator assures that 
$\,n({\bm\pi})\,$ is a smooth function. The invariance relation
\qq
\int\limits_{S^{2d-1}}P_t({\bm\pi}_0,{\bm\pi})\,n({\bm\pi}_0)\,
\mu_0(d{\bm\pi}_0)\ =\ n({\bm\pi})
\qqq
together with the strict positivity of $\,P_t({\bm\pi}_0,{\bm\pi})\,$
implies then the strict positivity of the density $\,n({\bm\pi})\,$ 
of the invariant measure and, in turn, the uniqueness of the latter 
(different ergodic invariant measures have to have disjoint
supports, so that there may be only one such measure), see e.g.
\cite{R-B} for more details. \,One obtains this way
\vskip 0.3cm

\begin{theo}\ \ The projected process 
$\,{\bm\pi}(t)\,$ has a unique invariant probability measure 
$\,\mu(d{\bm\pi})\,$ with a smooth strictly positive density $\,n({\bm\pi})$.
\end{theo}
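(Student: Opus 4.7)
The plan is to follow the three-step argument already outlined in the paragraph preceding the theorem, filling in the logic carefully. First I would establish \emph{existence}: since $S^{2d-1}$ is compact, the space of Borel probability measures on $S^{2d-1}$ is weak-$*$ compact, so the Cesaro averages
\begin{equation*}
\mu_T(d\bm\pi)\ =\ T^{-1}\int_0^T P_t(\bm\pi_0;d\bm\pi)\,dt
\end{equation*}
admit a weak-$*$ accumulation point as $T\to\infty$. By the standard Krylov--Bogolyubov argument, combined with the Feller property of $\bm\pi(t)$ (which follows from the strong Feller property already noted), any such accumulation point is an invariant probability measure $\mu$ for the semigroup $T_t$ defined in \eqref{Tt}.

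Next I would address \emph{smoothness}: writing $\mu(d\bm\pi)=n(\bm\pi)\,\mu_0(d\bm\pi)$ a priori only as a distribution on $S^{2d-1}$ (interpreted by duality against smooth test functions on the sphere), invariance under $T_t$ translates into $L^\dagger n=0$ in the distributional sense. Since hypoellipticity of $L^\dagger$ was established in the previous section and transfers to the compact quotient $S^{2d-1}$ (the vector fields $X_n^m$, $Y_n^m$, $Z_n^m$ still span the tangent space at each point of $S^{2d-1}$, so H\"ormander's condition persists), the distribution $n$ is automatically a smooth function on $S^{2d-1}$.

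Then I would prove \emph{strict positivity}: for any fixed $t>0$, the invariance relation
\begin{equation*}
n(\bm\pi)\ =\ \int_{S^{2d-1}} P_t(\bm\pi_0;\bm\pi)\,n(\bm\pi_0)\,\mu_0(d\bm\pi_0)
\end{equation*}
holds pointwise by smoothness of both sides. The integrand is nonnegative, $n\geq 0$ is not identically zero, and $P_t(\bm\pi_0;\bm\pi)>0$ for all $\bm\pi_0,\bm\pi\in S^{2d-1}$; the latter strict positivity is inherited from the strict positivity of $P_t(\bm p_0;\bm p)$ on $\mathbb{R}^{2d}_{\neq 0}$, which itself follows from hypoellipticity together with the controllability result Proposition~\ref{prop:2} via the Stroock--Varadhan support theorem. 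Hence $n(\bm\pi)>0$ everywhere on $S^{2d-1}$.

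Finally, \emph{uniqueness} follows from the positivity just proved: any two distinct ergodic invariant probability measures must have disjoint supports, but every invariant probability measure is a convex combination of ergodic ones, and each ergodic component—being itself an invariant measure produced by the same argument—has everywhere strictly positive smooth density on $S^{2d-1}$, so their supports cannot be disjoint. Thus only one ergodic invariant measure exists, and it coincides with $\mu$. The main technical obstacle, already resolved upstream, is the strict positivity of $P_t(\bm\pi_0;\bm\pi)$ for all pairs of points in $S^{2d-1}$; all remaining steps are standard once hypoellipticity and controllability are in hand.
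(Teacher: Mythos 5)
Your proposal matches the paper's own proof essentially step for step: Krylov--Bogolyubov existence via Cesaro means and weak-$*$ compactness, smoothness of the invariant density from hypoellipticity of $L^\dagger$ on the sphere, strict positivity from the invariance identity together with the everywhere-positive smooth transition densities (which in turn rest on hypoellipticity plus controllability via Stroock--Varadhan), and uniqueness from the disjoint-supports observation for ergodic measures. The only elaboration beyond the paper is your explicit remark that an arbitrary invariant measure decomposes into ergodic components, which closes the small logical gap the paper leaves implicit; otherwise the arguments are identical.
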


The smoothness of the densities $\,P_t({\bm\pi}_0;{\bm\pi})\,$ implies
by the Arzel\`{a}-Ascoli Theorem that the operators of the semigroup $\,T_t\,$ on
the space $\,C(S^{2d-1})\,$ of continuous function on $\,S^{2d-1}\,$ with the
$\,\sup$-norm, \,defined by Eq.\,\eqref{Tt}, \,are compact for $\,t>0$. 
\,The uniqueness of the invariant measure implies 
then that the spectrum of $\,T_t\,$ is strictly inside the unit disk 
except for the geometrically simple eigenvalue $\,1\,$ corresponding to
the constant eigenfunctions, see \cite{R-B}. It follows 
that the process $\,{\bm\pi}(t)\,$ 
is exponentially mixing:

\begin{theo}
\qq
\Big\langle f_1({\bm\pi}(t_1))\,f_2({\bm\pi}(t_2))\Big\rangle\ \ 
\mathop{\longrightarrow}\limits_{t_1\to\infty\atop t_2-t_1\to\infty}\ \ 
\int f_1({\bm\pi})\,\mu(d{\bm\pi})\,\int f_2({\bm\pi})\,\mu(d{\bm\pi})
\qqq
exponentially fast for continuous functions $\,f_1,f_2$.
\end{theo}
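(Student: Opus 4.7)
The plan is to derive the exponential mixing as a direct consequence of the spectral gap for the semigroup $\,T_t\,$ on $\,C(S^{2d-1})\,$ noted immediately before the statement. First, conditioning on the $\,\sigma$-algebra generated by $\,\{{\bm\pi}(s)\}_{s\leq t_1}\,$ and using the Markov property, one writes
\[
\Big\langle f_1({\bm\pi}(t_1))\,f_2({\bm\pi}(t_2))\Big\rangle
\ =\ \Big\langle f_1({\bm\pi}(t_1))\,(T_{t_2-t_1}f_2)({\bm\pi}(t_1))\Big\rangle.
\]
This reduces the problem to controlling (i) the decay of $\,T_s f_2\,$ to $\,\int f_2\,d\mu\,$ in the sup-norm as $\,s=t_2-t_1\to\infty$, \,and (ii) the convergence of $\,\big\langle g({\bm\pi}(t_1))\big\rangle\,$ to $\,\int g\,d\mu\,$ as $\,t_1\to\infty\,$ for an appropriate continuous function $\,g$.

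Both convergences come from the same source. Since $\,T_{t_0}\,$ is a compact operator on $\,C(S^{2d-1})\,$ for any $\,t_0>0$, \,its spectrum consists of $\,0\,$ together with a sequence of eigenvalues of finite multiplicity, and by the uniqueness of $\,\mu\,$ combined with the Perron--Frobenius type argument recalled before the theorem, $\,1\,$ is a geometrically simple isolated eigenvalue with eigenspace the constants, while the rest of the spectrum lies in a disk of radius $\,r<1$. \,Write $\,T_{t_0}=\Pi+R$, \,where $\,\Pi f=\big(\int f\,d\mu\big)\cdot 1\,$ is the spectral projection onto the eigenspace of $\,1\,$ and $\,R=T_{t_0}-\Pi\,$ has spectral radius $\,r<1$. \,Since $\,\Pi\,$ commutes with $\,T_t\,$ and $\,\Pi T_t=\Pi=T_t\Pi$, \,iterating gives $\,T_{nt_0}-\Pi=R^n\,$ and hence $\,\|T_{nt_0}f-\Pi f\|_\infty\le C\,r^n\,\|f\|_\infty$. \,Interpolating with the uniform bound $\,\|T_t\|\le 1\,$ for $\,t\in[0,t_0]\,$ yields constants $\,C'>0,\ \alpha>0\,$ such that
\[
\|T_t f-\textstyle\int f\,d\mu\|_\infty\ \le\ C'\,\mathrm{e}^{-\alpha t}\,\|f\|_\infty\qquad\text{for all }t\ge 0.
\]

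Applying this to $\,f_2\,$ at time $\,s=t_2-t_1$, \,the second factor inside the expectation differs from the constant $\,\int f_2\,d\mu\,$ by an error of sup-norm at most $\,C'\,\mathrm{e}^{-\alpha(t_2-t_1)}\|f_2\|_\infty$. \,Pulling this out and using $\,|f_1|\le\|f_1\|_\infty\,$ bounds the difference
\[
\Big\langle f_1({\bm\pi}(t_1))\,f_2({\bm\pi}(t_2))\Big\rangle
\,-\,\Big(\textstyle\int f_2\,d\mu\Big)\,\Big\langle f_1({\bm\pi}(t_1))\Big\rangle
\]
by $\,C'\,\mathrm{e}^{-\alpha(t_2-t_1)}\|f_1\|_\infty\|f_2\|_\infty$. \,On the remaining factor $\,\big\langle f_1({\bm\pi}(t_1))\big\rangle=\int (T_{t_1}f_1)\,\mathrm{d}\nu\,$ (where $\,\nu\,$ is the law of $\,{\bm\pi}(0)$), \,the same spectral estimate gives convergence to $\,\int f_1\,d\mu\,$ at rate $\,\mathrm{e}^{-\alpha t_1}$. \,Combining the two bounds proves the claim.

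The only nontrivial step is establishing the quantitative spectral gap on $\,C(S^{2d-1})$, \,i.e.\ upgrading the structural statement that the non-unit spectrum of the compact operator $\,T_{t_0}\,$ lies strictly inside the open unit disk to an exponential decay estimate valid for all $\,t\ge 0$. \,This, however, is a standard consequence of compactness, the semigroup property $\,T_{t+s}=T_tT_s$, \,and the Gelfand spectral-radius formula $\,\|R^n\|^{1/n}\to r<1$, \,as set out in the reference \cite{R-B} invoked just before the theorem.
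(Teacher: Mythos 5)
Your proof is correct and follows essentially the same route the paper sketches just before the theorem: compactness of $\,T_{t_0}\,$ on $\,C(S^{2d-1})\,$ plus uniqueness of $\,\mu\,$ yields a spectral gap (simple eigenvalue $1$ on the constants, remaining spectrum in a disk of radius $r<1$), which combined with the Markov property gives exponential decay of correlations. The decomposition $\,T_{t_0}=\Pi+R\,$ with $\Pi T_t=T_t\Pi=\Pi$, the Gelfand spectral-radius estimate, and the interpolation to continuous $t$ merely flesh out the details the paper defers to \cite{R-B}.
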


\nsection{\bf{PROPERTIES OF THE INVARIANT MEASURE}}

\noindent Due to the isotropy of the covariance (\ref{icov}),
\,the generator $\,L\,$ of the process $\,{\bm\pi}(t)\,$ commutes with the 
action of the rotation group \,$SO(d)\,$ induced 
on $\,S^{2d-1}\,$ by the mappings
\qq
({\bm\rho},{\bm\chi})\ \ \mathop{\longmapsto}\limits^{\Theta_O}\ \ 
(O{\bm\rho},O{\bm\chi})
\qqq
for $\,O\in SO(d)$.  \,As a consequence, the process $\,{\bm\pi}(t)\,$ stays 
Markov when projected to the quotient
space $\,P_d=S^{2d-1}/SO(d)$. \,The unique invariant measure 
$\,\mu(d{\bm\pi})\,$ of the process $\,{\bm\pi}(t)\,$ has to be 
also invariant under $\,SO(d)\,$ and its projection to $\,P_d\,$
provides the unique invariant probability measure of the projected
process\footnote{To see the uniqueness, note that averaging over 
the action of $\,SO(d)\,$ maps $\,C(S^{2d-1})\,$ to $\,C(P_d)\,$ 
and that dual map sends
invariant measures for the projected process to invariant 
measures of $\,{\bm\pi}(t)$.}.    
\,The projected invariant measure may be expressed in terms 
of invariants of the $\,SO(d)$-action. \,Such invariants will be chosen 
as the following dimensionless combinations:
\begin{itemize}
\item for $\,d=1\,$ where $\,P_1=S^1$
\qq
x\,=\,\frac{\chi}{\rho}\,,
\qqq
\item
for $\,d=2\,$ where $\,P_2=\mathbb{P}\mathbb{C}^1$, 
\qq
\label{inv2}
x\,=\,\frac{{\bm\rho}\cdot{\bm\chi}}{{\bm\rho}^2}\qquad{\rm and}\qquad 
y\,=\,\frac{\rho^1\chi^2-\rho^2\chi^1}{{\bm\rho}^2}
\qqq
with $\,z=x+iy\,$ providing the inhomogeneous complex coordinate of
$\,\mathbb{P}\mathbb{C}^1$,
\item
for $d\geq 3$,
\qq
\label{inv3}
x\,=\,\frac{{\bm\rho}\cdot{\bm\chi}}{{\bm\rho}^2}\qquad{\rm and}\qquad 
y\,=\,\frac{\sqrt{{\bm\rho}^2{\bm\chi}^2-({\bm\rho}\cdot{\bm\chi})^2}}
{{\bm\rho}^2}\,.
\label{invs3}
\qqq
\end{itemize}
Note that the right hand side of the $\,d\geq3\,$ expression for $\,y\,$
would give in $\,d=2\,$ the absolute value of $\,y$. 
\,The quotient spaces
$\,P_d\,$ are not smooth for $\,d\geq3$.

\subsection{$d=1\,$ case}
\label{subsec:ergod1}

\noindent In one dimension, \,Eq.\,(\ref{inteq2}) implies that
\qq
dx\ =\ -\frac{_1}{^\tau}\big(x+x^2\big)dt\,+\,dS(t)\,.
\label{SDE1}
\qqq
The invariant probability measure on $\,S^1\,$ is easily found 
\cite{Halperin,WM1} to have the form $\,d\mu=\eta(x)\hspace{0.02cm}dx\,$ 
with
\qq
\eta(x)\ =\ Z^{-1}\Big(\ee^{-\frac{1}{\tau(A+2B)}\big(\frac{2}{3}x^3
+x^2
\big)}\int\limits_{-\infty}^x\ee^{\frac{1}{\tau(A+2B)}\big(\frac{2}{3}x'^3+
x'^2\big)}dx'\Big)\hspace{0.03cm}dx\,,
\label{mu1}
\qqq
where $\,Z\,$ is the normalization constant. Since the normalized 
rotationally invariant
measure on $\,S^1=\{(\rho,\chi)\ |\ \rho^2+\chi^2=R^2\}\,$ has the form
$\,d\mu_0=\frac{dx}{\pi(1+x^2)}$, \,it follows from our general
result that the density $\,n(x)=\pi(1+x^2)\hspace{0.03cm}\eta(x)\,$ 
of the invariant measure
relative to $\,d\mu_0\,$ must be smooth and positive
at $\,x=\infty$, \,i.e. at the origin when expressed in the variable 
$\,x^{-1}$. \,In particular,
\qq
\eta(x)\ =\ \CO(|x|^{-2})\qquad{\rm for}\qquad |x|\to\infty\,,
\qqq
which may also be easily checked directly.   
\vskip 0.1cm

In one dimension, the generator $\,L\,$ given by 
Eq.\,(\ref{genL}) acts on a function $\,f(x)\,$ according to the formula:
\qq
(Lf)(x)\ =\ -\frac{_1}{^\tau}(x^2+x)\,\partial_xf(x)\,+\,\frac{_1}{^2}(A+2B)
\,\partial_x^2f(x)\,.
\label{Lonf1}
\qqq
It coincides with the generator of the process satisfying 
the SDE (\ref{SDE1}). \,The trajectories of the latter process 
with probability one explode to $\,-\infty\,$ in finite time but,
in the version of the process that describes the projectivized
dispersion of the one-dimensional inertial particle, they re-enter 
immediately from $\,+\infty$.

\subsection{$d=2\,$ case}
\label{subsec:ergod2}

\ 

\vskip 0.1cm

\noindent In two dimensions, \,the invariant measure on $\,S^3\,$ has 
to have the form
\qq
d\mu\ =\ \frac{_1}{^{2\pi}}\,\eta(z,\bar z)\,
d^2z\,d\hspace{0.04cm}{\rm arg}({\bm\rho})\,.
\label{2dinvm}
\qqq
On the other hand, the $\,SO(4)$-invariant normalized measure on $\,S^3\,$ is
\qq
d\mu_0\ =\ \frac{_1}{^{2\pi}}\,\eta_0(z,\bar z)\,d^2z\,d\hspace{0.04cm}
{\rm arg}({\bm\rho})
\qqq
with
\qq
\eta_0(z,\bar z)\ =\ \frac{1}{\pi(1+|z|^2)^2}\,.
\qqq
It follows from the general result obtained above that the density 
of $\,d\mu\,$ relative to $\,d\mu_0\,$
\qq
n(z,\bar z)\ =\ \frac{\eta(z,\bar z)}{\eta_0(z,\bar z)}
\qqq
has to extend to a smooth positive function on 
$\,\mathbb{P}\mathbb{C}^1$, \,i.e. 
to be smooth and positive at zero when expressed in the variables 
$\,(z^{-1},\bar z^{-1})$. 
\,In particular, 
\qq
\eta(z,\bar z)\ =\ \CO(|z|^{-4})\qquad{\rm for}\qquad|z|\to\infty\,.
\label{2ddens}
\qqq
The unique invariant probability measure of the Markov process obtained 
by projecting $\,{\bm\pi}(t)\,$ from $\,S^3\,$ to $\,
S^3/SO(2)=\mathbb{P}\mathbb{C}^1\,$ has the form (\ref{2dinvm}) with 
$\,\frac{1}{2\pi}d\hspace{0.04cm}{\rm arg}({\bm\rho})\,$ on the right 
hand side dropped. \,Note that the relation (\ref{2ddens}) implies that
\qq
\int\limits_{-\infty}^{\infty}\eta(x,y)\,dy\ =\ \CO(|x|^{-3})
\qquad{\rm for}\qquad|x|\to\infty\,.
\label{bound2}
\qqq
by changing variables $\ y\mapsto\sqrt{1+x^2}\,y\ $ in the integral.
Such behavior was heuristically argued for and numerically checked 
in \cite{BCH1}. 
\vskip 0.1cm

In two dimensions, the generator $\,L\,$ of 
Eq.\,(\ref{genL}) acts on $\,SO(d)\,$ invariant functions $\,f(x,y)\,$ 
according to the formula:
\qq
(Lf)(x,y)&=&-\,\frac{_1}{^\tau}(x^2-y^2+x)\,\partial_xf(x,y)\cr
&&-\,\frac{_1}{^\tau}(2xy+y)\,
\partial_yf(x,y)\cr
&&+\,\frac{_1}{^2}(A+2B)\,\partial_x^2f(x,y)\,+\,\frac{_1}{^2}A
\,\partial_y^2f(x,y)\,.
\label{Lonf2}
\qqq
It coincides with with the generator of the process $\,z(t)=(x+iy)(t)\,$ 
in the complex plane given by the SDE \cite{Piterbarg}
\qq
dz\ =\ -\frac{_1}{^\tau}\big(z+z^2\big)\hspace{0.02cm}dt\,+\,
\sqrt{A+2B}\,d\beta^1(t)\,+\,i\sqrt{A}\,d\beta^2(t)\,,
\label{2dequ}
\qqq
where $\,\beta^1(t)\,$ and $\,\beta^2(t)\,$ are two independent Brownian
motions.

\subsection{$d\geq3\,$ case}
\label{subsec:ergod3}

\ 

\vskip 0.1cm

\noindent Finally, in three or more dimensions, \,the invariant measure 
on $\,S^{2d-1}\,$ has to have the form
\qq
d\mu\ =\ \eta(x,y)\,dx\hspace{0.03cm}dy\,d[O]\,,
\label{3dinvm}
\qqq
where $\,O\in SO(d)\,$ is the rotation matrix such that $\,O^{-1}{\bm\rho}\,$
is along the first positive half-axis in $\,\mathbb{R}^d\,$
and $\,O^{-1}{\bm\chi}\,$ lies in the 
half-plane spanned by the first axis and the second positive half-axis.
Note that, generically, $\,O\,$ is determined modulo rotations in $\,(d-2)\,$
remaining directions. $\,d[O]\,$ stands for the normalized 
$\,SO(d)$-invariant measure on $\,SO(d)/SO(d-2)$. \,In the same notation,
the $\,SO(2d)$-invariant normalized measure on 
$\,S^{2d-1}\,$ takes the form
\qq
d\mu_0\ =\ \eta_0(x,y)\,dx\hspace{0.03cm}dy\,d[O]\,.
\label{dmu0}
\qqq
for
\qq
\eta_0(x,y)\ =\ \frac{(d-1)2^{d-1}y^{d-2}}{\pi(1+x^2+y^2)^d}\,,
\label{eta0}
\qqq
as is shown in Appendix \ref{app:A}.
\,As before, it follows from the general analysis that the function
\qq
n(x,y)\ =\ \frac{\eta(x,y)}{\eta_0(x,y)}
\qqq
is smooth and positive on the sphere $\,S^{2d-1}=\{({\bm\rho},{\bm\chi})\ |\ 
{\bm\rho}^2+{\bm\chi}^2=R^2\}$. \,In particular, this implies that
\qq
\eta(x,y)\ =\ \CO(y^{d-2})\qquad{\rm for}\qquad y\searrow 0
\qqq
i.e. for $\,{\bm\rho}\,$ and $\,{\bm\chi}\,$ becoming parallel or
$\,{\bm\chi}^2\,$ becoming small  and 
\qq
\eta(x,y)\ =\ \CO(|x|^{-2d})\qquad{\rm for}\qquad |x|\to\infty 
\qqq
when $\,{\bm\rho}^2\to0\,$ but the angle between $\,{\bm\rho}\,$ and 
$\,{\bm\chi}\,$ stays away from a multiple of $\,\frac{\pi}{2}$.
\,The smoothness and positivity of $\,n(x,y)\,$ on $\,S^{2d-1}\,$
imply \,(again by changing variables $\ y\mapsto\sqrt{1+x^2}\,y\ $ 
in the integral) \,that now
\qq
\int\limits_0^\infty\eta(x,y)\,dy\ =\ \CO(|x|^{-d-1})\qquad{\rm for}\qquad 
|x|\to\infty\,.
\label{bound3}
\qqq
\vskip 0.2cm

A straightforward calculation shows that, in three or more dimensions,
\,the action on $\,L\,$ on $\,SO(d)$-invariant functions $\,f(x,y)\,$ 
is given by a generalization of Eq.\,(\ref{Lonf2}):
\qq
(Lf)(x,y)&=&-\,\frac{_1}{^\tau}(x^2-y^2+x)\,\partial_xf(x,y)\cr
&&-\,\frac{_1}{^\tau}(2xy+y-\frac{_{\tau A(d-2)}}{^{2y}})\,
\partial_yf(x,y)\cr
&&+\,\frac{_1}{^2}(A+2B)\,\partial_x^2f(x,y)\,+\,\frac{_1}{^2}A
\,\partial_y^2f(x,y)\,.
\label{Lonf3}
\qqq
It coincides with the generator of the process
$\,z(t)=(x+iy)(t)\,$ in the complex plane given by the SDE \cite{BCH1,Horvai}
\qq
dz\ =\ -\frac{_1}{^\tau}\big(z+z^2-i\frac{_{\tau A(d-2)}}{^{2\,{\rm Im}(z)}}
\big)\hspace{0.02cm}dt\,+\,\sqrt{A+2B}\,d\beta^1(t)\,+\,i\sqrt{A}\,d\beta^2(t)
\label{3dequ}
\qqq
which upon setting $\,d=2\,$ reduces to the SDE (\ref{2dequ}).

\nsection{{\bf ABSENCE OF EXPLOSION IN THE COMPLEX (HALF-)PLANE}}
\label{sec:d>=2}


\noindent Let us set 
\begin{equation}
Q_d=\begin{cases}
\,\mathbb{R}^2 & \text{ if }\quad d =2\,,\\
\,\mathbb{H}_{+} & \text{ if }\quad d\geq 3\,,
\end{cases}
\end{equation}
where 
\begin{equation*}
\mathbb{H}_{+}=\{(x,y)\ |\ y>0 \}
\end{equation*}
is the upper-half plane. \,Note that $\,Q_d\,$ may be 
identified with an open dense subset of the quotient space 
$\,P_d=S^{2d-1}/SO(d)\,$ using the $\,SO(d)$-invariants 
(\ref{inv2}) or (\ref{inv3}) on $\,S^{2d-1}$. \,We shall often use 
the complex combination $\,x+iy\,$ as a coordinate on $\,Q_d$.
\vskip 0.1cm

In the present section,
we shall show that for $\,d\geq 2\,$ the unique solution 
of the SDE (\ref{3dequ}) starting from $\,z\in Q_d\,$ remains in 
$\,Q_d\,$ for all times $t\geq 0$ with probability one. \,This will
also have to be the property of the projection of the process $\,{\bm\pi}(t)\,$
to the quotient space $\,P_d=S^{2d-1}/SO(d)\,$ when described 
in the complex coordinate $\,z=x+iy$. \,Indeed, the coincidence of
the generators of the two processes will assure that they have the same
law. \,Let us start by generalizing and simplifying \eqref{3dequ}. 
\vskip 0.1cm

Let $\,w(t)=z(t)+1/2$, where $\,z(t)\,$ solves 
\eqref{3dequ} with $z(0)=x+iy\in Q_d$.  \,Clearly, $\,w(t)\,$ satisfies 
an SDE of the form 
\begin{equation}
\label{sdew}dw=\frac{_1}{^\tau}\left(-w^{2}+\alpha +i\frac{_{\tau b(d-2)}}
{^{\rm{Im}(w)}}\right)dt + \sqrt{\frac{_{2\kappa_{1}}}{^\tau}} 
d\beta^{1}(t) + i\sqrt{\frac{_{2\kappa_{2}}}{^\tau}}d\beta^{2}(t),
\end{equation}
where $\alpha=a_{1}+ia_{2}\in \mathbb{C}$, $b>0$, $\kappa_{1}\geq 0$, $\kappa_{2}>0$, and $\beta^{1}(t)$ and $\beta^{2}(t)$ are two independent Brownian 
motions.  When $\,d=2$, \,the term proportional to $\,b(d-2)\,$ is
absent from \eqref{sdew}. \,When $\,d\geq 3$, \,we suppose that 
$\,\tau b(d-2)\geq \kappa_{2}$.  \,Clearly in \eqref{3dequ}, all of these 
assumptions are met under the given substitution.  Since $\,w(t)\,$ is a 
horizontal shift of $\,z(t)$, $\,w(t)\,$ stays in $\,Q_d\,$ 
with probability one for all times if and only if $\,z(t)\,$ does. 
\,Employing methods of refs. \cite{R-B, Hasminskii-1980, MT3}, we shall 
estimate the time at which the process $\,w(t)\,$ leaves $\,Q_d$.  
\,To this end, it is easy to see that there exists a sequence of precompact 
open subsets $\{ O_{n}\,|\,n\in \mathbb{N}\}$ of $\,Q_d\,$ such that 
$\,O_{n}\uparrow Q_d\,$ as $\,n\rightarrow \infty$.  \,Thus we may 
define stopping times:
\begin{equation}
\tau_{n}= \inf\{s>0\ |\ w(s)\in O^{c}_{n} \},
\end{equation}          
for $\,n\in \mathbb{N}$.  \,Let $\,\tau_\infty\,$ be the finite 
or infinite limit 
of $\,\tau_{n}\,$ as $\,n\rightarrow \infty$.  

\begin{defn}
We say that the solution $\,w(t)\,$ is \emph{non-explosive} if 
\begin{equation}\label{EXP}
P\left[\tau_\infty=\infty  \right]=1.
\end{equation}
\end{defn}
\noindent Naturally, in order to show that $\,w(t)\,$ remains in 
$\,Q_d\,$ for $\,t\geq 0\,$ with probability one, it is enough 
to prove that $\,w(t)\,$ is non-explosive.  

Let $\,M\,$ be the generator of the process $\,w(t)=x(t)+iy_{t}$.  \,We see 
that for $\,f\in C^{\infty}(Q_{d})\,$:
\begin{eqnarray}
\label{genw}(M_{d}f)(x,y)&=&-\frac{_1}{^\tau}(x^{2}-y^{2}-a_{1}) 
\partial_{x}f(x,y) -\frac{_1}{^\tau}(2xy-a_{2}-  \tau b (d-2)y^{-1}) 
\partial_{y}f(x,y)\\
\nonumber &\,&+\, \frac{_{\kappa_{1}}}{^\tau} \partial_{x}^{2}f(x,y) + 
\frac{_{\kappa_{2}}}{^\tau} \partial_{y}^{2}f(x,y),
\end{eqnarray}
where the term $\tau b(d-2) y^{-1}$ is absent for $d=2$.  
Let us define
\begin{equation}
\partial Q_d=
\begin{cases}
\infty &\text{ if\quad} d=2\,,\\
\{(x,y) \in \mathbb{R}^2\ |\ y=0 \}\cup  \infty &\text{ if\quad} d\geq 3\,,
\end{cases}
\end{equation}
with $\,\infty\,$ denoting the point compactifying $\,\mathbb{R}^2$.
\,To ensure condition \eqref{EXP}, it suffices to construct a (Lyapunov) function 
$\,\Phi_{d}\in \mathbb{C}^{\infty}(Q_d)$ that satisfies:
\begin{enumerate}[(I)]
\item \label{I}$\Phi_{d}(x,y) \geq 0\,$ for all $\,(x,y)\in Q_d\,$,
\item \label{II} $\Phi_{d}(x,y) \rightarrow \infty\,$ as $\,(x,y) \rightarrow 
\partial Q_d$, $\,(x,y)\in Q_d\,$,
\item \label{III} $M_{d} \Phi_{d}(x,y) \leq C \Phi_{d}(x,y)\,$ 
for all $\,(x,y)\in Q_{d}$, \,where $\,C>0\,$ is a positive constant.  
\end{enumerate}
See, for example, \cite{MT3}.  We will show:
\begin{theo}\label{TH1}
If $\,\kappa_1\geq0\,$ and $\,\tau b(d-2)\geq\kappa_2>0\,$ then
there exists $\,\Phi_{d}\in C^{\infty}(Q_d)\,$ that satisfies 
\eqref{I}, \eqref{II}, and 
\begin{equation}
\label{IV}\tag{IV} M_{d} \Phi_{d}(x,y) \rightarrow -\infty \text{\ as\ } 
(x,y)\rightarrow \partial Q_d,\ (x,y)\in Q_d.
\end{equation}
\end{theo}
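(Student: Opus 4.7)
I would construct $\Phi_d$ as a sum $\Phi_d=\Phi_d^{(0)}+\Phi_d^{(\infty)}$, where the two pieces handle the two components of $\partial Q_d$ separately: $\Phi_d^{(0)}$ blows up as $y\searrow 0$ (relevant only for $d\geq 3$, since for $d=2$ the boundary reduces to the point at infinity), while $\Phi_d^{(\infty)}$ blows up as $(x,y)\to\infty$. For $d=2$ one simply takes $\Phi_d^{(0)}\equiv 0$.

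For the $y=0$ piece (assuming $d\geq 3$) the natural ansatz is a power $\Phi_d^{(0)}(y)=y^{-\alpha}$ with $\alpha>0$. Inserting into the generator \eqref{genw} and keeping only the leading singular terms gives
\begin{equation*}
M_d\, y^{-\alpha}\;=\;-\,\frac{\alpha\bigl[\tau b(d-2)-(\alpha+1)\kappa_{2}\bigr]}{\tau}\,y^{-\alpha-2}\;+\;\mathrm{O}(y^{-\alpha-1})\qquad\text{as }y\searrow 0.
\end{equation*}
Under the strict hypothesis $\tau b(d-2)>\kappa_{2}$ any $\alpha\in\bigl(0,\,\tau b(d-2)/\kappa_{2}-1\bigr)$ already gives $M_d\Phi_d^{(0)}\to-\infty$. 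The borderline case $\tau b(d-2)=\kappa_{2}$ (which arises for $d=3$ with the coefficients inherited from \eqref{3dequ}) is more delicate, since the leading $y^{-\alpha-2}$ coefficient then vanishes: one must introduce a logarithmic or sub-power correction and rely on the sub-leading $\mathrm{O}(y^{-\alpha-1})$ contribution being dominant and of the right sign.

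For the infinity piece the obstacle is that the quadratic drift $-w^{2}/\tau$ is not inward-pointing in every direction. Writing $w=re^{i\theta}$, the deterministic part of \eqref{sdew} yields $\dot r=-r^{2}\cos\theta/\tau$ and $\dot\theta=-r\sin\theta/\tau$, so any purely radial Lyapunov function $r^{p}$ only produces $M_d r^{p}\sim -p\,r^{p+1}\cos\theta/\tau$, which fails (IV) along the directions with $\cos\theta<0$. Since the angular drift attracts $\theta$ toward $0$, I would seek $\Phi_d^{(\infty)}$ of product form $r^{p}g(\theta)$ with $g>0$ sufficiently peaked away from $\theta=0$, so that the angular part of $M_d\Phi_d^{(\infty)}$ contributes a dominant negative multiple of $r^{p+1}$ uniformly in $\theta$; equivalently, one may work in the conformal variable $\zeta=-1/w$ (which sends $\infty\mapsto 0$ and conjugates the quadratic drift to the constant $-1/\tau$ near the origin), and look for a function of the form $|\zeta|^{-p}$ modulated by a factor depending on $\arg\zeta$. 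The main difficulty I expect is precisely this construction of a non-radial $\Phi_d^{(\infty)}$, alongside the borderline $y=0$ case noted above. Once the two pieces are chosen, (I) and (II) are immediate, (IV) follows from the two leading-order asymptotics, and $C^{\infty}(Q_d)$ smoothness is ensured by gluing with cut-offs where necessary.
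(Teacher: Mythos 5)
Your decomposition $\Phi_d = \Phi_d^{(0)} + \Phi_d^{(\infty)}$ has the right conceptual shape, and you correctly identify the two difficulties (the borderline case $\tau b(d-2)=\kappa_2$, and the explosion along the negative real axis). However, the proposed constructions would not close the argument, and the key mechanism the paper uses is missing.

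The central gap is in $\Phi_d^{(\infty)}$. You propose $r^{p}g(\theta)$ with $g>0$ ``peaked away from $\theta=0$,'' hoping the angular drift supplies a uniform negative $r^{p+1}$. But the drift part of $M_d(r^p g)$ is
\begin{equation*}
-\frac{r^{p+1}}{\tau}\bigl[p\cos\theta\cdot g(\theta)+\sin\theta\cdot g'(\theta)\bigr],
\end{equation*}
and at $\theta=\pi$ the angular term vanishes identically since $\sin\pi=0$, leaving $+\,p\,r^{p+1}g(\pi)/\tau>0$. No choice of a fixed positive angular profile can fix this: the drift alone \emph{explodes} along the negative real axis (this is exactly the nonrandom explosion of $\dot w=-w^2/\tau$ from $w<0$). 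The only way to get a negative sign there is to let the $\kappa_2\partial_y^2$ diffusion term do the work. This is what the paper's $\varphi_5=C_5\bigl(E|x|^\beta-y^2|x|^{\beta+1}\bigr)$, defined in the shrinking tube $\{x\le -1,\ |y|\le\xi^{-1}\sqrt{E/|x|}\}$, accomplishes: the drift contributes $+E\beta C_5|x|^{\beta+1}$, the diffusion contributes $-2\kappa_2 C_5|x|^{\beta+1}$, and one needs $2\kappa_2>E\beta$. That function is not of the form $r^pg(\theta)$ — its effective angular factor depends on $r$ (the region is $|\theta-\pi|\lesssim r^{-3/2}$) — and you do not supply any replacement construction.

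Two further issues. First, $\Phi_d^{(0)}=y^{-\alpha}$ requires the strict inequality $\tau b(d-2)>\kappa_2$, which fails precisely at $d=3$ where $\tau b(d-2)=\kappa_2$; you flag this but do not resolve it. The paper uses $\log(1+\log^2(\eta y/2))$ instead, whose iterated-log structure makes the second-order term dominate without any room to spare and, crucially, grows so slowly that it does not feed back into the behavior near $\infty$. Second, your two-piece sum cannot be analyzed ``leading order in each boundary component separately'': $M_d y^{-\alpha}$ contains the term $\tfrac{2\alpha x}{\tau}y^{-\alpha}$, which is $O(y^{-\alpha-1})$ only for bounded $x$ and in fact tends to $+\infty$ as $x\to\infty$ at fixed $y$, so the $y$-boundary piece and the $\infty$-piece interfere. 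The paper handles this interference by choosing the $y$-boundary correction logarithmic and then verifying the combined estimate case by case. The paper's proof is actually a patchwork of five regional Lyapunov functions glued by cutoffs (plus a rescaling lemma to adjust $\kappa_2$), a structure your plan does not anticipate; the regional gluing is what allows the sign-changing drift coefficients to be matched against appropriate local ansätze.
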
 

\noindent Given such $\,\Phi_{d}$, \,clearly $\,\Phi_d+1\,$ will satisfy 
\eqref{I}, \eqref{II} and \eqref{III}.  We will then have: 

\begin{theo}\label{TH2}
Under the assumptions of Theorem \ref{TH1}, \,the solution $\,w(t)\,$ of 
the SDE \eqref{sdew} stays in $\,Q_d\,$ for all
times $\,t> 0\,$ with probability one if $\,w(0)=x+iy\in Q_d$.
%
\end{theo}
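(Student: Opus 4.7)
The plan is to invoke the standard Hasminskii-type Lyapunov criterion for non-explosion (see \cite{Hasminskii-1980, MT3, R-B}), using the function $\,\Phi_d\,$ supplied by Theorem \ref{TH1}. First I would replace $\,\Phi_d\,$ by $\,\Psi_d:=\Phi_d+1$: conditions \eqref{I}, \eqref{II} are inherited immediately, and, as the authors note, \eqref{IV} together with continuity of $\,M_d\Phi_d\,$ on $\,Q_d\,$ forces $\,M_d\Phi_d\,$ to be bounded above on all of $\,Q_d\,$ by some constant $\,C_0$; combined with $\,\Phi_d\geq 0\,$ this yields $\,M_d\Psi_d\leq C\,\Psi_d\,$ for a suitable $\,C>0$, i.e., property \eqref{III}.

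Next I would apply It\^o's formula to $\,\ee^{-Ct}\Psi_d(w(t))\,$ stopped at $\,t\wedge\tau_n$. Property \eqref{III} makes the drift term nonpositive, and the local-martingale term is a genuine martingale on $\,[0,t\wedge\tau_n]\,$ since on the precompact set $\,O_n\,$ the coefficients of \eqref{sdew} are bounded (both the $\,1/\mathrm{Im}(w)\,$ singularity and the quadratic growth at infinity are avoided). Taking expectations gives
\begin{equation*}
\mathbb{E}\!\left[\ee^{-C(t\wedge\tau_n)}\Psi_d(w(t\wedge\tau_n))\right]\,\leq\,\Psi_d(w(0)).
\end{equation*}
By \eqref{II} every sublevel set $\,\{\Psi_d\leq K\}\,$ is precompact in $\,Q_d\,$ and hence contained in some $\,O_{n_0}$, which forces $\,m_n:=\inf_{\partial O_n}\Psi_d\to\infty\,$ as $\,n\to\infty$. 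On the event $\,\{\tau_n\leq t\}\,$ the stopped trajectory lies on $\,\partial O_n$, so $\,\Psi_d(w(\tau_n))\geq m_n$, and the preceding inequality gives $\,P[\tau_n\leq t]\leq \ee^{Ct}\Psi_d(w(0))/m_n\to 0$. Sending $\,n\to\infty\,$ and then $\,t\to\infty\,$ produces $\,P[\tau_\infty=\infty]=1$, which is exactly the non-explosion of $\,w(t)\,$ in $\,Q_d\,$ asserted in Theorem \ref{TH2}.

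The substantive difficulty is hidden inside Theorem \ref{TH1}: one must construct a smooth nonnegative $\,\Phi_d\,$ that blows up simultaneously at infinity and (for $\,d\geq 3$) along the singular line $\,\{y=0\}$, while the nonlinear drift $\,-w^2+\alpha\,$ and the singular repulsion $\,\tau b(d-2)/\mathrm{Im}(w)\,$ in \eqref{sdew} act on it in such a way that \eqref{IV} still holds. Given such a $\,\Phi_d$, the reduction to non-explosion sketched above is a routine semimartingale argument; the hard part is the explicit choice of $\,\Phi_d\,$ and, in particular, arranging for the singular $\,1/y\,$ term (which diverges to $\,+\infty\,$) not to destroy \eqref{IV}, where the constraint $\,\tau b(d-2)\geq\kappa_2\,$ presumably enters.
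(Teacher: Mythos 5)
Your proposal is correct and follows essentially the same route as the paper: the authors observe that $\Phi_d+1$ satisfies \eqref{I}--\eqref{III} and then invoke the standard Foster--Lyapunov non-explosion criterion by citing \cite{MT3,Hasminskii-1980}, exactly the argument you spell out via It\^o's formula and the stopping times $\tau_n$. The only difference is cosmetic -- you write out the semimartingale and sublevel-set details that the paper delegates to the references -- and you correctly identify that the genuine work lives in Theorem \ref{TH1}.
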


\begin{cor}
\label{cor:nonexpl}
This implies the same result about the solution $\,z(t)\,$ of
the SDE \eqref{3dequ} with $\,A>0\,$ and $\,A+2B\geq0$.
\end{cor}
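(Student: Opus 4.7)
The proof plan for Corollary \ref{cor:nonexpl} is essentially to verify that the substitution $w(t)=z(t)+1/2$ carries \eqref{3dequ} into an SDE of the form \eqref{sdew} whose coefficients satisfy the hypotheses of Theorem \ref{TH2}, and that $Q_d$ is invariant under the horizontal shift $z\mapsto z+1/2$.

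First I would compute the coefficients arising from the substitution. Since $z+z^2=w^2-\tfrac14$, the drift of $w(t)$ takes the form $\frac{1}{\tau}(-w^2+\alpha+i\frac{\tau b(d-2)}{\mathrm{Im}(w)})$ with $\alpha=\tfrac14$, and the imaginary-drift coefficient matches if $b=A/2$. Comparing the diffusion part of \eqref{3dequ} with that of \eqref{sdew} gives $\kappa_1=\tau(A+2B)/2$ and $\kappa_2=\tau A/2$. Under the standing assumptions $A>0$ and $A+2B\geq 0$, this yields $\kappa_1\geq 0$ and $\kappa_2>0$. The remaining hypothesis of Theorem \ref{TH2}, namely $\tau b(d-2)\geq\kappa_2$, reduces to $\tau A(d-2)/2\geq\tau A/2$, i.e.\ $d\geq 3$, which holds precisely in the regime where that term is present; for $d=2$ the term is absent and the hypothesis is vacuous.

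Next I would observe that the translation $z\mapsto z+1/2$ preserves $Q_d$: for $d=2$ we have $Q_2=\mathbb{R}^2$, and for $d\geq 3$ the condition $\mathrm{Im}(w)>0$ is equivalent to $\mathrm{Im}(z)>0$ since the shift is real. Consequently $z(0)\in Q_d$ if and only if $w(0)\in Q_d$, and by the same equivalence applied pathwise, $z(t)\in Q_d$ for all $t\geq 0$ if and only if $w(t)\in Q_d$ for all $t\geq 0$. Applying Theorem \ref{TH2} to $w(t)$ then yields the conclusion for $z(t)$.

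I do not anticipate a real obstacle here; the content is a bookkeeping check of parameters plus translation invariance of $Q_d$. The one place where care is needed is the inequality $\tau b(d-2)\geq\kappa_2$, which is tight when $d=3$ and $A+2B$ is small, so I would write out that check explicitly to confirm the hypotheses of Theorem \ref{TH2} are met on the nose for every $d\geq 3$.
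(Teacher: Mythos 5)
Your proposal is correct and takes essentially the same approach as the paper: the paper's preceding paragraphs set up the shift $w=z+1/2$, observe that $z+z^2=w^2-\tfrac14$, identify $\alpha=\tfrac14$, $b=A/2$, $\kappa_1=\tau(A+2B)/2$, $\kappa_2=\tau A/2$, and note the hypothesis check (including $\tau b(d-2)\geq\kappa_2$ reducing to $d\geq 3$); the corollary is then immediate from Theorem~\ref{TH2} and the fact that the real shift preserves $Q_d$.
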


The existence of the Lyapunov function with the properties 
asserted in Theorem \ref{TH1} has another consequence. It
allows to show that
\qq
\lim\limits_{n\to\infty}\,\liminf\limits_{T\to\infty}\,\frac{_1}{^T}\int_0^T
P_t(w,O_n^c)\,dt\,=\,0
\qqq
for the SDE \eqref{sdew} and $\,O_{n}\uparrow Q_d\,$ as before,
implying the existence of an invariant measure on $\,Q_d$,
\,see Theorems 4.1 and 5.1 in Chapter III of \cite{Hasminskii-1980}. 
If the generator of the process is
elliptic, then the same tools that we used for the projectivized dispersion
(i.e. hypoellipticity and control theory \cite{R-B}) show that the 
invariant measure must have a smooth strictly positive 
density and be unique. \,This gives:

\begin{theo}\label{TH3} 
Under the assumptions of Theorem \ref{TH1}, \,the system \eqref{sdew} on
$\,Q_d\,$ has an invariant measure which is unique and has a smooth strictly 
positive density if $\,\kappa_{1}>0$. 
\end{theo}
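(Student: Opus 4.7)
The plan is to mimic the strategy of Section \ref{sec:ergod} for the projectivized dispersion, now on the non-compact space $Q_d$: existence of an invariant measure will come from the Lyapunov function of Theorem \ref{TH1}, while uniqueness, smoothness and strict positivity of its density will rely on the fact that, when $\kappa_1>0$, the generator $M_d$ is uniformly elliptic on compact subsets of $Q_d$ and the associated deterministic control system is trivially fully controllable.

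For existence, I would start from the fact that $\Phi_d+1$ satisfies \eqref{I}, \eqref{II} and \eqref{III}. Applying Dynkin's formula to $\Phi_d+1$ stopped at $\tau_n$ (Theorem 4.1 in Chapter III of \cite{Hasminskii-1980}) then yields, beyond non-explosion, the tightness estimate
\begin{equation*}
\lim_{n\to\infty}\,\liminf_{T\to\infty}\,\frac{1}{T}\int_0^T P_t(w,O_n^c)\,dt\,=\,0,
\end{equation*}
which in turn produces, via a Krylov--Bogolyubov argument (Theorem 5.1 in Chapter III of \cite{Hasminskii-1980}), an invariant probability measure on $Q_d$ as a weak accumulation point of the Cesàro averages $T^{-1}\int_0^T P_t(w,\cdot)\,dt$.

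For smoothness and positivity, note that with $\kappa_1>0$ the second-order part of $M_d$ in \eqref{genw} is the elliptic operator $(\kappa_1/\tau)\partial_x^2+(\kappa_2/\tau)\partial_y^2$. Classical elliptic regularity, a special case of H\"{o}rmander's theorem, then gives the hypoellipticity of $M_d$, $M_d^\dagger$, $\partial_t-M_d$, $\partial_t-M_d^\dagger$ and $2\partial_t-M_d\otimes 1-1\otimes M_d^\dagger$ on $Q_d$. Consequently, any invariant measure has a smooth density with respect to Lebesgue measure, and the transition probabilities $P_t(w_0;\cdot)$ have smooth densities for $t>0$. Strict positivity of these transition densities follows from the Stroock--Varadhan support theorem once one observes that, because $\kappa_1,\kappa_2>0$, both directions $\partial_x$ and $\partial_y$ are actuated; thus for any piecewise smooth path $[0,T]\ni t\mapsto w(t)\in Q_d$ joining $w_0$ to $w_1$ one can recover smooth controls $u_1(t),u_2(t)$ solving the controlled ODE, and such a path is trivial to exhibit because $Q_d$ is either $\mathbb R^2$ or the open, path-connected set $\mathbb H_+$.

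Uniqueness of the invariant measure then follows as in Section \ref{sec:ergod}: two distinct ergodic invariant measures would have disjoint supports, contradicting the strict positivity of their smooth densities inherited from the invariance relation and $P_t(w_0;w)>0$. The main subtlety I anticipate is in the Khasminskii-type tightness step, since $\partial Q_d$ for $d\ge 3$ consists both of $\{y=0\}$ and of the point at infinity; this is precisely why the Lyapunov function $\Phi_d$ of Theorem \ref{TH1} was engineered to blow up along the entire $\partial Q_d$, and the existence argument reduces to a careful application of Dynkin's formula exploiting \eqref{II} and \eqref{IV}.
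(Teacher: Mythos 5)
Your proposal is correct and follows essentially the same route as the paper: existence via the Lyapunov function $\Phi_d+1$, the Cesàro-average tightness estimate, and Khasminskii's Theorems 4.1 and 5.1 (Chapter III); then, since $\kappa_1>0$ makes $M_d$ elliptic, hypoellipticity and trivial controllability (both $\partial_x$ and $\partial_y$ are actuated) give smoothness and strict positivity of the transition and invariant densities, from which uniqueness follows exactly as in Section \ref{sec:ergod}. The paper's own argument is just the brief paragraph preceding the theorem statement, and your write-up fills in the same steps with the same tools.
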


\begin{rem}
\label{rem:uniq}
Theorem \ref{TH3} allows to reaffirm and strengthen
what has already been proven earlier since it implies 
the existence of an invariant measure for the system \eqref{3dequ} 
if $\,A>0\,$ and $\,A+2B\geq0\,$ and its uniqueness if $\,A+2B>0$. 
\,Given the non-explosivity result of Corollary \ref{cor:nonexpl}, 
\,the approach taken earlier implied the existence 
and the uniqueness of an invariant measure for the system \eqref{3dequ} 
under more stringent conditions: $\,A>0$, $\,A\geq|B|\,$ and 
$\,A+(d+1)B\geq0$.
\end{rem}
  
The construction of the Lyapunov function $\,\Phi_{d}\,$ is split up into 
two cases: $\,d=2\,$ and $\,d\geq 3$.  
\,The existence of $\Phi_{d}$ for $d\geq 3$ will be easy, given
$\,\Phi_{2}$.  Thus we shall first construct $\,\Phi_{2}$.  

\subsection{$d=2\,$ case}  
\label{subsec:d=2}

\ 

\vskip 0.1cm

\noindent It is not easy to write down a globally defined function 
$\,\Phi_{2}\,$ that satisfies \eqref{I}, \eqref{II}, and \eqref{IV} 
in all of $\,Q_2=\mathbb{R}^{2}$.  \,This is because the signs of the 
coefficients of the vector fields in $\,M_{2}\,$ vary over different regions 
in $\,\mathbb{R}^{2}$.  \,We shall thus construct functions that satisfy these 
properties in different regions, the union of which is $\,\mathbb{R}^{2}$.  
\,We shall then glue together these functions to form one single globally 
defined $\,\Phi_{2}$.  \,One should note that this idea is similar in spirit 
to that of M. Scheutzow in \cite{MS}.  \,Let $\,r=\sqrt{x^{2}+y^{2}}$.  \,For 
the rest of Subsection \ref{subsec:d=2}, we will drop the use of the 
subscript $2$ in $\,M_{2}\,$ and $\,\Phi_{2}$.  \,We first need the following:     

\begin{defn}
Let $\,X\subset \mathbb{R}^{2}\,$ be unbounded.  We say that a function 
$\,f(x,y) \rightarrow \pm \infty\,$ \emph{as} $\,r\rightarrow \infty\,$ 
\emph{in} $\,X\,$ if $\,f(x,y)\rightarrow \pm \infty\,$ as $\,(x,y) \to 
\infty$, $\,(x,y) \in X$.
\end{defn}      

\begin{defn}\label{DEF}
Let $\,\,X\subset \mathbb{R}^{2}\,$ be unbounded and let 
$\,\varphi\in C^{\infty}(X)\,$ satisfy 
\begin{enumerate}[(i)]
\item $\varphi \geq 0\,$ for all $\,(x,y)\in X$,
\item $\varphi \rightarrow \infty\,$ as $\,r\rightarrow \infty\,$ in $\,X$,
\item $M \varphi \rightarrow -\infty\,$ as $\,r\rightarrow \infty\,$ 
in $\,X$.
\end{enumerate}
We call $\varphi$ a \emph{Lyapunov function in} $\,X\,$ \emph{corresponding 
to} $\,M\,$ and denote
\begin{equation*}
\mathcal{N}(\alpha,\kappa_{1},\kappa_{2}, X)\ =\ 
\left\{\text{\,Lyapunov functions 
in } X \text{\ corresponding to } M\, \right\}.
\end{equation*}
We shall abbreviate ``Lyapunov function'' by LF.  
\end{defn}

\begin{defn}
Let $\,X\subset \mathbb{R}^{2}\,$ be unbounded and 
$\,f,g:X\rightarrow \mathbb{R}$.  \,We shall say that $\,f\,$ 
\emph{is asymptotically equivalent to} $\,g\,$ \emph{in} $\,X\,$ 
and write $\,f\simeq_{X} g\,$ if $$\lim_{r\rightarrow \infty}\,\frac{f(x,y)}
{g(x,y)}\,=\,1\,,$$ where the limit is taken only over points $\,(x,y)\in X$. 
\end{defn}


It is clearly sufficient to construct LFs in regions that 
cover $\,\mathbb{R}^{2}$, \,except, possibly, a large ball about the origin.  
The constructions will be done in a series of propositions.  The possibly 
daunting multitude of parameters is designed to make the gluing possible.  
There is a total of five LFs in five different regions and 
the details that follow are not difficult to verify.  The crucial LF is 
the fifth one, $\,\varphi_{5}$, \,defined in a region where 
explosion occurs in a nonrandom equation, i.e., when 
$\,\alpha=\kappa_{1}=\kappa_{2}=0\,$ in \eqref{sdew}.  

\begin{prop}
Let $\,X_{1}=\{x\geq 1\}\subset \mathbb{R}^{2}$, $\,C_{1}>0$, \,and 
$\,\delta \in(0, 1/2)$.  \,Define 
\begin{equation}
\varphi_{1}(x,y)=C_{1}(x^{2}+y^{2})^{\delta/4}\tag{LF1}.
\end{equation}  
We claim that $\,\varphi_{1} \in \mathcal{N}(\alpha,\kappa_{1}, \kappa_{2},
X_{1})\,$ for all $\,\alpha \in \mathbb{C}$, $\,\kappa_{1}\geq 0,\  
\kappa_{2} >0$.
\end{prop}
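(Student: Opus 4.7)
The plan is to verify the three conditions of Definition \ref{DEF} directly, by substituting $\varphi_1=C_1 r^{\delta/2}$, with $r=\sqrt{x^2+y^2}$, into the generator \eqref{genw} specialized to $d=2$. Conditions (i) and (ii) are immediate from $C_1>0$ and $\delta>0$. The upper bound $\delta<1/2$ appears to play no role for the LF property of $\varphi_1$ in this one region and is presumably being reserved for the gluing of the five LFs into a single global $\Phi_2$ in subsequent propositions.

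For (iii), a short computation gives $\partial_x\varphi_1=\tfrac{C_1\delta}{2}x\,r^{\delta/2-2}$, $\partial_y\varphi_1=\tfrac{C_1\delta}{2}y\,r^{\delta/2-2}$, and each of $\partial_x^2\varphi_1$, $\partial_y^2\varphi_1$ is $O(r^{\delta/2-2})$ uniformly in direction, so the full diffusion contribution $\tfrac{\kappa_1}{\tau}\partial_x^2\varphi_1+\tfrac{\kappa_2}{\tau}\partial_y^2\varphi_1$ is negligible as $r\to\infty$. The crux is the drift piece, which after factoring becomes
$$-\tfrac{C_1\delta}{2\tau}\,r^{\delta/2-2}\bigl[x(x^2-y^2)+2xy^2-a_1 x-a_2 y\bigr].$$
The decisive observation is the algebraic identity $x(x^2-y^2)+2xy^2=x(x^2+y^2)=xr^2$, which collapses the cubic drift into the clean leading contribution $-\tfrac{C_1\delta}{2\tau}\,x\,r^{\delta/2}$, plus an $\alpha$-correction bounded in modulus by $\tfrac{C_1\delta|\alpha|}{2\tau}\,r^{\delta/2-1}$.

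On $X_1=\{x\geq 1\}$ we have $x\geq 1$, so the leading term is bounded above by $-\tfrac{C_1\delta}{2\tau}\,r^{\delta/2}$, which dominates both the $O(r^{\delta/2-1})$ drift correction and the $O(r^{\delta/2-2})$ diffusion; hence $M\varphi_1\to-\infty$ as $r\to\infty$ in $X_1$, which is (iii). The only step that is not purely mechanical is spotting the cubic cancellation $x(x^2-y^2)+2xy^2=xr^2$: without it, the would-be leading $r^{\delta/2+1}$ drift term carries no definite sign on the half-plane $\{x\geq 1\}$ and no radial power of $r$ could serve as an LF there. Everything else is elementary dominance of powers of $r$.
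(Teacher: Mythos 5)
Your proof is correct and takes essentially the same route as the paper: both verify (i)--(ii) trivially, drop the $O(r^{\delta/2-2})$ second-order contributions, and reduce $\tau M\varphi_1$ to the dominant term $-\tfrac{C_1\delta}{2}\,x\,r^{\delta/2}$, which tends to $-\infty$ on $\{x\geq 1\}$ while the $\alpha$-correction is $O(r^{\delta/2-1})$. Your explicit isolation of the cancellation $x(x^2-y^2)+2xy^2=xr^2$ is exactly the computation the paper carries out implicitly inside its $\simeq_{X_1}$ chain in \eqref{varphi1}, and your aside that $\delta<1/2$ plays no role here but is reserved for the later gluing is a correct reading.
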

\begin{proof}  $\,\varphi_{1}$ is nonnegative everywhere in $\,\mathbb{R}^{2}$,
\,hence everywhere in $\,X_{1}$.  $\,\varphi \rightarrow \infty\,$ 
as $\,r \rightarrow \infty\,$ in all of $\,\mathbb{R}^{2}$, \,hence in all 
of $\,X_{1}$.  \,It is easy to check that $\,\partial_{xx}\varphi_{1}\,$ 
and $\,\partial_{yy}\varphi_{1}\,$ both go to zero as $\,r \rightarrow \infty$.
\,Thus dropping second order terms in the expression for $\,M\varphi_{1}$, 
\,we have 
\begin{eqnarray}
\nonumber \tau M\varphi_{1}&\simeq_{X_{1}}&-\,\frac{{C_{1}\delta}}{2}\,x(x^{2}
+y^{2})^{\delta/4}\,+\,\frac{{C_{1}\delta}}{2}\,\frac{a_{1}x+a_{2}y}
{(x^{2}+y^{2})^{1-\delta/4}}\\
\label{varphi1}&\simeq_{X_{1}}&-\,\frac{C_{1}\delta}{2}\,
x(x^{2}+y^{2})^{\delta/4}\ \rightarrow\ -\infty 
\end{eqnarray}as $\,r \rightarrow \infty\,$ in $\,X_{1}$, \,since 
$\,x\geq 1$ in $\,X_{1}$.  
\end{proof}
We need a remark before we move onto the next region.  Let $\,\mathbb{R}
\subset \mathbb{R}^{2}\,$ be the real axis.    

\begin{rem}\label{REM}
Let $\,f(x,y)=u(x, |y|)\,$ be a twice differentiable function in 
$\,X\setminus \mathbb{R}$.  \,Then 
\begin{eqnarray*}
(\tau Mf)(x,y)&=& \kappa_{1}\, u_{xx}(x, |y|)\,+\,\kappa_{2}\, 
u_{|y||y|}(x, |y|) \\
&\,&+\, (y^{2}-x^{2}+a_{1})\,u_{x}(x, |y|)\,+\, 
(-2x|y|+\text{sgn}(y)a_{2})\,u_{|y|}(x, |y|).
\end{eqnarray*}
\begin{proof}
Apply the chain rule to the operator $\,\partial_{|y|}$.  
\end{proof}
\end{rem}

\noindent In the following arguments, often the function will be of the form 
$\,f(x,y)=u(x, |y|)$.  \,The above remark will allow for simplifications 
in the argument for property (iii) in Definition \ref{DEF}.  

\begin{prop}
Let $C_{2} > 0$, $\,\delta \in (0, 1/2)\,$ and 
\begin{equation}
\varphi_{2}=C_{2}(-x + |y|^{\delta/2})\tag{LF2}.
\end{equation}  Then $\,\varphi_{2} \in \mathcal{N}(\alpha,\kappa_{1}, 
\kappa_{2}, X_{2})\,$ for all $\,\alpha\in \mathbb{C}\,$ and all 
$\,\kappa_{1}\geq 0$, $\kappa_{2} >0$, \,where 
$$X_{2}\,=\,\{-2 \leq x \leq 2\}\,\cap\,\{|y|\geq 2^{2/\delta}\}.$$
\end{prop}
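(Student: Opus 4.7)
I would verify the three defining properties of Definition \ref{DEF} in turn, with property (iii) being the substantive part.

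For property (i), since $x \le 2$ and $|y|\ge 2^{2/\delta}$ in $X_2$, we have $|y|^{\delta/2}\ge 2 \ge x$, so $\varphi_2 = C_2(-x + |y|^{\delta/2}) \ge 0$. For property (ii), since $|x|\le 2$ is bounded in $X_2$, having $r\to\infty$ in $X_2$ forces $|y|\to\infty$, whence $|y|^{\delta/2}\to\infty$ and $\varphi_2\to\infty$.

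For property (iii), I would apply Remark \ref{REM} with $u(x,|y|) = C_2(-x + |y|^{\delta/2})$, noting
\[
u_x = -C_2,\quad u_{xx}=0,\quad u_{|y|} = \tfrac{C_2\delta}{2}|y|^{\delta/2-1},\quad u_{|y||y|} = \tfrac{C_2\delta}{2}\bigl(\tfrac{\delta}{2}-1\bigr)|y|^{\delta/2-2}.
\]
Substitution gives
\[
\tau M\varphi_2 \;=\; \kappa_2\tfrac{C_2\delta}{2}\bigl(\tfrac{\delta}{2}-1\bigr)|y|^{\delta/2-2} \;-\; C_2(y^2-x^2+a_1) \;+\; \tfrac{C_2\delta}{2}\bigl(-2x|y|+\mathrm{sgn}(y)a_2\bigr)|y|^{\delta/2-1}.
\]
As $r\to\infty$ in $X_2$ (equivalently $|y|\to\infty$ with $|x|\le 2$), the first term tends to $0$ since $\delta/2-2<0$; the third term is bounded in absolute value by a constant multiple of $|y|^{\delta/2}$ because $|x|\le 2$, so it is $o(y^2)$; and the second term, $-C_2(y^2-x^2+a_1)\simeq_{X_2}-C_2 y^2$, dominates. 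Hence $\tau M\varphi_2 \simeq_{X_2} -C_2 y^2 \to -\infty$, which establishes (iii).

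The only point that requires any care is the choice of region $X_2$: the lower bound $|y|\ge 2^{2/\delta}$ is calibrated precisely so that $|y|^{\delta/2}\ge 2$ dominates the $-x$ contribution (ensuring nonnegativity), and the bound $|x|\le 2$ keeps the possibly destabilizing cross-term $-2x|y|\cdot u_{|y|}$ subleading against the beneficial $-y^2$ arising from the $\partial_y$-drift $-2xy\cdot u_y$ contribution rewritten via Remark \ref{REM}. No difficulty arises from the parameters $\alpha\in\mathbb{C}$, $\kappa_1\ge 0$, $\kappa_2>0$: the $a_1,a_2$ terms contribute only $O(1)$ and $O(|y|^{\delta/2-1})$ respectively, and $\kappa_1$ does not appear at all because $u_{xx}=0$. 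Thus $\varphi_2\in\mathcal{N}(\alpha,\kappa_1,\kappa_2,X_2)$ for every admissible choice of parameters.
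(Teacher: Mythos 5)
Your proof is correct and follows the same route as the paper: both invoke Remark~\ref{REM} with $u(x,|y|)=C_2(-x+|y|^{\delta/2})$, compute the derivatives, drop the vanishing $\kappa_2$-term and the $O(|y|^{\delta/2})$ cross term, and isolate $-C_2y^2$ as the dominant contribution. The only item you leave implicit is that $\varphi_2\in C^\infty(X_2)$, which holds because $X_2$ is bounded away from the real axis, so $|y|^{\delta/2}$ is smooth there.
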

\begin{proof}
$\,\varphi_{2}\,$ is indeed smooth in $\,X_{2}\,$ since $\,X_{2}\,$ 
is bounded away from $\,\mathbb{R}$.  \,Note that the region was chosen 
so that $\,\varphi_{2} \geq 0\,$ in $\,X_{2}$.  \,Moreover, since $\,x\,$ 
is bounded in this region, $\,r\rightarrow \infty$ in $\,X_{2}\,$ if and
only if
$\,|y|\rightarrow \infty$.  \,Hence, $\,\varphi_{2}\rightarrow \infty\,$ 
in $\,X_{2}$.  \,By Remark \ref{REM} and noting that $\,\partial_{xx} 
\varphi_{2}=0\,$ and that $\,\partial_{|y||y|} \varphi_{2}\rightarrow 0\,$ 
as $\,|y|\rightarrow \infty$, \,we have 
\begin{eqnarray*}
\tau M \varphi_{2}(x,y)&\simeq_{X_{2}}& C_{2}\,(x^{2}-y^{2} - a_{1})
\,+\,C_{2}\frac{ \delta}{2}\,(-2x|y| +\text{sgn}(y)a_{2})\,
|y|^{\frac{\delta}{2}-1}\\
&\simeq_{X_{2}}&-\,C_{2}\,y^{2}\ \rightarrow\ -\infty
\end{eqnarray*}  
as $\,r\rightarrow \infty\,$ in $\,X_{2}$.  
\end{proof}

\begin{prop}
Let $\,C_{3} >0\,$ and $\,\delta\in (0, 1/2)$.  \,Define 
\begin{equation*}
\varphi_{3}=C_{3}\left( \frac{x^{2}+y^{2}}{|y|^{3/2}}\right)^{\delta}\tag{LF3}
\end{equation*} 
on $\,X_{3}=\{x\leq -1\}\cap\{|y|\geq 1\}$.  \,Then 
$\,\varphi_{3}\in \mathcal{N}(\alpha,\kappa_{1}, \kappa_{2}, X_{3})\,$ 
for all $\,\alpha\in \mathbb{C}$ and all $\,\kappa_{1}\geq 0$, $\kappa_{2} >0$. 
\end{prop}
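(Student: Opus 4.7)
The plan is to verify the three properties of Definition \ref{DEF} for $\varphi_{3}$ on the region $X_{3}=\{x\leq-1\}\cap\{|y|\geq 1\}$. Property (i) is immediate since $C_{3}>0$ and both $(x^{2}+y^{2})^{\delta}$ and $|y|^{-3\delta/2}$ are positive on $X_{3}$. For (ii), note that $|y|^{-3\delta/2}\leq 1$ on $X_{3}$, so the blow-up of $\varphi_{3}$ as $r\to\infty$ is driven by the factor $(x^{2}+y^{2})^{\delta}$; a brief case analysis ($|x|\to\infty$ with $|y|$ bounded, $|y|\to\infty$ with $|x|$ bounded, or both) yields $\varphi_{3}\to\infty$ in $X_{3}$.

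The heart of the argument is property (iii). Set $v=|y|$ and $u=(x^{2}+v^{2})v^{-3/2}$, so that $\varphi_{3}=C_{3}u^{\delta}$, and compute
$$
u_{x}=2xv^{-3/2},\qquad u_{v}=\tfrac{1}{2}v^{-1/2}-\tfrac{3}{2}x^{2}v^{-5/2}.
$$
By Remark \ref{REM}, the first-order part of $\tau M\varphi_{3}$, setting the bounded $a_{1},a_{2}$ corrections aside for the moment, is $(v^{2}-x^{2})(\varphi_{3})_{x}+(-2xv)(\varphi_{3})_{v}$. The chain rule gives
\begin{align*}
(v^{2}-x^{2})(\varphi_{3})_{x}+(-2xv)(\varphi_{3})_{v}
&=C_{3}\delta u^{\delta-1}\bigl[2x(v^{2}-x^{2})v^{-3/2}-xv^{1/2}+3x^{3}v^{-3/2}\bigr]\\
&=C_{3}\delta u^{\delta-1}\bigl[xv^{1/2}+x^{3}v^{-3/2}\bigr]\\
&=C_{3}\delta u^{\delta-1}\cdot xv^{-3/2}(v^{2}+x^{2})\,=\,\delta x\,\varphi_{3}.
\end{align*}
The crucial algebraic fact, enabled by the choice of exponent $3/2$ in the denominator, is that the cubic-in-$x$ pieces from the two drift terms partially cancel, and the remainder regroups as $x\cdot u$, producing the clean dominant term $\delta x\,\varphi_{3}$. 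Since $x\leq-1$ on $X_{3}$ and $\varphi_{3}\to\infty$, this single term already tends to $-\infty$.

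It remains to show that the lower-order contributions are $o(\delta x\varphi_{3})$ as $r\to\infty$ in $X_{3}$. The drift correction $a_{1}(\varphi_{3})_{x}$ divided by $\delta x\varphi_{3}$ simplifies to $2a_{1}/(x^{2}+v^{2})\to 0$, and $\mathrm{sgn}(y)a_{2}(\varphi_{3})_{v}$ divided by $\delta x\varphi_{3}$ is bounded in absolute value by $\tfrac{3|a_{2}|}{2|x|v}\to 0$. The diffusion contributions $\kappa_{1}(\varphi_{3})_{xx}$ and $\kappa_{2}(\varphi_{3})_{vv}$ split into pieces of the form $u^{\delta-1}\partial_{\bullet\bullet}u$ and $u^{\delta-2}(\partial_{\bullet}u)^{2}$. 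I expect the only term requiring genuine care to be the piece $\tfrac{9}{4}\kappa_{2}(\delta-1)x^{4}v^{-5}u^{\delta-2}$ coming from the $x^{4}$-part of $(u_{v})^{2}$; however, using $u^{2}=(x^{2}+v^{2})^{2}v^{-3}$, its ratio to $\delta x u^{\delta}$ reduces to $O\bigl(1/(|x|v^{2})\bigr)$, which vanishes on $X_{3}$ because $|x|\geq 1$ and $v\geq 1$. Each of the other second-order pieces admits an analogous bound. Consequently $\tau M\varphi_{3}\simeq_{X_{3}}\delta x\,\varphi_{3}\to-\infty$, establishing (iii) and completing the proof.
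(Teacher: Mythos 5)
Your proof is correct and follows essentially the same route as the paper's. The key step — that the choice of exponent $3/2$ in the denominator makes the main first-order drift collapse exactly to $\delta x\,\varphi_3$ via the identity $2x(v^2-x^2)v^{-3/2}-xv^{1/2}+3x^3 v^{-3/2}=xv^{-3/2}(x^2+v^2)=xu$ — is the identical computation, and your handling of the remaining first-order and diffusion terms matches the paper's asymptotic analysis. The one small presentational difference is that the paper discards the $\delta(\delta-1)u^{\delta-2}(\partial_\bullet u)^2$ diffusion pieces outright by noting they are $\leq 0$ (giving a valid one-sided bound), whereas you retain them and estimate the ratio to $\delta x\varphi_3$ explicitly; your explicit bound $O\bigl(1/(|x|v^2)\bigr)$ for the $x^4 v^{-5}$ piece is correct, so both arguments close the same way.
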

\begin{proof}
Smoothness of $\,\varphi_{3}\,$ is not a problem in this region as we are 
bounded away from $\,\mathbb{R}\,$ in $\,X_{3}$.  \,Clearly, 
$\,\varphi_{3}\geq 0\,$ and note that $\,\varphi_{3}\rightarrow \infty\,$ 
as $\,r\rightarrow \infty\,$ in $\,X_{3}$.  \,After dropping the 
$\,\delta(\delta-1)$-terms which are negative, we obtain: 
\begin{eqnarray}
\nonumber\tau M\varphi_{3}(x,y)&\leq& C_{3} 
\delta \left(\frac{x^{2}+y^{2}}{|y|^{3/2}}\right)^{\delta-1}
\Big[\kappa_{1}\frac{2}{|y|^{3/2}}+\kappa_{2} 
\left( \frac{15x^{2}}{4|y|^{7/2}}-\frac{1}{4|y|^{3/2}} \right)
+ \frac{x^{3}}{|y|^{3/2}}\\
\nonumber&\,&+ \,x|y|^{1/2}+\frac{2a_{1}x}{|y|^{3/2}}
+ \text{sgn}(y)a_{2}\left(-\frac{3x^{2}}{2|y|^{5/2}}
+\frac{1}{2|y|^{1/2}}\right)\Big]\\
\nonumber&\simeq_{X_{3}}&C_{3}\delta  
\left(\frac{x^{2}+y^{2}}{|y|^{3/2}}\right)^{\delta-1}
\left(\frac{x^{3}}{|y|^{3/2}}+x|y|^{1/2}\right)\\
\label{Mphi3}&=&\delta\, x\, \varphi_{3}\ \rightarrow\ -\infty
\end{eqnarray}  
as $\,r\rightarrow \infty\,$ in $\,X_{3}\,$ since $\,x\leq -1\,$ in $\,X_{3}$. 
\end{proof}

\begin{prop}
 Let $\,C_{4}>0, \,\eta > 1\,$ and $\,\delta\in (0, 1/2)$.  \,Define
 \begin{equation} 
 \varphi_{4}(x,y)=C_{4} \frac{|x|^{2\delta}+|y|^{2\delta}}{|y|^{\frac{3}{2}\delta}}\tag{LF4}
 \end{equation}
 on $$X_{4}\,=\,\{x\leq -1\}\,\cap\,\left\{\,\eta\hspace{0.03cm} 
\sqrt{\kappa_{2} 
\frac{_3}{^2}(\frac{_3}{^2}\delta +1)}\frac{_1}{^{\sqrt{|x|}}}\leq |y| 
\leq 2\,\right\}.$$  Then $\,\varphi_{4}\in \mathcal{N}(\alpha,\kappa_{1}, 
\kappa_{2}, X_{4})\,$ for all $\,\alpha \in \mathbb{C}\,$ and all 
$\,\kappa_{1} \geq 0$, $\kappa_{2} > 0$.  
 \end{prop}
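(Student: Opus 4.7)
The region $X_4$ lies in $\{x\leq -1\}$ and is bounded away from the real axis (since the lower bound $|y|\geq \eta\sqrt{\kappa_2 \tfrac{3}{2}(\tfrac{3}{2}\delta+1)}\,|x|^{-1/2}>0$), so $\varphi_4$ is smooth and nonnegative on $X_4$; because $|y|\leq 2$ is bounded, $r\to\infty$ in $X_4$ is equivalent to $|x|\to\infty$, and then $\varphi_4\geq C_4|x|^{2\delta}/2^{3\delta/2}\to\infty$, giving properties~(i) and~(ii). The heart of the matter is property~(iii), where the point is that the lower bound defining $X_4$ has been chosen precisely to balance the dangerous second derivative in $|y|$ against the useful drift in the $x$-direction.

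To verify~(iii), I will write $\varphi_4(x,y)=u(x,|y|)$ with $u(x,v)=C_4(-x)^{2\delta}v^{-3\delta/2}+C_4 v^{\delta/2}$ and apply Remark~\ref{REM} to reduce $\tau M\varphi_4$ to four terms in $u_x,u_v,u_{xx},u_{vv}$. Differentiating gives
\begin{equation*}
u_x=-2C_4\delta(-x)^{2\delta-1}v^{-3\delta/2},\qquad u_{xx}=2C_4\delta(2\delta-1)(-x)^{2\delta-2}v^{-3\delta/2},
\end{equation*}
\begin{equation*}
u_v=-\tfrac{3C_4\delta}{2}(-x)^{2\delta}v^{-3\delta/2-1}+\tfrac{C_4\delta}{2}v^{\delta/2-1},\qquad u_{vv}=\tfrac{3C_4\delta}{2}\!\left(\tfrac{3\delta}{2}+1\right)\!(-x)^{2\delta}v^{-3\delta/2-2}+\tfrac{C_4\delta}{2}\!\left(\tfrac{\delta}{2}-1\right)\!v^{\delta/2-2}.
\end{equation*}
Since $\delta\in(0,1/2)$, one has $u_{xx}<0$, so the $\kappa_1 u_{xx}$ contribution is negative (and, relative to the first-order terms, of order $1/(-x)^3$, hence negligible).

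The two leading first-order contributions both scale like $(-x)^{2\delta+1}/v^{3\delta/2}$: using $x^2=(-x)^2$ in $X_4$,
\begin{equation*}
(y^2-x^2+a_1)u_x\;\simeq_{X_4}\;2C_4\delta(-x)^{2\delta+1}v^{-3\delta/2},\qquad (-2xv+\mathrm{sgn}(y)a_2)u_v\;\simeq_{X_4}\;-3C_4\delta(-x)^{2\delta+1}v^{-3\delta/2},
\end{equation*}
and their sum is the dominant \emph{negative} term $-C_4\delta(-x)^{2\delta+1}v^{-3\delta/2}$. The only competing positive contribution is $\kappa_2 u_{vv}$, whose dominant part is $\kappa_2\tfrac{3C_4\delta}{2}(\tfrac{3\delta}{2}+1)(-x)^{2\delta}v^{-3\delta/2-2}$. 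Its ratio to the negative term equals $\kappa_2\tfrac{3}{2}(\tfrac{3\delta}{2}+1)/((-x)v^2)$, and the region $X_4$ was defined exactly so that $(-x)v^2=|x|\,|y|^2\geq \eta^2\kappa_2\tfrac{3}{2}(\tfrac{3\delta}{2}+1)$, giving a ratio $\leq 1/\eta^2<1$. Hence for $r$ large in $X_4$,
\begin{equation*}
\tau M\varphi_4\;\leq\;-(1-\eta^{-2}-o(1))\,C_4\delta\,(-x)^{2\delta+1}v^{-3\delta/2}\;\leq\;-\tfrac{1}{2}(1-\eta^{-2})\,C_4\delta\,|x|^{2\delta+1}/2^{3\delta/2},
\end{equation*}
which tends to $-\infty$ as $|x|\to\infty$, proving property~(iii).

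The main obstacle is the bookkeeping between the positive $\kappa_2 u_{vv}$ term, which blows up like $v^{-3\delta/2-2}$ near the real axis, and the negative first-order term, which only blows up like $v^{-3\delta/2}$; the extra factor $v^{-2}$ must be absorbed by the factor $(-x)$ gained in passing from $u_v$ to the first-order drift $-2xv\,u_v$. This is exactly what the hypothesis $|y|\geq \eta\sqrt{\kappa_2\tfrac{3}{2}(\tfrac{3\delta}{2}+1)}\,|x|^{-1/2}$ with $\eta>1$ arranges, with strict inequality providing the safety margin $1-\eta^{-2}>0$.
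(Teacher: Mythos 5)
Your proof is correct and follows essentially the same approach as the paper's: both apply Remark~\ref{REM}, identify that the two leading first-order contributions combine to $-C_4\delta\,|x|^{2\delta+1}/|y|^{3\delta/2}$, and then observe that the only dangerous positive term $\kappa_2\,\tfrac{3}{2}C_4\delta(\tfrac{3}{2}\delta+1)\,|x|^{2\delta}/|y|^{3\delta/2+2}$ is at most a fraction $1/\eta^2<1$ of it thanks to the lower bound on $|y|$ built into $X_4$. Your version simply spells out the derivative computations that the paper's proof leaves implicit in its ``dropping insignificant terms'' step.
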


 \begin{proof}
Note that $\,\varphi_{4}\,$ is smooth in $\,X_{4}\,$ since this region 
excludes both $\,x\,$ and $\,y\,$ axes.  Moreover, 
$\,\varphi_{4}\rightarrow \infty\,$ as $\,r \rightarrow \infty\,$ in 
$\,X_{4}\,$ since then $\,x\,$ must approach $\,\infty\,$ as 
$\,r\rightarrow \infty$, \,and $\,y\,$ is bounded above.  Dropping 
insignificant terms in the expression for $\,M\varphi_{4}$, \,we 
see that in $\,X_{4}$:  
 \begin{eqnarray*}
 \tau M\varphi_{4}(x,y)&\leq& C_{4}\Big{(}-\delta\frac{|x|^{2\delta+1}}
{|y|^{\frac{3}{2}\delta}}+\kappa_{2}\frac{_3}{^2}\delta\left(\frac{_3}{^2}
\delta +1\right) \frac{|x|^{2\delta}}{|y|^{\frac{3}{2}\delta+2}}+\delta |x| |y|^{\delta/2}\\
 &\,& -\,a_{1}2\delta \frac{|x|^{2\delta -1}}{|y|^{\frac{3}{2}\delta}}
+\text{sgn}(y)a_{2} \frac{_1}{^2}\delta|y|^{\frac{\delta}{2}-1}-\text{sgn}(y) 
a_{2}\frac{_3}{^2} \delta \frac{|x|^{2\delta}}{|y|^{\frac{3}{2}\delta +1}}
\Big{)}\\
 &\simeq_{X_{4}}&C_{4}\left(-\delta\frac{|x|^{2\delta+1}}{|y|^{\frac{_3}{^2}
\delta}}+\kappa_{2}\frac{_3}{^2}\delta\left(\frac{_3}{^2}\delta 
+1\right) \frac{|x|^{2\delta}}{|y|^{\frac{3}{2}\delta+2}}\right)\\
 &=&C_{4} \delta\frac{|x|^{2\delta +1}}{|y|^{\frac{3}{2}\delta}} 
\left( -1+\kappa_{2} \frac{3}{2} \left( \frac{3}{2}\delta 
+ 1\right)\frac{1}{|x||y|^{2}}\right)\\
 &\leq &-\,C_{4} \delta (1-1/\eta^{2}) \frac{|x|^{2\delta +1}}
{|y|^{\frac{3}{2}\delta}}\ \rightarrow\ -\infty
 \end{eqnarray*}   
 in $\,X_{4}\,$ as $\,r \rightarrow \infty $.  
 \end{proof}

\begin{prop}
Let $\,C_{5},\,\beta > 0\,$ and $\,E>0\,$ such that 
$\,2\kappa_{2} > E\beta$, \,let $\,\xi > 1$, \,and let 
\begin{equation}
\varphi_{5}(x,y)=C_{5}(E|x|^{\beta}-y^{2}|x|^{\beta+1})\tag{LF5}
\end{equation}  
be defined on $$X_{5}\,=\,\{x\leq -1\}\,\cap\,
\left\{ |y|\leq \frac{_1}{^\xi}\sqrt{\frac{_E}{^{|x|}}}\,\right\}.$$
Then $\,\varphi_{5}\in \mathcal{N}(\alpha,\kappa_{1},\kappa_{2}, X_{5})\,$ 
for all $\,\alpha \in \mathbb{C}\,$ and all $\,\kappa_{1}\geq 0$.  
\end{prop}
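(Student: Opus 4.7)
The plan is to verify conditions (i)--(iii) of Definition \ref{DEF} for $\varphi_5$ on $X_5$, using throughout the key consequence of the region's definition:
\begin{equation*}
y^2\,\leq\,\frac{E}{\xi^2|x|}\,,\qquad\text{so in particular}\qquad y^2|x|\,\leq\,\frac{E}{\xi^2}\,<\,E\,,
\end{equation*}
where the strict inequality uses $\xi>1$. First, I would rewrite $\varphi_5=C_5|x|^\beta(E-y^2|x|)$, which immediately gives $\varphi_5\geq C_5 E(1-\xi^{-2})|x|^\beta>0$, establishing (i). Since $|y|\leq\sqrt{E}/\xi$ is bounded in $X_5$ and $|x|\geq 1$, the condition $r\to\infty$ in $X_5$ forces $|x|\to\infty$; the same lower bound on $\varphi_5$ then yields (ii).

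For (iii), which is the heart of the argument, I would compute the partial derivatives of $\varphi_5$ using $\partial_x|x|^\alpha=-\alpha|x|^{\alpha-1}$ (valid since $x<0$ in $X_5$) and substitute into the generator $M_2$ from (\ref{genw}). Bookkeeping of the resulting terms by order in $|x|$ as $r\to\infty$ in $X_5$ reveals that $-x^2\partial_x\varphi_5$ contributes $+C_5 E\beta|x|^{\beta+1}-C_5(\beta+1)y^2|x|^{\beta+2}$, the term $-2xy\partial_y\varphi_5$ contributes $-4C_5 y^2|x|^{\beta+2}$, and $\kappa_2\partial_y^2\varphi_5$ contributes $-2C_5\kappa_2|x|^{\beta+1}$. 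All remaining contributions (from $a_1$, $a_2$, $\kappa_1$ and from subleading $y^2$, $y^4$ factors) are strictly lower order in $|x|$: the sharper estimate $|y|\leq\sqrt{E}/(\xi\sqrt{|x|})$ turns an apparent $|x|^{\beta+1}$-term like $a_2 y\,|x|^{\beta+1}$ into a term of order $|x|^{\beta+1/2}$, and similarly reduces the order of the $y^4$ and $\kappa_1$ terms. Combining the leading contributions and using $y^2|x|\geq 0$, I would arrive at the upper bound
\begin{equation*}
\tau M_2\varphi_5\,\leq\,C_5\bigl(E\beta-(\beta+5)y^2|x|-2\kappa_2\bigr)|x|^{\beta+1}+\CO(|x|^{\beta+1/2})\,\leq\,-C_5(2\kappa_2-E\beta)|x|^{\beta+1}+\CO(|x|^{\beta+1/2})\,,
\end{equation*}
which tends to $-\infty$ thanks to the hypothesis $2\kappa_2>E\beta$.

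The main obstacle --- and the reason this is the most delicate of the five regions --- is conceptual: $X_5$ is precisely the region where the noiseless ODE (\ref{sdew}) with $\alpha=\kappa_1=\kappa_2=0$ blows up in finite time along the negative real axis (runaway drift $\dot x\sim -x^2/\tau$). The drift contribution to $M_2\varphi_5$ is therefore an obstruction, not a resource. The factor $-y^2|x|^{\beta+1}$ appended to $\varphi_5$ is engineered precisely so that $\partial_y^2\varphi_5$ scales as $-|x|^{\beta+1}$, producing from $\kappa_2\partial_y^2\varphi_5$ a negative contribution at exactly the order $|x|^{\beta+1}$ of the dangerous drift term $+C_5 E\beta|x|^{\beta+1}$. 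The sharp hypothesis $2\kappa_2>E\beta$ is the sign condition that makes the diffusion beat the drift at leading order, with the free parameters $C_5$, $\beta$, $E$, $\xi$ preserved for the subsequent gluing into a global $\Phi_2$.
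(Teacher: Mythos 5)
Your proof is correct and follows essentially the same route as the paper: compute $M_2\varphi_5$ directly, use $|y|\leq\xi^{-1}\sqrt{E/|x|}$ to reduce all $a_1$-, $a_2$-, $\kappa_1$-, $y^2$-, and $y^4$-terms to lower order in $|x|$, and observe that the surviving leading-order terms give $C_5(E\beta-2\kappa_2)|x|^{\beta+1}\to-\infty$ by the hypothesis $2\kappa_2>E\beta$. Your write-up is slightly more explicit than the paper's (you spell out the lower bound $\varphi_5\geq C_5E(1-\xi^{-2})|x|^\beta$ that justifies both nonnegativity and divergence, and you keep the favorable drift term $-(\beta+5)y^2|x|^{\beta+2}\leq0$ visible before discarding it), but the decomposition and the sign analysis are identical.
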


\begin{proof}
The fact that $\,\varphi_{5}\,$ is smooth in $\,X_{5}\,$ is clear as 
$\,x\leq - 1\,$ in $\,X_{5}$. \,Again by the choice of $\,X_{5}$, 
$\,\varphi_{5}\geq 0\,$ and $\,\varphi_{5}\rightarrow \infty\,$ as 
$\,r \rightarrow \infty\,$ in $\,X_{5}$. \,Dropping irrelevant terms 
in the expression for $\,M\varphi_{5}$, \,we see that in $\,X_{5}$:
\begin{eqnarray*}
\tau M\varphi_{5}(x,y)&\leq & C_{5}\left(\kappa_{1} E\beta (\beta -1) |x|^{\beta -2} -2\kappa_{2} |x|^{\beta +1} + E\beta |x|^{\beta +1}+(\beta +1) y^{4} |x|^{\beta}\right)\\
 &\,&+\,\,C_{5}\left(a_{1}(-E\beta |x|^{\beta -1} +(\beta +1) y^{2} |x|^{\beta})-2a_{2} y |x|^{\beta +1}\right)\\
 &\simeq_{X_{5}}& C_{5}\,(E\beta-2\kappa_{2})\,|x|^{\beta+1}\ \rightarrow\ -\infty
\end{eqnarray*}    
as $\,r \rightarrow \infty\,$ in $\,X_{5}$, \,as $\,|x|\,$ must approach 
$\,\infty\,$ when $\,r\rightarrow \infty\,$ in $\,X_{5}$.   
\end{proof}


We now have our desired LFs.  It is not obvious, however, that the regions 
$\,X_{1},\,X_{2}, \ldots, X_{5}\,$ cover $\,\mathbb{R}^{2}\,$ except, possibly, 
a bounded region about the origin.  To assure that one has to show that 
$\,X_{4}\,$ and $\,X_{5}\,$ overlap.  In order to make this more tangible, 
we will choose some of the parameters given in the previous propositions.  
With the choices that follow, however, we first need a lemma that says that
varying the diffusion coefficients $\,(\kappa_{1},\kappa_{2})$ is permitted.  
This lemma will also be of crucial use later when we glue the LFs to form 
a globally defined $\,\Phi$.

\begin{lem}\label{LEM6}
Fix $\,\kappa_{2} >0\,$ and suppose that $\,\Phi \in \mathcal{N}(\alpha,
\kappa_{1},\kappa_{2},\mathbb{R}^{2})\,$ for all $\,\alpha \in \mathbb{C}\,$ 
and all $\,\kappa_{1}\geq 0$.  \,Then for every $\,\iota_{2}>0$, \,we can 
find $\,\Psi \in \mathcal{N}(\alpha,\iota_{1},\iota_{2},\mathbb{R}^{2})\,$ 
for all $\,\alpha \in \mathbb{C}\,$ and all $\,\iota_{1} \geq 0$.      
\end{lem}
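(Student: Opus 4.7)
My plan is to prove Lemma \ref{LEM6} by an isotropic rescaling of coordinates, exploiting the fact that the hypothesis on $\Phi$ is quite strong: it asserts that $\Phi$ is a Lyapunov function for \emph{every} choice of $\alpha\in\mathbb{C}$ and \emph{every} $\kappa_1\ge 0$ (with the fixed $\kappa_2$). The freedom to vary $\kappa_1$ and $\alpha$ in the hypothesis is exactly what will let us accommodate arbitrary $\iota_1\ge 0$ and the fixed $\iota_2$ on the target side.

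Concretely, I would set $\lambda=(\kappa_2/\iota_2)^{1/3}>0$ and define
\[
\Psi(x,y)\,:=\,\Phi(\lambda x,\lambda y).
\]
Properties (i) and (ii) of Definition \ref{DEF} are immediate: $\Psi\ge 0$ since $\Phi\ge 0$, and $r\to\infty$ in $(x,y)$ forces $(\lambda x)^2+(\lambda y)^2\to\infty$, so $\Psi(x,y)\to\infty$. The content of the proof is property (iii). A direct computation using $\Psi_x=\lambda\Phi_u$, $\Psi_{xx}=\lambda^2\Phi_{uu}$, and similarly in $y$, together with the substitution $u=\lambda x$, $v=\lambda y$, gives
\[
\tau M_{\iota_1,\iota_2,\alpha}\Psi(x,y)\,=\,\frac{1}{\lambda}\Big[\iota_1\lambda^3\Phi_{uu}+\iota_2\lambda^3\Phi_{vv}+(v^2-u^2+\lambda^2 a_1)\Phi_u+(-2uv+\lambda^2 a_2)\Phi_v\Big]_{(u,v)=(\lambda x,\lambda y)}.
\]
Setting $\kappa_1:=\iota_1\lambda^3=\iota_1\kappa_2/\iota_2\ge 0$ and $\alpha':=\lambda^2\alpha\in\mathbb{C}$, and noting $\iota_2\lambda^3=\kappa_2$ by our choice of $\lambda$, the bracket on the right becomes precisely $\tau(M_{\kappa_1,\kappa_2,\alpha'}\Phi)(\lambda x,\lambda y)$.

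Now I would invoke the hypothesis: since $\Phi\in\mathcal{N}(\alpha',\kappa_1,\kappa_2,\mathbb{R}^2)$ for these values of $\alpha'$ and $\kappa_1$, we have $(M_{\kappa_1,\kappa_2,\alpha'}\Phi)(u,v)\to-\infty$ as the distance from the origin goes to infinity. Because $(u,v)=(\lambda x,\lambda y)$ escapes to infinity precisely when $r\to\infty$, and $\lambda>0$, it follows that $M_{\iota_1,\iota_2,\alpha}\Psi(x,y)\to-\infty$ as $r\to\infty$. Thus $\Psi\in\mathcal{N}(\alpha,\iota_1,\iota_2,\mathbb{R}^2)$.

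There is no real ``main obstacle'' here, but the one non-obvious point is recognizing the correct scaling ansatz. Anisotropic scaling $(x,y)\mapsto(\lambda x,\mu y)$ with $\lambda\ne\mu$ is incompatible with the structure of the drift (the $v^2/\mu^2$ and $-u^2/\lambda^2$ terms in the transformed coefficient of $\Phi_u$ force $\mu=\lambda$), which is why an isotropic dilation is forced. Once this is accepted, the two scaling equations $\iota_j\lambda^3=\kappa_j$ for $j=1,2$ collapse to one condition (determining $\lambda$) because the hypothesis allows $\kappa_1$ to be adjusted freely to match $\iota_1\kappa_2/\iota_2$, and the induced shift $\alpha\mapsto\lambda^2\alpha$ is absorbed by the ``for all $\alpha$'' clause in the hypothesis. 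This is exactly the role played by the broad quantification in the definition of $\mathcal{N}$.
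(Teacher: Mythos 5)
Your proof is correct and is essentially identical to the paper's: both use the isotropic dilation $\Psi(x,y)=\Phi(\eta x,\eta y)$ with $\eta=(\kappa_2/\iota_2)^{1/3}$, compute $\tau M^{\alpha}_{(\iota_1,\iota_2)}\Psi = \frac{\tau}{\eta}M^{\alpha\eta^2}_{(\iota_1\eta^3,\kappa_2)}\Phi$ by the chain rule, and then absorb the induced rescalings of $\kappa_1$ and $\alpha$ into the universal quantifiers of the hypothesis. Your remark on why anisotropic scaling fails is a nice addition but not needed for the argument.
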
    

\noindent For the proof of Lemma \ref{LEM6}, we temporarily use the 
notation $\,M^{\alpha}_{(\kappa_{1}, \kappa_{2})}\,$ for the generator $\,M\,$ 
given by \eqref{genw}.  

\begin{proof}
Let $\,\eta > 0\,$ be such that $\,\eta^{3} \iota_{2}=\kappa_{2}$. \,Define 
$\,\Psi(x,y)=\Phi(\eta x, \eta y)$.  \,For the function $\,\Psi$, 
\,smoothness and properties \eqref{I} and \eqref{II} are immediate.  
Let $\,s=\eta x\,$ and $\,t=\eta y$.  \,Then by the chain rule 
\begin{eqnarray*}
\tau M^{\alpha}_{(\iota_{1}, \iota_{2})}\Psi(x,y)&=&\frac{_\tau}{^\eta}\,
M^{\alpha \eta^{2}}_{(\iota_{1} \eta^{3},\iota_{2} \eta^{3})} \Phi(s,t) \\
&=&\frac{_\tau}{^\eta}\,M^{\alpha \eta^{2}}_{(\iota_{1} \eta^{3},\kappa_{2})} 
\Phi(s,t)\ \rightarrow\ -\infty,  
\end{eqnarray*} 
as $\,r=\sqrt{x^{2}+y^{2}}\rightarrow \infty$.   
\end{proof}

\noindent By Lemma \ref{LEM6}, it is enough to find a function 
$\,\Phi\in \mathcal{N}(\alpha,\kappa_{1},\kappa_{2},\mathbb{R}^{2})\,$ 
for some fixed $\,\kappa_{2} > 0\,$ for all $\,\alpha \in\mathbb{C}\,$ 
and all $\,\kappa_{1}\geq 0$.  \,All of the $\,\varphi_{i}\,$ satisfy these 
criteria.  In fact, $\,\varphi_{i}\,$ for $\,i=1,2,3,4\,$ work more 
generally.  The reason that $\,\varphi_{5}\,$ only works for 
$\,2\kappa_{2}>E\beta\,$ is due to the fact that when 
$\,\alpha=\kappa_{1}=\kappa_{2}=0$, \,the solution to \eqref{sdew} has 
an explosive trajectory along the negative real axis.  
\vskip 0.1cm

Now we choose some parameters. \,Let $\,E=5\,$  and $\,\xi=\sqrt{5}/2$, \,so 
as to make $$X_{5}\,=\,\{ x\leq -1\,\cap\,\{|y||x|^{1/2}\leq 2\}.$$  
Let $\,\beta =\frac{11}{4}\delta$, $\,\delta =\frac{1}{7}\kappa_{2}$, \,and 
$\,\kappa_{2}\in(0,1)$.  \,Then $\,E\beta =\frac{55}{28}
\kappa_{2} < 2\kappa_{2}\,$ and $\,\delta\in (0, \frac{1}{2})$.  \,Hence 
for all $\,i=1,2,\ldots, 5$, $\,\varphi_{i}\,$ is a LF in the region 
$\,X_{i}$.  \,Decrease $\,\kappa_{2}> 0\,$ so that 
$$X_{4}\,\supset\,\{x \leq -1\}\,\cap\,\{1 \leq |y||x|^{1/2}\leq 2\}.$$  
One can easily check that $X_{4}$ and $X_{5}$ overlap in such a way that we have covered all of $\mathbb{R}^{2}$ with $X_{1}, X_{2}, \ldots, X_{5}$ except a bounded region about the origin.  
\vskip 0.1cm

We fix $\,\kappa_{2}>0\,$ sufficiently small (this will be made precise later).
We will construct a function $\,\Phi\in \mathcal{N}(\alpha,\kappa_{1},
\kappa_{2},\mathbb{R}^{2})\,$ for all $\,\alpha \in \mathbb{C}\,$ and all 
$\,\kappa_{1}\geq 0$.  \,The idea is as follows.  Note that 
$\,\varphi_{1}\,$ and $\,\varphi_{2}\,$ are LFs in the region 
$\,X_{1}\cap X_{2}$.  \,We shall define a nonnegative smooth auxiliary 
function $\,\zeta(x)\in C^{\infty}(\mathbb{R})\,$ such that 
$\,\zeta(x)=0\,$ for $\,x\geq 2\,$ and $\,\zeta(x)=1\,$ for $\,x\leq 1$,  
\,satisfying some additional properties.  We will then show that the 
combination $$(1-\zeta) \varphi_{1}+\zeta \varphi_{2}$$ is a LF 
in the larger region $\,X_{1} \cup X_{2}$.  \,Proceeding inductively this way, 
we shall construct a LF in all of $\,\mathbb{R}^{2}$.  
\vskip 0.1cm


Let us first define some auxiliary functions needed to construct such 
a $\,\Phi$. \,Let $\,\zeta:\mathbb{R} \rightarrow \mathbb{R}_{+}\,$ 
be a $C^{\infty}$ function such that  
\begin{align*}
\zeta(x)&=
\begin{cases}
\,1 & \text{ for \,} x\leq 1\,,\\
\,0 & \text{ for \,} x\geq 2\,,
\end{cases}
\end{align*}
and $\,\zeta'(x)<0\,$ for all $\,x\in (1, 2)$.    
\,We define the smooth function $\,\mu:\mathbb{R}\rightarrow \mathbb{R}_{+}\,$ 
as the horizontal shift of $\,\zeta$, \,three units to the left, 
i.e., $$\mu(x)=\zeta(x+3) \text{ \ for } x\in \mathbb{R}.$$  Let 
$$\nu(x,y)=\begin{cases}\,\zeta(|y|) & \text{\ for \,} x\leq -2\,,\\ \,0& 
\text{ for \,}|y|\geq 2\,,\\ \,0 & \text{ for \,} x >-1\,,\end{cases}$$ 
and assume that $\nu$ is $C^{\infty}$ outside of the ball $\,B_{4}$.  
\,Let $\ q:(-\infty, -1]\times \mathbb{R}\rightarrow \mathbb{R}\ $ be 
defined by $$q(x,y)=\begin{cases}\,1 & \text{ if\quad} |x|^{1/2}|y|\geq 2\,, 
\\ \,|x|^{1/2}|y|-1 &\text{ if\quad} 1< |x|^{1/2} |y|< 2\,,\\ \,0 &
\text{ if\quad} |x|^{1/2}|y|\leq 1\,,\end{cases}$$ and 
$$r(t)=\begin{cases}\,\exp\left(-\frac{1}{1-(2t-1)^{2}}\right) & 
\text{ if\quad} 0< t < 1\,,\\ \,0 &\text{ otherwise}\,.\end{cases}$$  
Let $$s(x)=\frac{1}{N}\int_{-\infty}^{x} r(t)\,dt\,,$$ 
where $\,N=\int_{\mathbb{R}} r(t) dt$.  \,Now define a function on 
$\,\mathbb{R}^{2}\,$ by $$\rho(x,y)=\begin{cases}\,s(q(x,y))& \text{ if\quad} 
x\leq -1\,,\\ \,1 &\text{ if\quad} |y|\geq 3\,,\\ \,1 &\text{ if\quad} 
x\geq -1/2\,.\end{cases}$$  Clearly, $\,\rho\,$ is $C^{\infty}(\mathbb{R}^{2})$ 
outside of $\,B_{4}$.  

Let $\,r(\delta)=\max\left(4, \sqrt{2^{4/\delta}+4} \right)$. \.Define 
$\,\varphi\,$ for $\,x^{2}+y^{2}\geq r^{2}(\delta)\,$ by  
$$\varphi(x,y)=\begin{cases}\,\varphi_{1} & \text{ if\quad} x\geq 2\,, \\
\,\zeta \varphi_{2} + (1-\zeta) \varphi_{1} &\text{ if\quad} 1< x< 2\,, 
\\ \,\varphi_{2} &\text{ if\quad} -1\leq x \leq 1\,, \\ 
\,\mu\varphi_{3}+(1-\mu)\varphi_{2} &\text{ if\quad} -2< x< -1\,, \\
\,\varphi_{3} &\text{ if\quad}  x\leq -2,\ |y|\geq 2 \\
\,\nu \varphi_{4}+ (1-\nu) \varphi_{3} &\text{ if\quad} x\leq -2,\ 
1<|y|< 2\,, \\ \,\varphi_{4} &\text{ if\quad} x\leq -2,\ |y|\leq 1,\ 
|x|^{1/2}|y|\geq 2\,,\\ \,\rho \varphi_{4} +(1-\rho) \varphi_{5}& 
\text{ if\quad} x\leq -2,\ 1<|x|^{1/2}|y|< 2\,,\\
\,\varphi_{5} &\text{ if\quad} x\leq -2,\ |x|^{1/2} |y|\leq 1\,,
\end{cases}$$ and now 

$$\Phi(x,y)=\begin{cases} \,\varphi(x,y) &\text{ if\quad} 
x^{2}+y^{2}\geq B_{r(\delta)}\,, \\ \text{arbitrary positive and smooth } 
&\text{ if\quad} x^{2}+y^{2} < B_{r(\delta)}\,. \end{cases}$$  
It is easy to see $\,\Phi\,$ can be chosen to be nonnegative and 
$C^{\infty}(\mathbb{R}^{2})$.  With the aid of Lemma \ref{LEM6}, 
the following lemma implies Theorem \ref{TH1} in the $\,d=2\,$ case.  


\begin{lem}\label{lem217}
For $\,\kappa_{2}\,$ sufficiently small, $\,\Phi\in 
\mathcal{N}(\alpha,\kappa_{1},\kappa_{2},\mathbb{R}^{2})\,$ for all 
$\alpha\in \mathbb{C}$ and all $\kappa_{1}\geq 0$.  
\end{lem}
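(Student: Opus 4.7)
The plan is to verify properties (i)--(iii) of Definition \ref{DEF} for the globally defined $\Phi$. Properties (i) and (ii) are immediate: outside the ball $B_{r(\delta)}$, $\Phi$ is either one of the $\varphi_i$'s or a convex combination of two of them, all nonnegative and divergent at infinity on their respective regions; inside $B_{r(\delta)}$ the freedom to choose $\Phi$ smooth and nonnegative settles (i), and (ii) is automatic from the behavior outside a ball. The content of the lemma is the drift condition (iii), which I would verify piece by piece.

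I would split $\mathbb{R}^{2}\setminus B_{r(\delta)}$ into five ``pure'' regions where $\Phi=\varphi_i$ for a single $i$ and four ``transition strips'' where $\Phi=\zeta\varphi_a+(1-\zeta)\varphi_b$ with $a,b$ adjacent. In any pure region, $M\Phi=M\varphi_i\to -\infty$ by the corresponding Proposition. In a transition strip, the Leibniz rule gives
\begin{align*}
\tau M\Phi = {}&\zeta\,\tau M\varphi_a + (1-\zeta)\,\tau M\varphi_b \\
&+ \kappa_1\bigl[2\zeta_x(\partial_x\varphi_a-\partial_x\varphi_b)+\zeta_{xx}(\varphi_a-\varphi_b)\bigr] \\
&+ \kappa_2\bigl[2\zeta_y(\partial_y\varphi_a-\partial_y\varphi_b)+\zeta_{yy}(\varphi_a-\varphi_b)\bigr] \\
&+ (y^2-x^2+a_1)\zeta_x(\varphi_a-\varphi_b) + (-2xy+a_2)\zeta_y(\varphi_a-\varphi_b).
\end{align*}
The first line is a convex combination of two quantities already shown to diverge to $-\infty$; the remaining lines are the error produced by gluing.

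For the three strips in which the cutoff depends on a single coordinate ranging over a bounded interval --- $\zeta$ on $\{1<x<2\}$, $\mu$ on $\{-2<x<-1\}$, and $\nu$ on $\{x\leq -2,\ 1<|y|<2\}$ --- the derivatives $\zeta_{\bullet}, \zeta_{\bullet\bullet}$ are bounded, one of $x,y$ is bounded along the strip, and the leading negative contributions identified in the proofs of the individual propositions (namely $-\tfrac{C_{1}\delta}{2}x(x^{2}+y^{2})^{\delta/4}$ in \eqref{varphi1}, $-C_{2}y^{2}$, and $\delta x \varphi_{3}$ in \eqref{Mphi3}, together with the analogous leading term for $\varphi_{4}$) dominate each error term as $r\to\infty$ because $\varphi_{a}-\varphi_{b}$ and its first derivatives along the strip grow strictly slower than these leading terms. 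A direct estimate in each strip then yields $M\Phi\to -\infty$.

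The main obstacle is the $\varphi_{4}$--$\varphi_{5}$ gluing in the curved strip $\{x\leq -2,\ 1<|x|^{1/2}|y|<2\}$, because this is precisely the region where the deterministic equation has explosive trajectories along the negative real axis, so the negativity of $\tau M\varphi_{5}\simeq C_{5}(E\beta-2\kappa_{2})|x|^{\beta+1}$ is only marginal, and the cutoff $\rho=s(q(x,y))$ depends on both coordinates through $q=|x|^{1/2}|y|-1$. The key estimates to carry out are $|\rho_{x}|=O(|x|^{-1})$, $|\rho_{y}|=O(|x|^{1/2})$, $|\rho_{xx}|+|\rho_{yy}|=O(|x|)$ in the strip, together with $|y|\asymp|x|^{-1/2}$, which collectively keep every error contribution of order $o(|x|^{\beta+1})$ once $\kappa_{2}$ is small enough that the gap $2\kappa_{2}-E\beta>0$ absorbs the bounded constants coming from $\rho$ and its derivatives. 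Lemma \ref{LEM6} then lifts the construction from one sufficiently small $\kappa_{2}>0$ to arbitrary $\kappa_{2}>0$ (and any $\kappa_{1}\geq 0$, $\alpha\in\mathbb{C}$), completing the proof.
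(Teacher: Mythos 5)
Your overall decomposition (pure regions plus transition strips, Leibniz expansion, invoke Lemma~\ref{LEM6} to rescale $\kappa_2$) matches the paper, but the estimates you give for the gluing errors are wrong in a way that hides the actual mechanism, and in the critical strip this is fatal to the argument as written.

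First, in the straight strips the ``leading negative contributions'' from $\tau M\varphi_i$ do \emph{not} dominate the gluing error. Take $Y_1=\{1<x<2\}$. Here $\varphi_2-\varphi_1 \simeq (C_2-C_1)|y|^{\delta/2}$, so the first-order cross term $(y^2-x^2+a_1)\zeta'(\varphi_2-\varphi_1)$ grows like $|y|^{2+\delta/2}$, which is strictly faster than both $-C_2 y^2$ and $-(C_1\delta/2)x(x^2+y^2)^{\delta/4}$. So the cross term is the dominant contribution on the interior of the strip; what saves the argument is that it is \emph{negative}, because $\zeta'<0$ and the constants were chosen with $C_2>C_1$. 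Only as $x\to 1$ or $x\to 2$ (where $\zeta',\zeta''\to 0$) do the $M\varphi_i$ terms take over. Your proposal asserts the opposite dominance and never invokes the sign.

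Second, and more seriously, in $Y_4$ the error terms are $O(|x|^{\beta+1})$, not $o(|x|^{\beta+1})$. With $|y|\asymp|x|^{-1/2}$ and $\varphi_4-\varphi_5\asymp|x|^\beta$ on this strip, one has for example $\partial_x(s(q))\,(\varphi_4-\varphi_5)\,x^2 \asymp |x|^{-1}\cdot|x|^\beta\cdot|x|^2 = |x|^{\beta+1}$, and similarly $\partial_{|y|}(s(q))(\varphi_4-\varphi_5)\cdot 2|x||y|\asymp|x|^{\beta+1}$ and $\kappa_2\,\partial^2_{|y|}(s(q))(\varphi_4-\varphi_5)\asymp\kappa_2|x|^{\beta+1}$. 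These are the \emph{same} order as the good terms, so the claim that ``the gap $2\kappa_2-E\beta>0$ absorbs the bounded constants'' has no leverage. What the paper actually does is (a) observe that, because $s'>0$, $C_5>C_4$ forces $\varphi_4-\varphi_5<0$, and the signs of $q_x$ and $q_{|y|}$ then make both first-order cross terms $\leq -D\,r(q)\,|x|^{\beta+1}$, i.e.\ genuinely negative; (b) bound the $\kappa_2$ second-order cross terms by $C\kappa_2\,r(q)\,|x|^{\beta+1}/g^2$ and then shrink $\kappa_2$ so that $D>C\kappa_2/g^2$ on the inner portion $1+\epsilon\leq|x|^{1/2}|y|\leq 2-\epsilon$; and (c) near the strip's boundaries use that $r(q)$ vanishes faster than any power of $g$ so the pure $s(q)M\varphi_4+(1-s(q))M\varphi_5$ term takes over. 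Without the sign analysis driven by the monotone ordering $C_1<C_2\leq C_3=C_4<C_5$ and the decreasing cutoffs, the $Y_4$ argument does not close. You should rework your estimates to make the orders of magnitude precise and to extract the negativity of the first-order gluing terms explicitly.
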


\begin{proof}
Clearly, $\,\Phi\,$ is smooth and satisfies properties I and II.  \,Since 
$\,M\varphi_{i}\rightarrow -\infty\,$ as $\,r\rightarrow \infty\,$ in 
$\,X_{i}\,$ for each $\,i$, \,all we must verify is that 
$\,M\Phi\rightarrow -\infty\,$ as $\,r\rightarrow \infty\,$ in the 
overlapping regions.  Let us recall the choices that have already been 
made: 
\begin{align*}
E=5\,, \text{\quad} \xi = \frac{_{\sqrt{5}}}{^2}\,, 
\text{\quad}\beta= \frac{_{11}}{^4}\delta\,,\text{\quad} 
\delta =\frac{_{\kappa_{2}}}{^7}\,,
\end{align*}
and note that $\,\kappa_{2}\in (0,1)\,$ was chosen such that 
$$X_{4}\,\supset\,\{ x\leq -1\}\,\cap\,\{ 1\leq |y||x|^{1/2} \leq 2\}.$$
Pick $\,C_{5}>C_{4}=C_{3}>C_{2}>C_{1}$.  
\,Consider first $$\psi_{1}:=\zeta\varphi_{2}+(1-\zeta) \varphi _{1}$$ 
defined in the region  $$Y_{1}\,=\,\{1<x<2\}\,\cap\,B_{r(\delta)}^{c}.$$ 
We have 
\begin{eqnarray}
\nonumber \tau M\psi_{1}&=&\zeta\,\tau M \varphi_{2}
+(1-\zeta) \tau M \varphi_{1} + (y^{2}-x^{2}
+a_{1})\zeta' (\varphi_{2}-\varphi_{1})\\
\nonumber &\,&+\,\,\kappa_{1}(\zeta'' (\varphi_{2}-\varphi_{1})+ 
2\zeta'(\partial_{x} \varphi_{2}-\partial_{x} \varphi_{1}))\\
\label{psi1}&\simeq_{Y_{1}}& -\,\zeta\,C_{2} y^{2}
-(1-\zeta) \frac{_{C_{1} \delta}}{^2}x(x^{2}+y^{2})^{\delta/4}
+(y^{2}-x^{2}+a_{1})\zeta'(C_{2}-C_{1}) |y|^{\delta/2}\\
\nonumber &\,&+\,\,\kappa_{1}(\zeta''(C_{2}-C_{1}) 
|y|^{\delta/2}-2\zeta' C_{2})\,.
\end{eqnarray}  
Note that if $\,x\,$ is bounded away from $\,1\,$ and $\,2\,$ in 
$\,(1,2)$, \,the dominant term above is $\,y^{2}\zeta'(x)(C_{2}-C_{1}) 
|y|^{\delta/2}\rightarrow -\infty$.  \,Note also that as 
$\,x\rightarrow 1\,$ or $\,x\rightarrow 2$, $\,\zeta',\,\zeta''
\rightarrow 0$. ,But, the first two terms in the expression above decay 
to $\,-\infty\,$ at least as fast as $\,-C|y|^{\delta/2}$, 
\,where $\,C>0\,$ is a constant independent of $\,\zeta\,$ and $\,x$.  
\,Thus we may choose $\,\epsilon > 0\,$ so that whenever $\,x\in (1, 2) 
\setminus (1+\epsilon, 2-\epsilon)$, $\,M \psi_{1}\,$ decays at least as 
fast as $\,-D|y|^{\delta/2}$, where $\,D\,$ is some positive constant.

\noindent We now consider $$\psi_{2}:=\mu \varphi_{3}+(1-\mu) \varphi_{2}$$ 
in the region $$Y_{2}\,=\,\{ -2< x<-1\}\,\cap\,B_{r(\delta)}^{c}.$$ 
We have
\begin{eqnarray*}
\tau M\psi_{2}&=&\mu\,\tau  M \varphi_{3}+(1-\mu) \tau M \varphi_{2} 
+ (y^{2}-x^{2}+a_{1})\mu'(\varphi_{3}-\varphi_{2})\\
&\,&+\,\kappa_{1}(\mu''(\varphi_{3}-\varphi_{2})+ 2\mu'(\partial_{x} 
\varphi_{3}-\partial_{x} \varphi_{2})\\
&\simeq_{Y_{2}}& \mu\,\delta x\varphi_{3} -(1-\mu) C_{2} y^{2}
+(y^{2}-x^{2}+a_{1}) \mu'(C_{3}-C_{2}) |y|^{\delta/2}\\
&\,&+\,\kappa_{1}(\mu''(C_{3}-C_{2})|y|^{\delta/2}+2\mu'C_{2})\\
&\simeq_{Y_{2}}&\mu\,C_{3}x|y|^{\delta/2}-(1-\mu) C_{2} y^{2}
+(y^{2}-x^{2}+a_{1}) \mu'(C_{3}-C_{2}) |y|^{\delta/2}\\
&\,&+\,\kappa_{1}(\mu''(C_{3}-C_{2})|y|^{\delta/2}+2\mu'C_{2}).
\end{eqnarray*}
Note that, for the very same reasons as in the case of $\,\psi_{1}$, 
$\,M\psi_{2}\rightarrow -\infty\,$ as $\,r \rightarrow \infty\,$ 
in $\,Y_{2}$.

\noindent Let $$\psi_{3}:=\nu \varphi_{4}+(1-\nu) \varphi_{3}$$ in the region 
$$Y_{3}\,=\,\{x \leq -2 \}\,\cap\,\left\{1 <|y|< 2\right\}\,\cap\,
B_{r(\delta)}^{c}.$$  Thus
\begin{eqnarray*}
 \tau M\psi_{3}&=& \nu\,\tau M \varphi_{4}+(1-\nu) \tau M \varphi_{3} 
+ (-2x|y|+\text{sgn}(y)a_{2})(\partial_{|y|}\nu) (\varphi_{4}-\varphi_{3})\\ 
&\,&+\, \kappa_{2} (\partial^2_{|y|}\nu)(\varphi_{4}-\varphi_{3})
+2\kappa_{2}(\partial_{|y|}\nu)\partial_{|y|}(\varphi_{4}-\varphi_{3})\\
&\simeq_{Y_{3}}&  \nu\,\tau M \varphi_{4}+(1-\nu) \tau M \varphi_{3}\ 
\rightarrow\ -\infty\,.
\end{eqnarray*}
This is true since we chose $\,C_{3}=C_{4}$.  \,Hence the first order term 
$$(-2x|y|+\text{sgn}(y)a_{2})(\partial_{|y|}\nu)(\varphi_{4}-\varphi_{3})$$
approaches infinity at worst as fast as $\,C|x|$, \,where $\,C\,$ is a 
positive constant. The second order term $\,\kappa_{2}(\partial_{|y|}^2\nu)
(\varphi_{4}-\varphi_{3})\,$ is a bounded function in this region.  
Moreover the term $\,2\kappa_{2}(\partial_{|y|}\nu)\partial_{|y|}
(\varphi_{4}-\varphi_{3})\,$ at worst approaches infinity as fast as 
$\,D|x|^{2\delta}$, \,where $\,D\,$ is some positive constant.  But, both 
$\,M\varphi_{3}\,$ and $\,M\varphi_{4}\,$ approach negative infinity 
at least as fast as $\,-D|x|^{2\delta +1}$, \,where $\,D\,$ is another 
positive constant.  This gives the desired result.

\noindent Let $$\psi_{4}:=\rho \varphi_{4}+(1-\rho)\varphi_{5}, $$ 
in the region $$Y_{4}\,=\,\{1<|x|^{1/2}|y|< 2\}\,\cap\,\{x\leq -2\}
\,\cap\,B_{r(\delta)}^{c}.$$  Note that 
\begin{eqnarray*}
\tau M\psi_{4}&=& s(q)\tau M\varphi_{4}+(1-s(q)) \tau M\varphi_{5}
+\partial_{x}(s(q))(\varphi_{4}-\varphi_{5})(y^{2}-x^{2}+a_{1})\\
&\,&+\,\partial_{|y|}(s(q))(\varphi_{4}-\varphi_{5})(-2x|y|
+\text{sgn}(y)a_{2})+\kappa_{1}\partial_{x}^2 (s(q))(\varphi_{4}-\varphi_{5})\\
&\,&+\,2\kappa_{1} \partial_{x}(s(q))\partial_{x}(\varphi_{4}-\varphi_{5})
+\kappa_{2} \partial_{|y|}^2 (s(q))(\varphi_{4}-\varphi_{5})\\
&\,&+\, 2\kappa_{2}\partial_{|y|}(s(q))\partial_{|y|}(\varphi_{4}
-\varphi_{5})\,.
\end{eqnarray*}
Note that in the expression above, we may drop the 
$\,\kappa_{1}\partial_{x}^2(s(q))(\varphi_{4}-\varphi_{5})\,$ 
and $\,2\kappa_{1} \partial_{x}(s(q))\partial_{x}(\varphi_{4}-\varphi_{5})\,$ 
terms, as they are asymptotically less than other terms.  Dropping other 
obviously insignificant terms, we obtain:
\begin{eqnarray*}
\tau M\psi_{4}&\simeq_{Y_{4}}& s(q)\tau M\varphi_{4}+(1-s(q)) \tau M\varphi_{5}-\partial_{x}(s(q))(\varphi_{4}-\varphi_{5})x^{2}\\
&\,&-\,\partial_{|y|}(s(q))(\varphi_{4}-\varphi_{5})2x|y|+\kappa_{2} 
\partial_{|y|}^2(s(q))(\varphi_{4}-\varphi_{5})\\
&\,& +\, 2\kappa_{2}\partial_{|y|}(s(q)) \partial_{|y|}(\varphi_{4}-
\varphi_{5})\\
&\leq&  F(x,y)\,,
\end{eqnarray*} 
where $F$ is a smooth function satisfying:
\begin{eqnarray}\label{Fest}
F &\simeq_{Y_{4}}& -\,s(q)C_{4}(1-1/\eta^{2}) \frac{\delta}{2^{3\delta/2}} 
|x|^{\frac{11}{4}\delta + 1} -(1-s(q))\frac{C_{5}\delta}{4} 
|x|^{\frac{11}{4}\delta + 1} \\
\nonumber&\,&-\, \partial_{x}(s(q))(\varphi_{4}-\varphi_{5})x^{2}
-\partial_{|y|}(s(q))(\varphi_{4}-\varphi_{5})2x|y|\\
\nonumber&\,&+\,\kappa_{2} \partial_{|y|}^2 (s(q))(\varphi_{4}-\varphi_{5})
+2\kappa_{2}\partial_{|y|}(s(q))\partial_{|y|}(\varphi_{4}-\varphi_{5})\,.
\end{eqnarray}    
\noindent By the choice of $\,C_{5}>C_{4}$, \,it is easy to see that for 
large $\,|x|\,$ in this region, there are constants $\,C,\,D>0\,$ such that 
$$\left|\kappa_{2} \partial_{|y|}^2(s(q))(\varphi_{4}-\varphi_{5})
+ 2\kappa_{2}\partial_{|y|}(s(q))\partial_{|y|}(\varphi_{4}-\varphi_{5})
\right|\,\leq\,C\kappa_{2}r(q)\frac{|x|^{\frac{11}{4}\delta +1}}{g(x,y)^{2}}$$ 
and 
$$-\partial_{x}(s(q))(\varphi_{4}-\varphi_{5})x^{2}-\partial_{|y|}(s(q))
(\varphi_{4}-\varphi_{5})2x|y|\,\leq\,-Dr(q) |x|^{\frac{11}{4}\delta +1},$$ 
where the constants $\,C,\,D\,$ and the function $\,r(q)\,$ are independent 
of $\,\kappa_{2}$.  $\,r(q)\,$ is a function that goes to zero faster 
than any power of the function $$g(x,y):=|(|x|^{1/2}|y|-1)(|x|^{1/2}|y|-2)|$$
as  $\,|x|^{1/2}|y|\rightarrow 1 \text{ or }2$.  \,But note that, for all 
$\,\epsilon > 0$, \,we may choose $\,\kappa_{2}\,$ so small so that  
$$D > \frac{C\kappa_{2}}{g(x,y)^{2}}$$ 
for $\,1+\epsilon \leq |x|^{1/2}|y|\leq 2-\epsilon$.  \,From the first 
two terms in \eqref{Fest}, we obtain at least $\,-C' \delta\,
x^{\frac{11}{4}\delta+1}\,$ decay for some $\,C'>0\,$ independent of 
$\,\kappa_{2}\,$ for all $\,1\leq x^{1/2} |y| \leq 2$.  \,But since every 
other term in the expression goes to zero faster than every power of $\,g\,$ 
as $\,|x|^{1/2}|y|\rightarrow 1\,$ or $\,2$, \,we can choose an 
$\,\epsilon > 0\,$ so small as above so that $\,M\psi_{4} 
\rightarrow -\infty\,$ for all $\,1< |x|^{1/2}|y|<2$.  \,This completes 
the proof.    
\end{proof}


\subsection{$d\geq 3\,$ case}
\label{subsec:d=3}

\ 

\vskip 0.1cm

\noindent Here we shall complete the proof of Theorems \ref{TH1} 
for $\,d\geq 3$.  \,Recall that for $\,d=2\,$ and 
$\,\epsilon >0\,$ sufficiently small, we defined $\,\Phi_{2}(x,y):=
\Phi(x,y) \in C^{\infty}(Q_2)\,$ that satisfies \eqref{I}, \eqref{II}, 
and \eqref{IV} for all $\,\alpha \in \mathbb{C}$, $\,\kappa_{1} \geq 0$, 
\,and $\,\kappa_{2}\in (0, \epsilon)$.  \,Lemma \ref{LEM6} implied then
that for $\,\kappa_{2}>0\,$ arbitrary, 
the function $\,\Phi_{2, \eta}(x,y):=\Phi_{2}(\eta x, \eta y)
\in C^{\infty}(Q_2)\,$ satisfied \eqref{I}, \eqref{II}, and \eqref{IV} 
for all $\,\alpha$, \,provided that $\,\epsilon \eta^{-3} = 2\kappa_{2}$.  
\,Let is fix $\,\kappa_{2} >0$.  \,For $\,d\geq 3$, \,we shall define
\begin{equation}
\Phi_{d, \eta}:= \Phi_{2, \eta} + \log(1 + \log^{2}(\eta y/2))\,.
\end{equation}      
\begin{lem}
For fixed $\,\kappa_{2}>0$, $\,\Phi_{d, \eta}$ is a smooth function 
on $\,Q_d\,$ that satisfies \eqref{I}, \eqref{II}, and \eqref{IV}.
\end{lem}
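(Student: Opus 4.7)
\medskip

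\noindent\textit{Proof plan.} The idea is that the added summand $h(y):=\log(1+\log^2(\eta y/2))$ is tailored to repel from the new boundary component $\{y=0\}$ present in $d\ge 3$, whereas $\Phi_{2,\eta}$ continues to do the work at the point at infinity as it did in the $d=2$ case. Smoothness of $\Phi_{d,\eta}\in C^{\infty}(\mathbb{H}_+)$ is clear since $\Phi_{2,\eta}$ is smooth on $\mathbb{R}^2$ and $h$ is smooth on $\{y>0\}$. Property \eqref{I} is immediate because both summands are nonnegative. For \eqref{II}, I would separate the two components of $\partial Q_d$: $h(y)\to+\infty$ as $y\to 0^+$ while $\Phi_{2,\eta}\ge 0$; and $\Phi_{2,\eta}\to\infty$ as $(x,y)\to\infty$ in $\mathbb{R}^2$ (hence in $\mathbb{H}_+$) by the $d=2$ case, while $h\ge 0$.

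The heart of the argument is \eqref{IV}. I would decompose $M_d=M_2+\frac{b(d-2)}{y}\partial_y$, giving
\[
M_d\Phi_{d,\eta}\;=\;M_2\Phi_{2,\eta}\;+\;\frac{b(d-2)}{y}\,\partial_y\Phi_{2,\eta}\;+\;M_d h.
\]
The plan is to show that as $y\to 0^+$ with $x$ bounded the term $M_d h$ dominates and drives $M_d\Phi_{d,\eta}\to-\infty$, whereas at the point at infinity the $M_2\Phi_{2,\eta}$ term dominates by Lemma~\ref{lem217}. For the first claim, substituting $u=\log(\eta y/2)$ and setting $g(u)=2u/(1+u^2)$, so that $h'(y)=g(u)/y$ and $h''(y)=(g'(u)-g(u))/y^2$, one computes, for $x$ in a bounded set,
\[
M_d h\;=\;\frac{1}{y^2}\!\left[\Bigl(b(d-2)-\tfrac{\kappa_2}{\tau}\Bigr)g(u)+\tfrac{\kappa_2}{\tau}g'(u)\right]+O\bigl(h'(y)\bigr).
\]
As $u\to-\infty$, $g(u)\sim 2/u$ and $g'(u)\sim -2/u^2$, so the assumption $\tau b(d-2)\ge\kappa_2$ makes the bracket negative and yields $M_d h\to-\infty$ at rate $1/(y^2|\log y|)$ when $\tau b(d-2)>\kappa_2$, or $1/(y^2\log^2 y)$ in the equality case. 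This dominates the remaining terms, since $M_2\Phi_{2,\eta}$ is bounded near $y=0$ on bounded $x$-sets (as $\Phi_{2,\eta}\in C^{\infty}(\mathbb{R}^2)$) and $\frac{b(d-2)}{y}\partial_y\Phi_{2,\eta}=O(1/y)$ there.

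At the point at infinity, $h,h',h''$ remain bounded on $\{y\ge\varepsilon\}$, so $M_d h$ grows at most linearly in $(|x|,y)$, while $M_2\Phi_{2,\eta}\to-\infty$ at superlinear rates in each of the regions $X_1,\ldots,X_5$ used to glue $\Phi_{2,\eta}$. A region-by-region check, using the explicit formulas for $\varphi_1,\ldots,\varphi_5$, should show that $\frac{b(d-2)}{y}\partial_y\Phi_{2,\eta}$ is likewise dominated, and in fact contributes with a favorable sign in the $\varphi_5$-region that governs the delicate negative real-axis direction. The main obstacle I anticipate is the mixed regime $y\to 0^+$ with $|x|\to\infty$, where neither asymptotic analysis is by itself conclusive; there one has to combine the $1/(y^2|\log y|)$ decay of $M_d h$ with the sign structure of $\partial_y\varphi_4$ and $\partial_y\varphi_5$ (both of which are negative for $y>0$ small and $|x|$ large, making the cross term helpful rather than harmful) in order to conclude \eqref{IV} throughout $\partial Q_d$.
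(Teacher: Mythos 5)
Your proposal follows the same route as the paper's proof: both split $\tau M_d\Phi_{d,\eta}$ into $\tau M_2\Phi_{2,\eta}$, the extra drift cross-term $\tau b(d-2)y^{-1}\partial_y\Phi_{2,\eta}$, and the action of $M_d$ on the new $\log(1+\log^2(\eta y/2))$ summand, and both exploit $\tau b(d-2)\geq\kappa_2$ to get the favorable sign of the dominant $\sim -\kappa_2/(\tau y^2\log^2 y)$ piece as $y\downarrow 0$; your substitution $u=\log(\eta y/2)$, $g(u)=2u/(1+u^2)$ is just a cosmetic repackaging of the paper's explicit computation.

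Two small points worth flagging. First, $h(y)=\log(1+\log^2(\eta y/2))$ is \emph{not} bounded on $\{y\geq\varepsilon\}$ (it grows like $2\log\log y$), though $h'$ and $h''$ are, which is all $M_d h$ requires — so the claim is harmless but imprecise as stated. Second, the ``region-by-region check'' you defer is exactly what the paper's Case 2 carries out: it splits $0<y<2\eta^{-1}$ into $|x|<2\eta^{-1}$ (where smoothness of $\Phi_{2,\eta}$ on $\mathbb{R}^2$ bounds both $M_2\Phi_{2,\eta}$ and $y^{-1}\partial_y\Phi_{2,\eta}$ by $K/y$), $x\geq 2\eta^{-1}$ (using the explicit $\varphi_1$ bound \eqref{varphi1}), and $x\leq -2\eta^{-1}$, where the paper uses only that $\partial_y\Phi_{2,\eta}$ is \emph{bounded above} via the choices $C_3=C_4$, $C_5>C_4$ — a weaker statement than the strict negativity of $\partial_y\varphi_4,\partial_y\varphi_5$ you appeal to, but sufficient, since in all three subcases the cross-term is at worst $O(1/y)$ and is beaten by the $1/(y^2\log^2 y)$ decay of $M_d h$.
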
 
\begin{proof}
By definition, $\,\Phi_{d, \eta}\,$ is smooth and nonnegative in 
$\,Q_d=\mathbb{H}_{+}$.  \,Clearly, $\,\Phi_{d, \eta}\rightarrow \infty$ 
as $\,(x,y)\rightarrow \partial Q_d=\{(x,y) \in \mathbb{R}^{2} : y=0\}
\cup \{ \infty\}$, $\,(x,y)\in Q_d$.  \,Thus we must verify property 
\eqref{IV}.  To this end, note that:
\begin{eqnarray*}
\tau M_{d}\Phi_{d, \eta}&=& \tau M_{d} \Phi_{2, \eta} + \tau 
M_{d} (\log(1+ \log^{2}(\eta y/2)))\\
&=& \tau M_{2} \Phi_{2, \eta} + \tau \frac{b(d-2)}{y} 
\partial_y\Phi_{2, \eta}+\left(-2xy^2+a_2y+\tau b(d-2) -\kappa_{2} \right)
\frac{2\log(\eta y/2)}{y^{2}(1+ \log^{2}(\eta y/2))}\\
&\,&  +\, \frac{2\kappa_{2}(1- \log^{2}(\eta y/2))}{y^{2}(1+\log^{2}(
\eta y/2))^{2}}.
\end{eqnarray*} 
\noindent \textbf{Case 1.}  Suppose first that $\,y \geq 2\eta^{-1}$.  
\,It is easy to check that there exist positive constants 
$\,K_{1},\,K_{2}>0\,$ such that 
\begin{eqnarray*}
\qquad \tau \frac{b(d-2)}{y} \partial_y \Phi_{2, \eta}\,\leq\,K_{1}, 
\quad\left(a_2y+\tau b(d-2) -\kappa_{2} \right)\frac{2\log(\eta y/2)}{y^{2}
(1+ \log^{2}(\eta y/2))}
+ \frac{2\kappa_{2}(1-\log^{2}(\eta y/2))}{y^{2}(1+\log^{2}(\eta y/2))^{2}}\,
\leq\,K_{2}\,.
\end{eqnarray*}
If also $\,x>-2\eta^{-1}\,$ then
\qq
-2x\frac{\log(\eta y/2)}{1+\log^2(\eta y/2)}\,\leq\,K_3
\qqq 
for a positive constant $\,K_3$, \,whereas for $\,x\leq-2\eta^{-1}$,
\qq
-2x\frac{\log(\eta y/2)}{1+\log^2(\eta y/2)}\,\leq\,-K_4x\,.
\qqq 
Since $\,M_{2} \Phi_{2, \eta}\rightarrow -\infty\,$ as $\,(x,y)
\rightarrow\partial Q_d\,$ and $\,y\geq 2\eta^{-1}$, \,and, besides, 
if $\,x\leq-2\eta^{-1}\,$
and $\,y\geq 2\eta^{-1}\,$ then $\,\Phi_{2,\eta}(x,y)\,$ is equal to 
the rescaled function $\,\varphi_3\,$ so that, by \eqref{Mphi3},
$\,M_2\Phi_{2,\eta}\leq K_5x(x^2+y^2)^{\delta/4}\,$ 
for some $\,K_5>0$, \,we infer that 
$\,M_d\Phi_{d,\eta}\rightarrow-\infty\,$ 
as $\,(x,y)\rightarrow\partial Q_d\,$ and $\,y\geq 2\eta^{-1}$. 
\vspace{0.1in}
 
\noindent \textbf{Case 2.}  Suppose now that $\,0<y<2\eta^{-1}$.  \,If 
$\,|x|<2\eta^{-1}$, \,then $\,(x,y) \rightarrow \partial Q_d\,$ 
if and only if 
$\,y \downarrow 0$.  \,Since $\,\Phi_{2, \eta}\,$ is smooth on 
$\,\mathbb{R}^{2}$, \,there exists a constant $\,K_{6}>0\,$ such that 
\begin{equation*}
\tau M_{2} \Phi_{2, \eta} + \frac{\tau b(d-2)}{y} 
\frac{\partial \Phi_{2, \eta}}{\partial y}\,\leq\,K_{6}\frac{1}{y}
\end{equation*}  
for $\,(x,y)\in (-2\eta^{-1},2\eta^{-1})\times (0,2\eta^{-1})$.  
\,Hence, recalling the assumption $\,\tau b(d-2)\geq\kappa_2$, \,we 
have on this rectangle:  
\begin{eqnarray*}
\tau M_{d} \Phi_{d, \eta}& \leq & K_{6}\frac{1}{y} + \left(-2xy^2+a_2y+
\tau b(d-2) -\kappa_{2} \right)\frac{2\log(\eta y/2)}
{y^{2}(1+\log^{2}(\eta y/2))}
+ \frac{2\kappa_{2}(1-\log^{2}(\eta y/2))}{y^{2}(1+\log^{2}(\eta y/2))^{2}}\\
&\leq&K_{7}\frac{1}{y} +  \frac{2\kappa_{2}(1-\log^{2}(\eta y/2))}{y^{2}
(1+\log^{2}(\eta y/2))^{2}}\ \rightarrow\ -\infty
\end{eqnarray*}
as $\,y\downarrow 0$.  

\noindent If $\,x\geq 2\eta^{-1}$, \,then $\,\Phi_{2, \eta}\,$
is equal to the rescaled function $\,\varphi_{1}$.  
\,We see that by \eqref{varphi1} there exist constants 
$\,K_{8},\,K_{9},\,K_{10},\,K_{11}>0\,$ such that
\begin{eqnarray*}
\tau M_{d} \Phi_{d, \eta} &\leq &-K_{8}\,x(x^{2}+y^{2})^{\delta/4}  
+ K_9(x^2+y^2)^{\delta/4-1}\\
& \,& \,+\left(-2xy^2+a_2y+\tau b(d-2) 
-\kappa_{2} \right)\frac{2\log(\eta y/2)}
{y^{2}(1+ \log^{2}(\eta y/2))}+ 
\frac{2\kappa_{2}(1-\log^{2}(\eta y/2))}{y^{2}(1+\log^{2}(\eta y/2))^{2}}\\
&\leq&-K_{8}\,x(x^{2}+y^{2})^{\delta/4}+K_{10}x+K_{11}\frac{1}{y}+
\frac{2\kappa_{2}(1-\log^{2}(\eta y/2))}{y^{2}(1+\log^{2}(\eta y/2))^{2}}\,.
\end{eqnarray*}       
Note that as $\,(x,y) \rightarrow \partial Q_d\,$ in this region then
$\,x\rightarrow \infty\,$ or $\,y\downarrow 0$.  \,It is thus easy to see 
that $\tau M_{d} \Phi_{d, \eta} \rightarrow -\infty$.    
 
 \noindent If $\,x<-2\eta^{-1}$, \,then it is easy to check that 
$\,\partial_{y}\Phi_{2, \eta}\,$ is bounded above by the choice 
of $\,C_3=C_4\,$ and $C_{5}>C_{4}$.  \,Then, for some $\,K_{12}>0$,
\begin{eqnarray*}
\tau M_{d} \Phi_{d, \eta} &\leq &\tau M_2\Phi_{2,\eta}+K_{12}\frac{1}{y}+
\frac{2\kappa_{2}(1-\log^{2}(\eta y/2))}{y^{2}(1+\log^{2}(\eta y/2))^{2}}
\end{eqnarray*}
so that that $\,\tau M_{d} \Phi_{d, \eta} \rightarrow -\infty\,$ 
as $\,(x,y) \rightarrow \partial Q_d\,$ in this region.      
\end{proof}


\nsection{\bf TOP LYAPUNOV EXPONENT}
\label{sec:lyap}

\noindent The Lyapunov exponent $\,\lambda\,$ for the dispersion process 
$\,{\bm p}(t)=({\bm\rho}(t),{\bm\chi}(t))\,$ is the asymptotic rate 
of growth in time of the logarithm of the length 
$\,\sqrt{{\bm\rho}^2+{\bm\chi}^2}$.
\,Suppose that the process starts at $\,t=0\,$ from 
$\,{\bm p}_0=({\bm\rho}_0,{\bm\chi}_0)\not=0$. \,Anticipating the
existence of the limit below, we shall define:
\qq
\lambda&=&\lim\limits_{T\to\infty}\ \frac{_1}{^T}\,
\Big\langle\big(\ln{\sqrt{{\bm\rho}^2(T)+{\bm\chi}^2(T)}}
\,-\,\ln{\sqrt{{\bm\rho}^2_0+{\bm\chi}^2_0)}}\big)\Big\rangle\cr
&=&\lim\limits_{T\to\infty}\ \frac{_1}{^T}\int\limits_0^T
\frac{d}{dt}\Big\langle
\ln{\sqrt{{\bm\rho}^2(t)+{\bm\chi}^2(t)}}\Big\rangle\,dt\cr
&=&\lim\limits_{T\to\infty}\ \frac{_1}{^T}\int\limits_0^T\Big\langle
L\,\ln{\sqrt{{\bm\rho}^2(t)+{\bm\chi}^2(t)}}\Big\rangle\,dt\cr
&=&\lim\limits_{T\to\infty}\ \frac{_1}{^T}\int\limits_0^T 
\Big(\int(L\,\ln{\sqrt{{\bm\rho}^2+{\bm\chi}^2}})\,\,
 P_t({\bm p}_0,d{\bm p})\Big)\,dt\,,
\qqq
where $\,L\,$ is the generator of the process $\,{\bm p}(t)\,$ 
given by Eq.\,(\ref{genL}).
\,Note that the function $\,L\,\ln{\sqrt{{\bm\rho}^2+{\bm\chi^2}}}\,$ 
is smooth on $\,\mathbb{R}^{2d}\setminus\{0\}\,$ and homogeneous
of degree zero. \,It may be viewed as a function $\,f_0({\bm\pi})\,$
on $\,S^{2d-1}\,$ that, besides, is $\,SO(d)$-invariant. \,We may 
then rewrite the definition of $\,\lambda\,$
as
\qq
\lambda\ =\ \lim\limits_{T\to\infty}\ \frac{_1}{^T}\int\limits_0^T 
\Big(\int f_0({\bm\pi})\, P_t({\bm\pi}_0;d{\bm\pi})\Big)\,dt\,.
\label{explam}
\qqq
Now, the existence of the limit follows from the fact that the
Cesaro means of the transition probabilities $\,P_t({\bm\pi}_0;d{\bm\pi})\,$
tend in weak topology to the unique invariant probability measure 
$\,\mu(d{\bm\pi})$.
\,Hence 
\qq
\lambda\ =\ \int f_0({\bm\pi})\,\mu(d{\bm\pi})
\label{lambdaf0}
\qqq
and is independent of $\,{\bm p}_0$.
\,The crucial input that allows to make the latter formula more
explicit is the formula
\qq
L\,\ln{\sqrt{{\bm\rho}^2}}\ =\ \frac{1}{\tau}\,\frac{{\bm\rho}
\cdot{\bm\chi}}{{\bm\rho}^2}\ =\ \frac{x}{\tau}\,.
\qqq
It implies that
\qq
f_0({\bm\pi})\ =\ L\,\ln{\sqrt{{\bm\rho}^2+{\bm\chi}^2}}
&=&\frac{x}{\tau}\, 
+\,\frac{1}{2}\,L\,\ln{\Big(1+\frac{{\bm\chi}^2}{{\bm\rho}^2}\Big)}\cr\cr
&=&\frac{x}{\tau}\,+\,\frac{1}{2}\,L
\begin{cases}\,\hbox to 3.5cm{$\ln{(1+x^2)}$\hfill}{\rm for}
\quad d=1\cr
\,\hbox to 3.5cm{$\ln{(1+x^2+y^2)}$\hfill}{\rm for}\quad d\geq2
\end{cases}
\label{f0}
\qqq
in terms of the $\,SO(d)\,$ invariants with $\,L\,$ given by
explicit formulae (\ref{Lonf1}), (\ref{Lonf2}) or (\ref{Lonf3}). 
\,If the functions $\,\ln(1+x^2)\,$ in $\,d=1\,$ and 
$\,\ln{(1+x^2+y^2)}\,$ in $\,d=2\,$ that are homogeneous of 
degree zero on $\,\mathbb{R}^{2d}\setminus\{0\}$ \,were smooth, 
then their contributions 
to the expectation with respect to the invariant measure on the 
right hand side of (\ref{explam}) would drop out by the integration by parts. 
The problem is, however, the lack of smoothness of those functions 
at $\,{\bm\rho}={0}$ and a more subtle
argument is required. 

\subsection{$d=1\,$  case}

\noindent In one dimension,  Eq.\,(\ref{lambdaf0}) reduces to
the identity
\qq
\lambda\ =\ \int\limits_{-\infty}^\infty
\Big(\frac{x}{\tau}\,+\,\frac{1}{2}L\ln{(1+x^2)}\Big)
\eta(x)\,dx
\qqq
with $\,\eta(x)\,$ given by Eq.\,(\ref{mu1}). \,Since the latter integral
represents the integration of a smooth function against a smooth measure
on $\,S^1$, \,it converges absolutely. Consequently, the formula for 
$\,\lambda\,$ may be rewritten in the form:  
\qq
\lambda\ =\ \lim\limits_{n\to\infty}\int\limits_{-n}^n
\Big(\frac{x}{\tau}\,+\,\frac{1}{2}L\ln{(1+x^2)}\Big)\eta(x)\,dx\,.
\qqq
Now the integration by parts and the formula $\,L^\dagger\eta=0\,$
for the formal adjoint $\,L^\dagger\,$ defined with respect to
the Lebesgue measure $\,dx\,$ show that the term with 
$\,L\ln{\sqrt{1+x^2}}\,$ drops out (for the cancellation
of the boundary terms it is crucial that the integral 
is over a symmetric finite interval $\,[-n,n]$). \,We obtain this 
way the identity
\qq
\lambda\ =\ \frac{1}{\tau}\,\lim\limits_{n\to\infty}\int\limits_{-n}^nx\,\eta(x)\,dx\ 
\equiv
\ \frac{1}{\tau}\ p.v.\int\limits_{-\infty}^\infty x\,\eta(x)\,dx\,
\qqq
where ``p.v.'' stands for  ``principal value''.
\,The result may be expressed \cite{LifGredPas} by the Airy functions 
\cite{GradRyz}:
\qq
\lambda\,=\,-\frac{1}{2\tau}+\frac{1}{4\tau c^{\frac{1}{2}}}
\,\frac{d}{dc}\,\ln\big(
{\rm Ai}^2(c)+{\rm Bi}^2(c)\big)\quad\ \,{\rm for}\quad\, 
c\,=\,\frac{1}{(4\tau(A+2B))^\frac{2}{3}}\,.\ 
\qqq
The number $\,\lambda+\frac{1}{2\tau}\,$ is the Lyapunov exponent for
the one-dimensional Anderson problem (\ref{local1d}), \,recall  
relation (\ref{psirho}). \,It is always positive 
reflecting the permanent localization in one dimension. On the other 
hand, $\,\lambda\,$ 
itself changes sign as a function of $\,\tau\,$ and $\,A+2B\,$
signaling a phase transition in the one-dimensional advection 
of inertial particles \cite{WM1}. 

\subsection{$d\geq2\,$ case}

In two or more dimensions, \,Eq.\,(\ref{lambdaf0}) becomes
\qq
\lambda\ =\ \int
\Big(\frac{x}{\tau}\,+\,\frac{1}{2}L\ln{(1+x^2+y^2)}\Big)\eta(x,y)\,dx\,dy\,,
\qqq
where $\,\eta(x,y)\,$ is the density of the invariant measure
from Eqs.\,(\ref{2dinvm}) or (\ref{3dinvm}). \,The asymptotic 
behavior of $\,\eta(x,y)\,$ was established in Sec.\,\ref{subsec:ergod2}
and Sec.\,\ref{subsec:ergod3}. We show in Appendix \ref{app:B} that
it guarantees that the term with $\,L\,\ln{(1+x^2+y^2)}\,$ may, indeed, 
be dropped from the expectation on the right hand side of 
Eq.\,(\ref{explam}) so that
\qq
\lambda\ =\ \frac{1}{\tau}\int x\,\eta(x,y)\,dx\,dy\,,
\label{lyap23}
\qqq
where the integral converges absolutely as follows from the
estimates (\ref{bound2}) and (\ref{bound3}). 
\,In general, there is no closed
analytic expressions for the right hand side, unlike in the one-dimensional
case. The results of numerical simulations for $\,\lambda$, \,indicating
its qualitative dependence on the parameters of the model, \,together 
with analytic arguments about its behavior when $\,A\tau\to\infty\,$ with 
$\,A/B=const.\,$ or when $\,\tau\to 0\,$ with $\,A/\tau=const.\,$ 
and $\,B/\tau=const.\,$ may be found in \cite{MW3,MWDWL,Horvai,BCH1,BCHT}.

\nsection{\bf CONCLUSIONS}

\noindent We have studied rigorously a simple stochastic differential
equation (SDE) used to model the pair dispersion of close inertial 
particles moving in a moderately turbulent flow 
\cite{Piterbarg,WM1,MW3,MWDWL,Horvai,BCH1}. 
We have established the smoothness of the transition probabilities
and the irreducibility of the dispersion process using 
H\"ormander criteria for hypoellipticity and control theory.
For the projectivized version of the dispersion process, these results 
implied the existence of the unique invariant probability measure 
with smooth positive density as well as exponential mixing. The latter 
properties permitted to substantiate the formulae for the top Lyapunov 
exponents for the inertial particles used in the physical literature. 
In two space dimensions, we also showed that the complex-projectivized
version of the dispersion process is non-explosive when described in the
inhomogeneous variable of the complex projective space, unlike the 
real-projective version of the dispersion in one space dimension. 
A similar result was established in $\,d\geq3\,$ for the complex-valued process
built from the $\,SO(d)\,$ invariants of the projectivized dispersion
that was shown to stay for all times in the upper half-plane.
These non-explosive behaviors are the reason why the numerical simulation
of the processes in the complex (half-)plane could lead to reliable numerical  
results \cite{BCH1}. There are other questions about the models studied 
here that may 
be amenable to rigorous analysis. Let us list some of them: What about 
the expressions for the other $\,2d-1\,$ Lyapunov exponents 
(the $\,2d\,$ exponents have to sum to $\,-d/\tau$ \cite{FouHorv2})? 
Can one establish the existence of the large deviation regime for 
the finite-time Lyapunov exponents (the corresponding rate function 
for the top exponent was numerically studied in $\,d=2\,$ model 
in \cite{BCH1,BCHT}; it gives access to more subtle information about 
the clustering of inertial particles than the top Lyapunov exponent 
itself)? Is the SDE modeling the inertial particle dispersion in fully 
developed turbulence, that was introduced and studied numerically in 
\cite{BCH2,BCHT}, amenable to rigorous analysis? All those open problems 
are left for a future study.

\appendix

\newcommand{\appsection}[1]{\let\oldthesection\thesection
  \renewcommand{\thesection}{
{\bf\oldthesection}}
  \section{#1}\let\thesection\oldthesection\setcounter{equation}{0}}

\ 

\appsection{}
\label{app:A}

\noindent We establish here the expressions (\ref{dmu0}) and (\ref{eta0})
for the $\,SO(2d)\,$-invariant normalized volume measure on $\,S^{2d-1}\,$
for $\,d\geq 3\,$ (the same proof works also in $\,d=2$, \,although
there, the corresponding formulae are straightforward and well known). 
\,Note that for $\ S^{2d-1}\,$ identified with the set of 
$\,({\bm\rho},{\bm\chi})\in\mathbb{R}^{2d}\,$ such that
$\,{\bm\rho}^2+{\bm\chi}^2=R^2\,$ for fixed $\,R$, \,we may write
\qq
d\mu_0\,=\,const.\ \delta(R-\sqrt{{\bm\rho}^2+{\bm\chi}^2})
\,d{\bm\rho}\,d{\bm\chi}\,,
\label{mu0h}
\qqq
provided that we identify functions on $\,S^{2d-1}\,$ with homogeneous function 
of degree zero on $\,\mathbb{R}^{2d}\setminus\{0\}$.
\,Let us parametrize:
\qq
{\bm\rho}=O(\rho,0,\dots,0),\qquad {\bm\chi}=O(\rho x,\rho y,0,\dots,0)
\label{param}
\qqq
where $\,\rho=|{\bm\rho}|$, $\,\chi=|{\bm\chi}|$, $\,x\,$ and $\,y\,$
are the $\,SO(d)$-invariants of Eq.\,(\ref{invs3}), and 
$\,O\in SO(d)$. \ Note that $\,O^{-1}\,$ is the rotation that
aligns $\,{\bm\rho}\,$ with first positive half-axis of $\,\mathbb{R}^d\,$
and brings $\,{\bm\chi\,}$ into the 
half-plane spanned by the first axis and the second positive half-axis,
as required in Sec.\,\ref{subsec:ergod3}. $\,O\,$ and $\,OO'$, \,for 
$\,O'\,$ rotating in the subspace orthogonal to the first two axes, 
\,give the same $\,({\bm\rho},{\bm\chi})$. Let 
$\,\Lambda\,$ be a $\,d\times\,d$ antisymmetric matrix, $\,\Lambda_{ij}
=-\Lambda_{ji}$. \,Setting $\,O=\ee^\Lambda\,$ and differentiating
Eqs.\,(\ref{param}) at $\,\Lambda=0$, \,we obtain:
\qq
d{\bm\rho}&=&(d\rho,\rho\hspace{0.03cm}d\Lambda_{21},\dots,\rho
\hspace{0.03cm}d\Lambda_{d1})\,,\\ \cr
d{\bm\chi}&=&(-\rho y\hspace{0.03cm}d\Lambda_{21},\rho x\hspace{0.03cm}
d\Lambda_{21},\rho x\hspace{0.03cm}d\Lambda_{31}+
\rho y\hspace{0.03cm}d\Lambda_{32},\dots,\rho x\hspace{0.03cm}d\Lambda_{d1}
+\rho y\hspace{0.03cm}d\Lambda_{d2})\cr\cr
&+&(x\hspace{0.03cm}d\rho+\rho\hspace{0.03cm}dx,y\hspace{0.03cm}d\rho
+\rho\hspace{0.03cm}dy,0,\dots,0)\,. 
\qqq
Hence for the volume element, 
\qq
d{\bm\rho}\,d{\bm\chi}\ =\ \rho^{2d-1}y^{d-2}d\rho\,dx\,dy\, d\Lambda_{21}\cdots
d\Lambda_{d1}d\Lambda_{32}\cdots d\Lambda_{d2}\,.
\qqq 
The product of $\,d\Lambda_{ij}\,$ gives, modulo normalization, 
the $\,SO(d)$-invariant volume element $\,d[O]\,$ of the homogeneous 
space $\,SO(d)/SO(d-2)\,$ at point $[1]$.
\,Using the $\,SO(d)$-invariance, we infer that
\qq
d{\bm\rho}\,d{\bm\chi}\ =\ const.\ \rho^{2d-1}y^{d-2}d\rho\,dx\,dy\,d[O]\,.       
\qqq
Substituting the last expression to Eq.\,(\ref{mu0h}) and 
performing the integral
\qq
&\displaystyle{\int\delta(R-\sqrt{{\bm\rho}^2+{\bm\chi}^2})\,
\rho^{2d-1}d\rho\,=\, 
\int\delta(R-\rho\sqrt{1+x^2+y^2})\,\rho^{2d-1}d\rho}&\cr\cr
&\displaystyle{=\,\frac{R^{2d-1}}
{(1+x^2+y^2)^d}}& 
\qqq
that collects the entire $\,\rho$-dependence in the integration against
homogeneous function of zero degree,\, we obtain Eq.\,(\ref{eta0}),
modulo a constant factor that is fixed by normalizing
of the resulting measure.

\ 
 
\appsection{} 
\label{app:B}

\noindent We show here that
\qq
\int (Lg)(x,y)\,\eta(x,y)\,dx\,dy\ =\ 0
\label{tbpr}
\qqq
in two or more dimensions, \,where $\,g(x,y)=\ln(1+x^2+y^2)\,$ and
$\,\eta(x,y)\,$ is the density of the invariant measure as defined 
by Eqs.\,(\ref{2dinvm}) and (\ref{3dinvm}). As mentioned in 
Sec.\,\ref{sec:ergod}, the identity (\ref{tbpr})
does not follow immediately by integration by parts since
the function $\,g(x,y),$ is not smooth on $\,S^{2d-1}$. 
\,We shall then replace $\,g(x,y)\,$ by the functions
\qq
g_\epsilon(x,y)\ =\ \ln\Big(1+\frac{x^2+y^2}{1+\epsilon(x^2+y^2)}\Big)\ 
=\ \ln\Big(\frac{{\bm\rho}^2+{\bm\chi}^2}{{\bm\rho}^2
+\epsilon{\bm\chi}^2}\Big)
\qqq
that are smooth on $\,S^{2d-1}\,$ for $\,\epsilon>0$.
\,The identity (\ref{tbpr}) will follow if we show that
\qq
\int(Lg)(x,y)\,\eta(x,y)\,dx\,dy\ =\ \lim\limits_{\epsilon\searrow0}\ 
\int (Lg_\epsilon)(x,y)\,\eta(x,y)\,dx\,dy\,.
\label{wills}
\qqq
Note that 
\qq
\partial_x g_\epsilon(x,y)&=&\frac{2x}{(1+(1+\epsilon)(x^2+y^2))
(1+\epsilon(x^2+y^2))}\,,\cr
\partial_y g_\epsilon(x,y)&=&\frac{2y}{(1+(1+\epsilon)(x^2+y^2))
(1+\epsilon(x^2+y^2))}\,,\cr
\partial_x^2 g_\epsilon(x,y)&=&2\frac{1+(1+2\epsilon)(y^2-x^2)
+\epsilon(1+\epsilon)(y^2-3x^2)}{(1+(1+\epsilon)(x^2+y^2))^2
(1+\epsilon(x^2+y^2))^2}\,,\cr
\partial_y^2 g_\epsilon(x,y)&=&2\frac{1+(1+2\epsilon)(x^2-y^2)
+\epsilon(1+\epsilon)(x^2-3y^2)}{(1+(1+\epsilon)(x^2+y^2))^2
(1+\epsilon(x^2+y^2))^2}
\qqq
so that
\qq
&&\hbox to 6cm{$\displaystyle{|\partial_x g_\epsilon(x,y)|\ \leq\ 
\frac{2|x|}{1+x^2+y^2}\,,}$\hfill}
|\partial_y g_\epsilon(x,y)|\ \leq\ 
\frac{2|y|}{1+x^2+y^2}\,,\cr
&&\hbox to 6cm{$\displaystyle{|\partial^2_x g_\epsilon(x,y)|\ \leq\ 10\,,}$
\hfill}|\partial^2_y g_\epsilon(x,y)|\ \leq\ 10\,.
\qqq
Using the explicit forms (\ref{Lonf2}) and (\ref{Lonf3}) of the generator
$\,L$, \,we infer that 
\qq
|(Lg_\epsilon)(x,y)|\ \leq\ C(1+|x|)
\qqq
with an $\epsilon$-independent constant $\,C$. \,Since the integral
\qq
\int(1+|x|)\,\eta(x,y)\,dx\,dy
\qqq
converges due to the estimates (\ref{bound2}) and (\ref{bound3}), 
and point-wise
\qq
\lim\limits_{\epsilon\searrow0}\ (Lg_\epsilon)(x,y)\ =\ (Lg)(x,y)\,,
\qqq
relation (\ref{wills}) follows from the Dominated Convergence
Theorem.

\

\end{document}